\documentclass[11pt]{article}

\usepackage{amsthm,amsmath,amssymb,bbm,latexsym, amsfonts}
\usepackage{natbib}
\usepackage{multirow}
\usepackage[pdftex]{graphicx}
\usepackage{subfigure}
\usepackage{makecell}
\usepackage{booktabs}
\usepackage{mathtools}
\DeclarePairedDelimiter\ceil{\lceil}{\rceil}

\usepackage{array}
\usepackage{url}
\usepackage{algorithm}
\usepackage{algorithmic}
\usepackage{bm}
\usepackage{wrapfig}
\usepackage{lipsum}
\usepackage{mathrsfs}
\usepackage{titling}
\usepackage{algorithm}

\usepackage[usenames,dvipsnames,svgnames,table]{xcolor}
\usepackage[colorlinks,
linkcolor=red,
anchorcolor=blue,
citecolor=blue
]{hyperref}

\newcommand{\sgn}{\mathop{\mathrm{sign}}}

\usepackage{smile}

\newcommand*{\supp}{\mathrm{supp}}
\newcommand*{\var}{\textnormal{var}}

\newcommand{\e}{\mathbb{E}}
\newcommand{\nn}{\nonumber}


\newcommand \hbP{\widehat{\bTheta}}
\newcommand{\extend}[2]{\mathbf{E}^{#1}_{#2}}


\def\##1\#{\begin{align}#1\end{align}}
\def\$#1\${\begin{align*}#1\end{align*}}

\newcommand {\vecc}{\textnormal {vec}}
\def\T{\intercal} 
\def\sn{\sum_{i=1}^n}
\def\Sb{\mathbf{S}}

\newcommand{\HH}{\cH}
\newcommand{\TT}{\cT}
\newcommand{\F}{\mathrm{F}}
\newcommand{\cov}{\mathrm{cov}}
\newcommand{\FDP}{{\rm FDP}}

\newcommand{\sam}{{\rm sam}}
\newcommand{\ol}{\overline}
\newcommand{\SUM}{\sum_{i=1}^N}
 

\newcommand{\wt}{\widetilde}


\newcommand{\Rom}[1]{\text{\uppercase\expandafter{\romannumeral #1\relax}}}

\usepackage{geometry}
 \geometry{
 a4paper,
 left=35mm,
 top=30mm,
 }
\textwidth=5.5in

\newcommand{\comment}[1]{\textcolor{red}{#1}}


\usepackage{enumitem}

\begin{document}

\title{ User-Friendly Covariance Estimation   \\ for  Heavy-Tailed Distributions}

\author{Yuan Ke\thanks{Department of Statistics, University of Georgia, Athens, GA 30602, USA. E-mail: \href{mailto:yuan.ke@uga.edu}{\textsf{yuan.ke@uga.edu}}.},~~Stanislav Minsker\thanks{Department of Mathematics, University of Southern California, Los Angeles, CA 90089, USA. E-mail: \href{mailto:minsker@usc.edu}{\textsf{minsker@usc.edu}}. Supported in part by NSF Grant DMS-1712956.},~~Zhao Ren\thanks{Department of Statistics, University of Pittsburgh, Pittsburgh, PA 15260, USA. E-mail: \href{mailto:zren@pitt.edu}{\textsf{zren@pitt.edu}}. Supported in part by NSF Grant DMS-1812030.},~~Qiang Sun\thanks{Department of Statistical Sciences, University of Toronto, Toronto, ON M5S 3G3, Canada. E-mail: \href{mailto:qsun@utstat.toronto.edu}{\textsf{qsun@utstat.toronto.edu}}. Supported in part by a Connaught Award and NSERC Grant RGPIN-2018-06484.}~~and~Wen-Xin Zhou\thanks{Department of Mathematics, University of California, San Diego, La Jolla, CA 92093, USA. E-mail:  \href{mailto:wez243@ucsd.edu}{\textsf{wez243@ucsd.edu}}. Supported in part by NSF Grant DMS-1811376.} }

\date{}
\maketitle

\vspace{-0.5in}
 
\begin{abstract}
We offer a survey of recent results on covariance estimation for heavy-tailed distributions. By unifying ideas scattered in the literature, we propose user-friendly methods that facilitate practical implementation.
Specifically, we introduce element-wise and spectrum-wise truncation operators, as well as their $M$-estimator counterparts, to robustify the sample covariance matrix. Different from the classical notion of robustness that is characterized by the breakdown property, we focus on the tail robustness which is evidenced by the connection between nonasymptotic deviation and confidence level. 
The key observation is that the estimators needs to  adapt to the sample size, dimensionality of the data and the noise level to achieve optimal tradeoff between bias and robustness. Furthermore, to facilitate their practical use, we propose data-driven procedures that automatically calibrate the tuning parameters.  We demonstrate their applications to a series of structured models in high dimensions, including the bandable and low-rank covariance matrices and sparse precision matrices. Numerical studies lend strong support to the proposed methods.
\end{abstract}

\noindent
{\bf Keywords}: Covariance estimation, heavy-tailed data, $M$-estimation, nonasymptotics, tail robustness, truncation.

\section{Introduction}
\label{sec:1}

Covariance matrices are important in multivariate statistics. The estimation of covariance matrices, therefore, serves as a building block for many important statistical methods, including   principal component analysis,   discriminant analysis,  clustering analysis and regression analysis, among many others.
Recently, the problem of estimation of structured large covariance matrices, such as bandable, sparse and low-rank matrices, has attracted ever-growing attention in statistics and machine learning \citep{BL2008a, BL2008b, CRZ2016, FLL2016}. This problem has many applications, ranging from functional magnetic resonance imaging (fMRI), analysis of gene expression arrays to risk management and portfolio allocation. 

Theoretical properties of large covariance estimators discussed in the literature often hinge heavily on the Gaussian or sub-Gaussian\footnote{A random variable $Z$ is said to have sub-Gaussian tails if there exists constants $c_1$ and $c_2$  such that $\PP(|Z-\mathbb{E}Z|>t)\leq c_1\exp(-c_2t^2)$ for all $t \geq 0$.} assumptions \citep{V2012}. See, for example, Theorem~1 of  \cite{BL2008a}.  Such an assumption is typically very restrictive in practice. For example, a recent  fMRI study by \cite{Eklund2016} reported  that most of the common software packages, such as SPM  and FSL,  for fMRI analysis can result in inflated false-positive rates up to 70\% under 5\% nominal levels, and questioned a number of fMRI studies among approximately 40,000 studies according to PubMed. Their results suggested that 
\begin{quote}
{\it The principal cause of the invalid cluster inferences is spatial autocorrelation functions that do not follow the assumed Gaussian shape.}
\end{quote}
\cite{Eklund2016} plotted  the empirical  versus theoretical spatial autocorrelation functions 
for several datasets. The empirical autocorrelation functions have much heavier tails compared to their theoretical counterparts under the commonly used assumption of a Gaussian random field, which causes the failure of fMRI inferences \citep{Eklund2016}.
Similar phenomenon has also been discovered in genomic studies \citep{LHGY2003, PH2005} and in quantitative finance \citep{C2001}.  It is therefore imperative to develop robust inferential procedures that are less sensitive to the distributional assumptions.

We are interested in constructing estimators that admit tight nonasymptotic deviation guarantees under weak moment assumptions.  Heavy-tailed distribution is a viable model for data contaminated by outliers that are typically encountered in applications. Due to heavy tailedness, the probability that some observations are sampled far away from the ``true'' parameter of the population is non-negligible. We refer to these outlying data points as stochastic outliers. 
A procedure that is robust against such outliers,  evidenced by its better finite-sample performance than a non-robust method, is called a {\it tail-robust} procedure.
In this paper, by unifying ideas scattered in the literature, we provide a unified framework for constructing user-friendly tail-robust covariance estimators. Specifically, we propose element-wise and spectrum-wise truncation operators, as well as their $M$-estimator counterparts, with adaptively chosen robustification parameters. Theoretically, we establish nonasymptotic deviation bounds and demonstrate that the robustification parameters should adapt to the sample size, dimensionality and noise level for optimal tradeoff between  bias and robustness. 
Our goal is to obtain estimators that are computationally efficient and  easily implementable in practice. To this end, we propose data-driven schemes to calibrate the tuning parameters, making our proposal user-friendly.  Finally, we discuss applications to several structured models in high dimensions, including bandable matrices, low-rank covariance matrices as well as sparse precision matrices. 
In the supplementary material, we further consider  robust covariance estimation and inference under factor models, which might be of independent interest.

 Our definition of robustness is different from the conventional perspective under Huber's $\epsilon$-contamination model \citep{H1964}, where the focus has been on developing robust procedures with a high breakdown point. The breakdown point \citep{hampel1971general}  of an estimator is defined (informally) as  the largest proportion of outliers in the data for which the estimator remains stable. 
Since the seminal work of \cite{tukey1975mathematics}, a number of depth-based robust procedures have been developed; see, for example, the papers by \cite{liu1990notion}, \cite{zuo2000general}, \cite{mizera2002depth} and \cite{ salibian2002bootrapping}, among others. Another line of work focuses on robust and resistant $M$-estimators, including  the least median of squares and least trimmed squares \citep{rousseeuw1984least}, the S-estimator  \citep{rousseeuw1984robust} and the MM-estimator \citep{yohai1987high}.   We refer to \cite{portnoy2000robust} for  a literature review on classical robust statistics, and to \cite{chen2018robust} for recent developments on nonasymptotic analysis under contamination models.

The rest of the paper is organized as follows. We start with a motivating example in Section \ref{sec:2}, which reveals the downsides of the sample covariance matrix. In Section \ref{sec:general}, we introduce two types of generic robust covariance estimators and establish their deviation bounds under different norms of interest. 
The finite-sample performance of the proposed estimators, both element-wise and spectrum-wise, depends on a careful tuning of the robustification parameter, which should adapt to the noise level for bias-robustness tradeoff. 
We also discuss the median-of-means estimator, which is virtually tuning-free at the cost of slightly stronger assumptions.
For practical implementation, in Section \ref{sec:3.5} we propose a data-driven scheme to choose the key tuning parameters. Section \ref{sec:4} presents various applications to estimating structured covariance and precision matrices. Numerical studies are provided in Section \ref{sec:5}. We close this paper with a discussion in Section \ref{sec:6}.

\subsection{Overview of the previous work}
\label{sec.literature}

In the past several decades, there has been a surge of work on robust covariance estimation in the absence of normality. Examples include  the Minimum Covariance Determinant (MCD) estimator, the Minimum Volume Ellipsoid (MVE) estimator, Maronna's \citep{M1976}  and Tyler's \citep{T1987} $M$-estimators of multivariate scatter matrices. We refer to \cite{HRvA2008} for a comprehensive review.
  Asymptotic properties of these methods have been established for the family of elliptically symmetric distributions;  see, for example, \cite{D1992}, \cite{BDJ1993} and \cite{ZCS2016}, among others. However,  the aforementioned estimators either rely on  parametric assumptions, or impose a shape constraint on the sampling distribution. 
Under a general setting where neither of these assumptions are made, robust covariance estimation remains a challenging problem.

The work of \cite{C2012} triggered a growing interest in developing tail-robust estimators,  
which are characterized by tight nonasymptotic deviation analysis, rather than mean squared errors.
The current state-of-the-art methods for covariance estimation with heavy-tailed data include those of \cite{C2016}, \cite{M2016}, \cite{MW2018}, \cite{ABFL2018}, \cite{MZ2018}. From a spectrum-wise perspective, \cite{C2016} constructed a robust estimator of the Gram and covariance matrices of a random vector $\bX \in \RR^d$ via estimating the quadratic forms $\EE \langle \bu , \bX\rangle^2$ uniformly over the unit sphere in $\RR^d$, and proved error bounds under the operator norm. More recently, \cite{MZ2018} proposed a different robust covariance estimator that admits tight deviation bounds under the finite kurtosis condition. Both constructions, however,  involve brute-force search over  every direction in a $d$-dimensional $\varepsilon$-net, and thus are computationally intractable.  From an element-wise perspective, \cite{ABFL2018} combined robust estimates of the first and second moments to obtain variance estimators. In practice, three potential drawbacks
of this approach are: (i) the accumulated error consists of those from estimating the first and second moments, which may be significant; (ii) the diagonal variance estimators are not necessarily positive and therefore additional adjustments are required; and (iii) using the cross-validation to calibrate a total number of $O(d^2)$ tuning parameters is computationally expensive.

Building on the ideas of \cite{M2016} and \cite{ABFL2018},  we  propose  user-friendly tail-robust covariance estimators that enjoy desirable finite-sample deviation bounds under weak moment conditions. The constructed estimators only involve simple truncation techniques and are  computationally friendly. 
Through a novel data-driven tuning scheme, we are able to efficiently compute these robust estimators for large-scale problems in practice. These two points distinguish our work from the literature on the topic.  
The proposed robust procedures  serve as building blocks for estimating large structured covariance and precision matrices, and we illustrate their broad applicability in a series of problems.

\subsection{Notation}
\label{sec.notation}

We  adopt the following notation throughout the paper. 
For any $0\leq r ,s\leq \infty $ and a $d\times d$ matrix $\Ab= (A_{k\ell})_{1\leq k,\ell\leq d}$, we define the max norm $\| \Ab \|_{\max} = \max_{1\leq k,\ell\leq d} |A_{k \ell}|$, the Frobenius norm $\|\Ab \|_{\text F}=(\sum_{1\leq k,\ell\leq d}A_{k\ell}^2)^{1/2}$ and the operator norm 
\[
\| \Ab \|_{r,s} = \sup_{ \bu = (u_1 ,\ldots, u_d)^\T :  \| \bu \|_r = 1} \| \Ab \bu \|_s ,
\] 
where $ \| \bu \|_r^r = \sum_{k=1}^d |u_k |^r$ for $r\in (0,\infty)$, $\| \bu \|_0=\sum_{k=1}^d I(|u_k| \neq 0)$ and $\| \bu \|_{\infty} =\max_{1\leq k\leq d} |u_k|$.
In particular, it holds $ \| \Ab \|_{1,1} =  \max_{1\leq \ell \leq d} \sum_{k=1}^d |A_{k\ell}|$ and $\| \Ab \|_{\infty , \infty} =  \max_{1\leq k \leq d} \sum_{\ell=1}^d |A_{k\ell}|$. Moreover, we write $\| \Ab \|_2 := \| \Ab \|_{2,2}$ for the spectral norm and use ${\rm r}(\Ab) ={\rm tr}(\Ab) /\| \Ab \|_2 $ to denote the effective rank of a nonnegative definite matrix $\Ab$, where ${\rm tr}(\Ab)= \sum_{k=1}^d A_{kk}$ is the trace of $\Ab$.
When $\Ab$ is symmetric, it is well known that $\| \Ab \|_2 = \max_{1\leq k\leq d} | \lambda_k(\Ab)|$ where $\lambda_1(\Ab)\geq \lambda_2(\Ab)\ldots\geq\lambda_d(\Ab)$ are the eigenvalues of $\Ab$. 
For any matrix $\Ab \in \RR^{d\times d}$ and an index set $J \subseteq  \{ 1,\ldots, d\}^2$, we use $J^{\rm c}$ to denote the complement of $J$, and $\Ab_J$ to denote the submatrix of $\Ab$ with entries indexed by $J$. For a real-valued random variable $X$, let ${\rm kurt}(X)$ be the kurtosis of $X$, defined as ${\rm kurt}(X) = \EE (X -\mu)^4 / \sigma^4$, where $\mu = \EE X $ and $\sigma^2 = \var(X)$.

\section{Motivating example: a challenge of heavy-tailedness} \label{sec:2}

Suppose that we observe a sample of independent and identically distributed (i.i.d.) copies $\bX_1, \ldots, \bX_n$ of a random vector $\bX=(X_1,\ldots, X_d)^\T \in \RR^d$ with mean  $\bmu$ and covariance matrix $\bSigma = (\sigma_{k\ell})_{1\leq k,\ell \leq d }$.  To assess the difficulty of mean and covariance estimation for heavy-tailed distributions, we first provide a lower bound for the deviation of the empirical mean under the $\ell_\infty$-norm in $\RR^d$.
	
	\begin{proposition} \label{thm:element-wise:lower}
		For any $\sigma>0$ and $0<\delta<(2e)^{-1}$,
		there exists a distribution in $\RR^d$ with mean $\bmu$ and covariance matrix $\sigma^2 \Ib_d$ such that the empirical mean $\bar{\bX}=(1/n)\sum_{i=1}^n \bX_i$ of i.i.d. observations $\bX_1, \ldots, \bX_n$ from  this distribution satisfies, with probability at least $\delta$, 
		\#
		\| \bar{\bX} - \bmu \|_{\infty}   \geq  \sigma  \sqrt{\frac{d}{n \delta}}\bigg(1-\frac{2e\delta}{n}\bigg)^{(n-1)/2}. \label{poly.upper_lower}
		\#
	\end{proposition}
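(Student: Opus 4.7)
The plan is to exhibit a heavy-tailed distribution whose empirical mean deviates from $\bmu$ by the claimed amount on an event of probability at least $\delta$. The natural candidate is a spike distribution: let $\bX=MV\mathbf{e}_J$ with probability $p$ and $\bX=\mathbf{0}$ with probability $1-p$, where $V$ is uniform on $\{-1,+1\}$, $J$ is uniform on $\{1,\ldots,d\}$, and $M>0$ and $p\in(0,1/n]$ are parameters to be chosen. Routine computation gives $\EE\bX=\mathbf{0}$ and $\EE\bX\bX^\T=(pM^2/d)\Ib_d$, so picking $M=\sigma\sqrt{d/p}$ produces the prescribed covariance $\sigma^2\Ib_d$.

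Next I would identify a simple event on which $\|\bar{\bX}\|_\infty$ is easy to lower bound. Let $\mathcal{E}$ be the event that exactly one of $\bX_1,\ldots,\bX_n$ is nonzero; on $\mathcal{E}$ the sample mean equals the single nonzero $\bX_{i_0}$ divided by $n$, so $\|\bar{\bX}\|_\infty=M/n=(\sigma/n)\sqrt{d/p}$, and a binomial count gives $\PP(\mathcal{E})=np(1-p)^{n-1}$.

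To couple the two sides, I would set $p$ to be the smaller root $p_*$ of $np(1-p)^{n-1}=\delta$, which exists in $(0,1/n]$ because $\max_p p(1-p)^{n-1}\geq 1/(en)>\delta/n$ under $\delta<1/(2e)$. Writing the defining equation as $p_*=(\delta/n)(1-p_*)^{-(n-1)}$, I would bound $p_*$ iteratively: first observe the crude estimate $p_*\leq 2e\delta/n$, which by monotonicity of $f(p)=p(1-p)^{n-1}$ on $[0,1/n]$ reduces to the single inequality $(1-2e\delta/n)^{n-1}\geq 1/(2e)$; substituting this back into the identity then gives $p_*\leq(\delta/n)(1-2e\delta/n)^{-(n-1)}$, so $M/n\geq\sigma\sqrt{d/(n\delta)}\,(1-2e\delta/n)^{(n-1)/2}$, which is precisely the claimed deviation.

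The main obstacle is thus the one-line analytic inequality $(1-2e\delta/n)^{n-1}\geq 1/(2e)$ under $\delta<1/(2e)$. My approach would be to prove the stronger auxiliary estimate $(1-x/n)^{n-1}\geq e^{-x}$ for all $x\in[0,1]$ and $n\geq 1$, which is a direct consequence of $\ln(1-y)\geq -y/(1-y)$ applied at $y=x/n$ combined with the constraint $x\leq 1$, and then, since $x=2e\delta<1$, conclude $e^{-2e\delta}\geq e^{-1}>1/(2e)$. Once this elementary estimate is in hand, the remainder is just the routine spike construction, binomial count, and back-substitution sketched above; the only mild subtlety is taming the implicit definition of $p_*$, which the one-step iteration closes cleanly.
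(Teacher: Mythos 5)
Your construction, the event (exactly one nonzero observation), and the key analytic inequality $(1-2e\delta/n)^{n-1}\geq 1/(2e)$ all coincide with the paper's argument, so this is essentially the same proof. The only cosmetic difference is the direction of the parameter choice: the paper sets the spike magnitude $\eta$ (equivalently the truncation level) directly to the target bound and checks that the resulting probability is at least $\delta$, whereas you fix the success probability at exactly $\delta$ by choosing $p_*$ implicitly and then lower-bound the resulting deviation $M/n$ via a one-step iteration; both routes reduce to the same elementary inequality, and your version usefully fills in the verification that the paper leaves terse.
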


The above proposition  is a multivariate extension of Proposition~6.2 of \cite{C2012}. It provides a lower bound under the $\ell_\infty$-norm for estimating a mean vector via the empirical mean.  On the other hand, combining the union bound with Chebyshev's inequality, we  obtain  that with probability at least $1-\delta$,
	\#
	\| \bar{\bX} - \bmu \|_{\infty}   \leq    \sigma \sqrt{\frac{d }{n \delta}}.  \nn
	\#
Together, this upper bound and inequality \eqref{poly.upper_lower} show that the worst case deviations of the empirical means grow polynomially in $1/\delta$  under the $\ell_\infty$-norm in the presence of heavy-tailed distributions. As we will see later, a more robust estimator can achieve an exponential-type deviation bound under weak moment conditions.

To demonstrate the practical implications of Proposition \ref{thm:element-wise:lower}, we perform a toy numerical study on covariance matrix estimation. Let $\bX_1, \ldots, \bX_n$ be i.i.d. copies of $\bX= (X_1,\ldots, X_d)\in \RR^d$, where $X_k$'s are independent and have centered Gamma$(3,1)$ distribution so that $\bmu=\bf{0}$ and $\bSigma = 3\Ib_d$. We compare the performance of three methods: the sample covariance matrix, the element-wise truncated covariance matrix and the spectrum-wise  truncated covariance matrix. The latter two are tail-robust covariance estimators that will be introduced in Sections~\ref{sec2.1} and \ref{sec2.2} respectively. We report the estimation errors under the max norm. We take $n=200$ and let $d$ increase from $50$ to $500$ with a step size of $50$. The results are based on $50$ simulations. Figure \ref{Fig_1} displays the average  (line) and the spread (dots) of estimation errors for each method as the dimension increases. We see that the sample covariance estimator has not only the largest average error but also the highest variability in all the settings. This example demonstrates that the sample covariance matrix suffers from poor finite-sample performance when data are heavy-tailed.
\begin{figure}[t]
 \centering
 \includegraphics[width=5 in]{./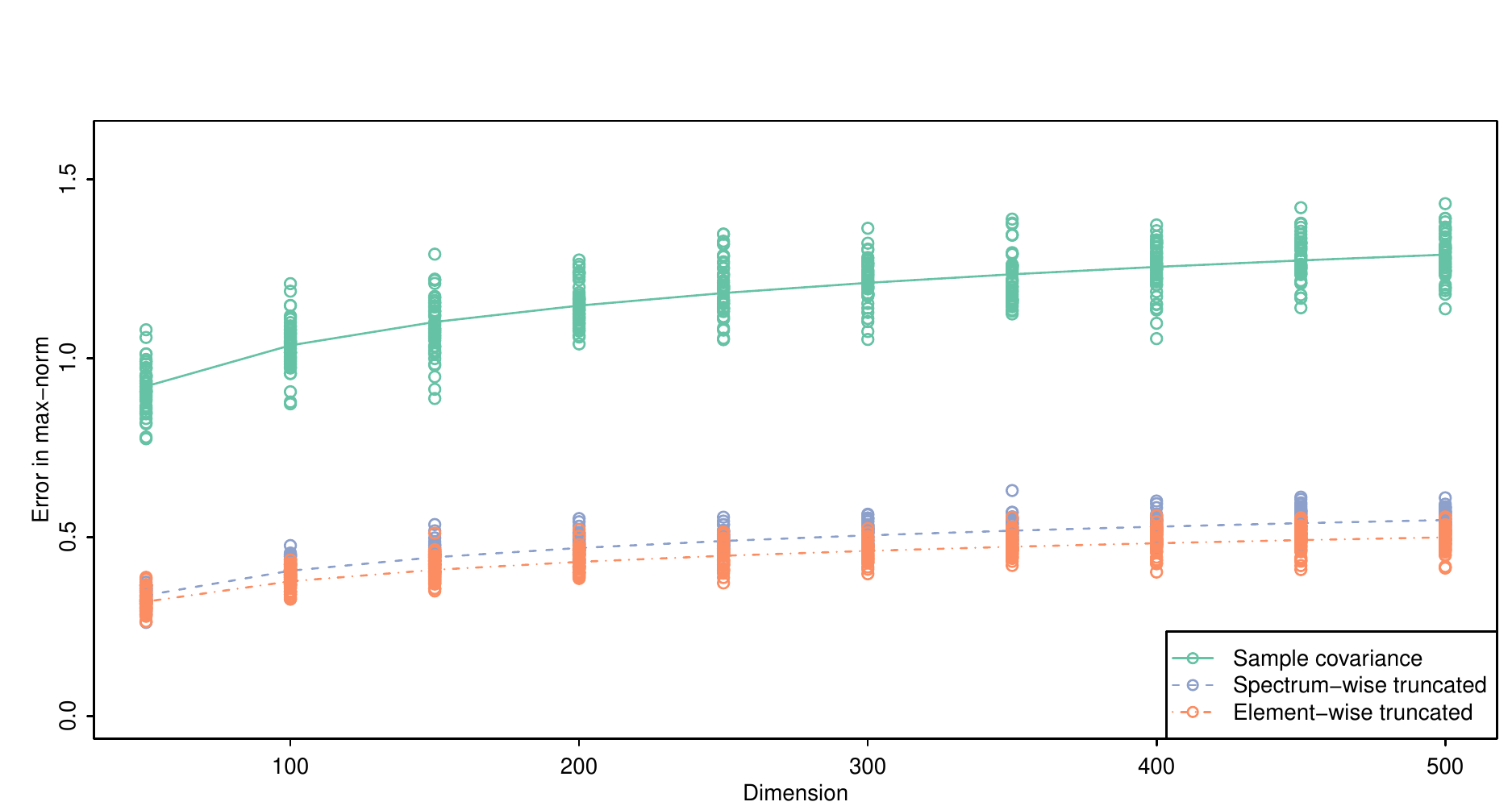}
  \caption{Plots of estimation error under max norm versus dimension.}
 \label{Fig_1}
\end{figure}

\section{Tail-robust covariance estimation}
\label{sec:general}

\subsection{Element-wise truncated estimator}
\label{sec2.1}

We consider the same setting as in the previous section. 
For mean estimation, the suboptimality of deviations of $\bar \bX = (\bar X_1,\ldots, \bar X_d)^\T$ under $\ell_\infty$-norm is due to the fact that the tail probability of $|\bar X_k -  \mu_k |$ decays only polynomially. A simple yet natural idea for improvement is to truncate the data to eliminate outliers introduced by heavy-tailed noises,  so that  each entry of the resulting estimator has sub-Gaussian tails.
To implement this idea, we introduce the following truncation operator, which is closely related to the Huber loss.

\begin{definition}[Truncation operator] \label{truncated.def}
Let $\psi_\tau(\cdot)$ be a truncation operator given by
\#
	\psi_\tau(u) =  ( | u |\wedge \tau  ) \sgn (u), \ \ u\in \RR,\label{psi.def}
\#
where the truncation parameter $\tau >0$ is also referred to as the {\it robustification parameter} that trades off bias for robustness.
\end{definition}

As an illustration, we assume that $\bmu={\bf 0}$ whence $\bSigma= \EE(\bX \bX^\intercal )$.
We  apply the  truncation operator above to each entry of   $\bX_i\bX_i^\T$, and then take the average to obtain
\$
	 \hat{\sigma}^{\TT}_{0,k\ell} = \frac{1}{n} \sn \psi_{\tau_{k\ell}} (X_{ik} X_{i\ell}  ), \ \ 1\leq k,\ell \leq d, 
\$
where $\tau_{k\ell}>0$ are robustification parameters.
When the mean vector $\bmu$ is unspecified, a straightforward approach is to first estimate the mean vector using existing robust methods  \citep{M2015, LM2017}, and then to employ $\hat{\sigma}^{\TT}_{0,k\ell}$ as robust estimates of the second moments. 
Estimating the first and second moments separately will unavoidably introduce additional tuning parameters, which increases both  statistical variability and computational complexity.
In what follows, we propose to use the pairwise difference approach, which is free of mean estimation.  To the best of our knowledge, the difference-based techniques can be traced back to \cite{rice1984bandwidth} and \cite{hall1990asymptotically} in the context of bandwidth selection and variance estimation for nonparametric regression.

 Let $N := n(n-1)/2$ and define the paired data
\#
	\{ \bY_1, \bY_2, \ldots, \bY_N \} = \{ \bX_1 - \bX_2, \bX_1 - \bX_3, \ldots, \bX_{n-1} - \bX_n \}, \label{paired.data}
\#
which are identically distributed from a random vector $\bY$ with mean $ \textbf{0}$ and covariance matrix $\cov(\bY) = 2 \bSigma$.
It is easy to check that the sample covariance matrix,  $\hat{\bSigma}^{\sam}  = (1/n) \sn (\bX_i - \bar{\bX})(\bX_i - \bar{\bX})^\T$ with $\bar{\bX} = (1/n)\sn \bX_i$, can be expressed as a U-statistic 
\$
\hat{\bSigma}^{\sam} =  \frac{1}{N} \SUM \bY_i \bY_i^\intercal/2. 
\$

Following the argument from the last section,  we apply the truncation operator $\psi_\tau$ to $\bY_{i}\bY_i^\T/2$ entry-wise, and then take the average to obtain
\$
	 \hat{\sigma}_{1,k\ell}^{\TT} = \frac{1}{N} \SUM \psi_{\tau_{k\ell}} (Y_{ik} Y_{i\ell} /2 ),  \ \  1\leq k , \ell \leq d. 
\$
Concatenating these estimators, we define the element-wise truncated  covariance matrix estimator via
\#
	\hat{\bSigma}_{1}^{\TT}  = \hat \bSigma_1^{\TT}(\bGamma) = (  \hat{\sigma}^{\TT}_{1,k\ell} )_{1\leq k,\ell \leq d}  , \label{element-wise.censored.est}
\#
where $\bGamma  = (\tau_{k\ell})_{1\leq k,\ell \leq d }$ is a symmetric matrix of parameters.
 $\hat{\bSigma}_{1}^{\TT}$ can be viewed as a truncated version of the  sample covariance matrix $\widehat\bSigma^\sam$. 
{We assume that $n\geq 2, ~d\geq 1$ and define $m= \left \lfloor{  n/2  }\right \rfloor$, the largest integer not exceeding $n/2$.}
Moreover, let $\Vb = (v_{k\ell})_{1\leq k,\ell \leq d} $ be a symmetric $d\times d$ matrix such that 
\[
	v_{k\ell}^2 =  \EE (Y_{1k} Y_{1\ell} /2)^2  =  \EE \{ (X_{1k} - X_{2k})(X_{1\ell} - X_{2\ell}) \}^2/4.
\]

\begin{theorem} \label{thm:element-wise}
For any $0<\delta <1$, the estimator $\hat \bSigma_1^{\TT} = \hat \bSigma^{\TT}_1( \bGamma)$ defined in \eqref{element-wise.censored.est}  with 
\#
	 \bGamma  = \sqrt{{m}/({2\log d + \log \delta^{-1}  })} \,\Vb   \label{tuning1}
\# 
satisfies
\#
	\PP  \Bigg(  \| \hat{\bSigma}^{\TT}_1 - \bSigma \|_{\max}   \geq  2 \|\Vb \|_{\max}  \sqrt{\frac{2\log d + \log \delta^{-1} }{m }   }   \Bigg)  \leq 2 \delta . \label{sigma1.bound}
\#
\end{theorem}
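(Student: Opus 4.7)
The plan is to reduce the $\|\cdot\|_{\max}$ bound to a per-entry concentration statement, then handle each entry by splitting off the deterministic bias from truncation and controlling the remaining stochastic fluctuation via a Bernstein-type inequality applied to the underlying $U$-statistic. Since $\hat{\bSigma}_1^{\TT} - \bSigma$ is symmetric, a union bound over its $d(d+1)/2 \leq d^2$ distinct entries reduces the claim to showing, for each $(k,\ell)$, that
\[
\PP\bigl(|\hat{\sigma}^{\TT}_{1,k\ell} - \sigma_{k\ell}| \geq 2 v_{k\ell}\sqrt{L/m}\bigr) \leq 2\delta/d^2,
\]
with $L := 2\log d + \log\delta^{-1}$, and then use $v_{k\ell} \leq \|\Vb\|_{\max}$ to conclude. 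For a fixed entry I would decompose
\[
\hat{\sigma}^{\TT}_{1,k\ell} - \sigma_{k\ell} = \bigl(\hat{\sigma}^{\TT}_{1,k\ell} - \EE\, \psi_{\tau_{k\ell}}(Y_{1k}Y_{1\ell}/2)\bigr) + \bigl(\EE\, \psi_{\tau_{k\ell}}(Y_{1k}Y_{1\ell}/2) - \sigma_{k\ell}\bigr)
\]
and treat the two pieces separately.

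For the truncation bias, the elementary bound $|\psi_\tau(u) - u| \leq u^2/\tau$ (which holds because $\psi_\tau$ agrees with the identity on $\{|u|\leq\tau\}$, while on $\{|u|>\tau\}$ one has $|\psi_\tau(u)-u| = |u|-\tau \leq u^2/\tau$) gives $|\EE\, \psi_\tau(W) - \EE\, W| \leq \EE\, W^2/\tau$ for any square-integrable $W$. Applied to $W = Y_{1k}Y_{1\ell}/2$, whose mean is $\sigma_{k\ell}$ (since $\EE\, Y_1 = 0$ and $\cov(Y_1) = 2\bSigma$) and whose second moment equals $v_{k\ell}^2$, this bounds the bias by $v_{k\ell}^2/\tau_{k\ell} = v_{k\ell}\sqrt{L/m}$ under the chosen $\bGamma$.

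For the stochastic term, the main subtlety is that $\hat{\sigma}^{\TT}_{1,k\ell}$ is a $U$-statistic of order $2$ in the independent observations $\bX_1,\ldots,\bX_n$, so its $N = n(n-1)/2$ summands are not independent. I would use the classical representation of a $U$-statistic as an average, over all permutations $\pi$ of $\{1,\ldots,n\}$, of averages of i.i.d.\ blocks,
\[
\hat{\sigma}^{\TT}_{1,k\ell} = \frac{1}{n!}\sum_{\pi} S_\pi, \quad S_\pi := \frac{1}{m}\sum_{i=1}^{m}\psi_{\tau_{k\ell}}\bigl((X_{\pi(2i-1),k}-X_{\pi(2i),k})(X_{\pi(2i-1),\ell}-X_{\pi(2i),\ell})/2\bigr),
\]
in which, for each fixed $\pi$, the $m$ pairs are disjoint so $S_\pi$ is an average of $m$ independent bounded terms. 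Convexity of $u\mapsto e^{\lambda u}$ together with Jensen's inequality yields $\EE\, e^{\lambda(\hat{\sigma}^{\TT}_{1,k\ell}-\EE\hat{\sigma}^{\TT}_{1,k\ell})} \leq \EE\, e^{\lambda(S_{\pi_0}-\EE S_{\pi_0})}$ for any fixed $\pi_0$, so a Chernoff--Bernstein argument applied to the $m$-term i.i.d.\ sum $S_{\pi_0}$ (whose summands are bounded by $\tau_{k\ell}$ in absolute value and have variance at most $v_{k\ell}^2$) produces an exponential tail of order $\exp(-cL)$ at $t \asymp v_{k\ell}\sqrt{L/m}$. Combining with the bias bound and summing over entries via the union bound then delivers \eqref{sigma1.bound}.

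The principal obstacle is the dependence among the $N$ paired summands: a direct inequality applied to $Y_1,\ldots,Y_N$ would yield a rate in terms of $N$ or $n$, which would not match the $1/\sqrt{m}$ scaling in \eqref{sigma1.bound}. The permutation averaging identity is precisely what collapses the effective sample size to $m = \lfloor n/2\rfloor$, so that $\hat{\sigma}^{\TT}_{1,k\ell}$ concentrates as if it were an average of $m$ independent truncated terms. The remaining work is constant tracking: calibrating $\tau_{k\ell}$ so that bias and stochastic deviation are balanced and their sum matches the prefactor $2$ in \eqref{sigma1.bound}.
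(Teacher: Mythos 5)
Your reduction to entrywise concentration, the bias--fluctuation split calibrated so that the bias equals $v_{k\ell}\sqrt{(2\log d+\log\delta^{-1})/m}$, and above all the Hoeffding permutation averaging of the $U$-statistic are exactly the right structural moves, and the last one is indeed the device the paper uses to collapse the effective sample size to $m=\lfloor n/2\rfloor$. Where you part ways with the paper is in how the exponential moment is controlled. The paper does \emph{not} split bias from fluctuation: it bounds $\PP(\hat\sigma^{\TT}_{1,k\ell}-\sigma_{k\ell}\ge y)$ in one step by applying, to each truncated summand, the Catoni-type inequality
\[
-\log\bigl(1-x+x^2\bigr)\le\psi_1(x)\le\log\bigl(1+x+x^2\bigr),\qquad x\in\RR,
\]
which gives $\EE\exp\{\tau^{-1}\psi_{\tau}(Z)\}\le \exp\{\sigma_{k\ell}/\tau+v_{k\ell}^2/\tau^2\}$ directly (the bias is absorbed into the linear term), and after Chernoff and optimization over $\tau$ yields the clean tail $\exp\{-my^2/(4v_{k\ell}^2)\}$, hence exactly the prefactor $2$.

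Your route as written does not recover that constant. Two reasons. First, the bias bound $|\psi_\tau(u)-u|\le u^2/\tau$ is loose by a factor of $4$: since $(|u|-2\tau)^2\ge 0$ one in fact has $|\psi_\tau(u)-u|\le u^2/(4\tau)$. With the loose bound, your bias alone already spends the full budget $v_{k\ell}\sqrt{L/m}$, and the stochastic part must then be made free, which is impossible. Second, even if you use the sharp bias bound, a Bernstein-type inequality for variables bounded by $\tau_{k\ell}$ with variance $\le v_{k\ell}^2$ carries the range correction $\tau_{k\ell}t/3$ in the denominator; at the working level $t\asymp v_{k\ell}\sqrt{L/m}$ with $\tau_{k\ell}=v_{k\ell}\sqrt{m/L}$, this term is of the \emph{same order} as the variance term $v_{k\ell}^2$ (not lower order), so the resulting prefactor on the fluctuation is strictly larger than $1$ (numerically $\approx 1.72$--$1.79$ depending on Bennett vs.\ Bernstein), and the total comes out above $2$. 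Your closing sentence asserts the constants "match the prefactor $2$," but that claim is not established by the argument, and it is precisely the role of the Catoni inequality in the paper's proof to make the constant $2$ come out exactly. To repair your version, replace the Bernstein step with the one-shot Catoni exponential-moment bound above (applied to each block of the permutation average); then there is nothing left to track.
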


Theorem~\ref{thm:element-wise} indicates that, with properly calibrated parameter matrix $\bGamma$, the resulting covariance matrix estimator achieves  element-wise tail  robustness against heavy-tailed distributions: provided the fourth moments are bounded, each entry of $\widehat \bSigma_1^{\TT}$ concentrates tightly around its mean so that the maximum error scales as $\sqrt{ \log(d)/n} + \sqrt{\log(\delta^{-1})/n}$. Element-wise, we are able to accurately estimate $\bSigma$ at high confidence levels under the constraint that $\log(d)/n$ is small.
Implicitly, the dimension $d=d(n)$ is regarded as a function of $n$, and we shall use array asymptotics ``$n,d \to \infty$" to characterize large sample behaviors. 
The finite sample performance, on the other hand, is characterized via nonasymptotic probabilistic bounds with explicit dependence on $n$ and $d$.

\begin{remark}   \label{remark1}
It is worth mentioning that the estimator given in \eqref{element-wise.censored.est} and \eqref{tuning1} is not a genuine sub-Gaussian estimator, in a sense that it depends on the confidence level $1-\delta$ at which one aims to control the error. More precisely, following the terminology used by \cite{DLLO2016}, it is called a $\delta$-dependent sub-Gaussian estimator (under the max norm). Estimators of a similar type include those of \cite{C2012}, \cite{M2015}, \cite{BJL2015}, \cite{HS2016}, \cite{M2016} and \cite{ABFL2018}, among  others. For univariate mean estimation, \cite{DLLO2016} proposed multiple-$\delta$ mean estimators that satisfy exponential-type concentration bounds uniformly over $\delta \in [\delta_{\min}, 1)$. The idea is to combine a sequence of $\delta$-dependent estimators in a way very similar to Lepski's  method \citep{Lep1990}.
\end{remark}

 \begin{remark}  \label{remark2}
Since the element-wise truncated estimator is obtained by treating each covariance $\sigma_{k\ell}$ separately as a univariate parameter, the problem is equivalent to estimation of a large vector given by the concatenation of the columns of $\bSigma$.
These type of results are particularly useful for proving upper bounds for sparse covariance and precision estimators  in high dimensions; see Section \ref{sec:4}. Integrated with $\ell_\infty$-type perturbation bounds, it can also be applied to principle component analysis and factor analysis for heavy-tailed data \citep{FSZZ2018}. However, when dealing with large covariance matrices with bandable or low-rank structure, controlling the estimation error under spectral norm is arguably more relevant. A natural idea is then to truncate the spectrum of the sample covariance matrix instead of its entries, which leads to the spectrum-wise truncated estimator defined in the following section.
\end{remark}

\subsection{Spectrum-wise  truncated estimator}
\label{sec2.2}

In this section, we propose and study a covariance estimator that is tail-robust in the spectral norm.
To this end,  we directly apply the truncation operator to matrices in their spectrum domain. 
 We need the following standard definition of a matrix functional.

\begin{definition}[Matrix functional] \label{def.matrix.functional}
Given a real-valued function $f$ defined on $\RR$ and a symmetric $\Ab\in \RR^{K\times K}$ with eigenvalue decomposition $\Ab=\Ub\bLambda \Ub^\T$ such that $\bLambda=\textnormal{diag}(\lambda_1,\ldots, \lambda_K)$, $f(\Ab)$ is defined as $f(\Ab)=\Ub f(\bLambda)\Ub^\T$, where $f(\bLambda)=\textnormal{diag}  (f(\lambda_1), \ldots,f(\lambda_K)  )$.
\end{definition}

Following the same rational as in the previous section, we propose a spectrum-wise truncated  covariance estimator based on the pairwise difference approach:
\#
  \hat\bSigma_{2}^{\TT} = \hat \bSigma^{\TT}_{2 }(\tau)  =\frac{ 1}{N} \SUM \psi_\tau( \bY_i \bY_i^\T/2 ) , \label{U.cov.estimator}
\#
where $\bY_i$ are given in \eqref{paired.data}. Note that $ \bY_i \bY_i^\T/2$ is a rank-one matrix with eigenvalue $\|   \bY_i \|_2^2/2$ and the corresponding eigenvector $ \bY_i/\| \bY_i \|_2$. By Definition~\ref{def.matrix.functional}, $ \widehat\bSigma_2^{\TT}$ can be rewritten as 
\$
& \frac{1}{N} \SUM  \psi_\tau\bigg( \frac{ 1}{2} \|   \bY_i \|_2^2\bigg) \frac{\bY_i \bY_i^\T}{\|   \bY_i \|_2^2 } \\
&=\frac{ 1}{ \binom{n}{2}}\sum_{1\leq  i <  j\leq  n}  \psi_\tau\bigg(\frac{1}{2} \| \bX_i -\bX_j \|_2^2 \bigg) \frac{(\bX_i-\bX_j)(\bX_i-\bX_j)^\T}{ \|  \bX_i-\bX_j  \|_2^2}.
\$
This alternative expression renders the computation almost effortless. The following theorem provides an exponential-type concentration inequality for $\widehat\bSigma_{2}^{\TT}$ under operator norm, which is a useful complement to the Remark 8 of \cite{M2016}. Similarly to Theorem \ref{thm:element-wise}, our next result shows that  
$\widehat\bSigma_2^{\TT}$ achieves exponential-type  concentration in the operator norm for heavy-tailed data with finite operator-wise fourth moment, meaning that
\#
v^2 =  \frac{1}{4}  \|  \EE \{ (\bX_1 - \bX_2 )(\bX_1 - \bX_2)^\intercal \}^2  \|_2  \label{v2.def}
\#
is finite. 
\begin{theorem}\label{thm:u-type}
For any $0< \delta <1$, the estimator $\hat \bSigma_2^{\TT} = \hat \bSigma_{2}^{\TT}(\tau)$ with
\#
	\tau =  v \sqrt{\frac{m}{\log(2d) + \log \delta^{-1}}} \label{tau.cond}
\# 
satisfies, with probability at least $1- \delta$,
\#
 \| \hat \bSigma_2^{\TT} -\bSigma \|_2 \leq  2 v \sqrt{\frac{\log(2d) + \log \delta^{-1}}{m}}   . \label{spectral.concentration}
\#
\end{theorem}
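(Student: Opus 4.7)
The plan is to reduce the U-statistic $\hat\bSigma_2^{\TT}$ to an average of $m=\lfloor n/2\rfloor$ independent random matrices via Hoeffding's averaging trick, apply a matrix Bernstein-type concentration inequality in the operator norm, and calibrate $\tau$ to balance stochastic fluctuation against the deterministic truncation bias. Writing $h(x,y)=\psi_\tau\big((x-y)(x-y)^\T/2\big)$, Hoeffding's representation of a second-order U-statistic gives
$$
\hat\bSigma_2^{\TT} = \frac{1}{n!}\sum_{\pi \in S_n} W_\pi, \qquad W_\pi := \frac{1}{m}\sum_{i=1}^m h\big(X_{\pi(2i-1)},X_{\pi(2i)}\big),
$$
so each $W_\pi$ is an average of $m$ i.i.d.\ copies of $\psi_\tau(\bY\bY^\T/2)$. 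Using convexity of $\lambda_{\max}(\cdot)$ and of $\exp(\cdot)$ together with the exchangeability of the $W_\pi$'s, the Laplace transform of $\lambda_{\max}(\hat\bSigma_2^{\TT}-\EE \hat\bSigma_2^{\TT})$ is dominated by that of $\lambda_{\max}(W_\pi - \EE W_\pi)$ for any single $\pi$, reducing the analysis to a bounded i.i.d.\ matrix sum with effective sample size $m$.

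For the deterministic bias, the rank-one identity $\bY\bY^\T/2 - \psi_\tau(\bY\bY^\T/2) = (\|\bY\|_2^2/2 - \tau)_+ \bY\bY^\T/\|\bY\|_2^2$ combined with $(a-\tau)_+ \le a^2/\tau$ for $a,\tau\ge 0$ and the identity $(\bY\bY^\T/2)^2 = (\|\bY\|_2^2/4)\bY\bY^\T$ yields, uniformly in unit vectors,
$$
\big\|\bSigma - \EE\hat\bSigma_2^{\TT}\big\|_2 \le v^2/\tau.
$$
For the stochastic part, each summand has operator norm at most $\tau$ almost surely, and since $\psi_\tau(M)^2 \preceq M^2$ for any positive semi-definite $M$, the matrix second moment obeys $\|\EE[\psi_\tau(\bY\bY^\T/2)^2]\|_2 \le \|\EE[(\bY\bY^\T/2)^2]\|_2 = v^2$. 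Applying a Bernstein-type bound for bounded i.i.d.\ matrix sums---e.g.\ the Lieb-trace argument used in Remark 8 of \cite{M2016}---to $W_\pi - \EE W_\pi$ and pushing it through the reduction in the previous paragraph gives
$$
\PP\big(\|\hat\bSigma_2^{\TT}-\EE\hat\bSigma_2^{\TT}\|_2 \ge t\big) \le 2d\exp\Big(-\tfrac{m t^2/2}{v^2 + \tau t/3}\Big).
$$
Setting the right-hand side equal to $\delta$ yields a sub-Gaussian-type deviation of order $v\sqrt{(\log(2d)+\log\delta^{-1})/m}$, while the choice \eqref{tau.cond} makes the truncation bias $v^2/\tau$ of exactly that same order. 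Combining the two via the triangle inequality $\|\hat\bSigma_2^{\TT}-\bSigma\|_2 \le \|\hat\bSigma_2^{\TT}-\EE\hat\bSigma_2^{\TT}\|_2 + \|\EE\hat\bSigma_2^{\TT}-\bSigma\|_2$ delivers \eqref{spectral.concentration}.

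The main obstacle is that the $\binom{n}{2}$ summands $\psi_\tau(\bY_i\bY_i^\T/2)$ are statistically dependent through shared observations, so matrix Bernstein cannot be applied to them directly; Hoeffding's representation is what buys back independence, at the (correct) price of the sharper rate $m$ rather than $N$ in the denominator. Sharpening the final constant to the advertised factor of $2$ also requires careful bookkeeping in the Lieb-trace estimate rather than the cruder textbook matrix Bernstein inequality, and the implicit regime $\log(2d/\delta) \le m$ is needed so that the $\tau t/(3m)$ correction term remains dominated by the sub-Gaussian part.
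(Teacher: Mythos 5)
Your framework is right in its opening moves---Hoeffding's averaging of the U-statistic into sums of $m$ independent terms, dominating the Laplace transform via convexity, and a Lieb-trace MGF bound---but the bias-variance split is a genuine detour from the paper's argument, and it cannot produce the constant $2$ in \eqref{spectral.concentration} no matter how much bookkeeping you do.

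To see why, run the numbers with your own decomposition at the prescribed $\tau = v\sqrt{m/\log(2d/\delta)}$. Your bias bound $\|\bSigma-\EE\hat\bSigma_2^\TT\|_2 \le v^2/\tau$ is correct, but it evaluates to exactly $v\sqrt{\log(2d/\delta)/m}$, i.e.\ it consumes half of the total error budget $2v\sqrt{\log(2d/\delta)/m}$ deterministically. The remaining $v\sqrt{\log(2d/\delta)/m}$ must then control the fluctuation with probability $\ge 1-\delta$, and even a purely sub-Gaussian matrix tail bound of the form $\PP(\|\cdot\|_2\ge t)\le 2d\exp(-mt^2/(2v^2))$ evaluated at $t = v\sqrt{\log(2d/\delta)/m}$ gives $2d(2d/\delta)^{-1/2} = \sqrt{2d\delta}$, which is not $\le\delta$; with the Bernstein correction $+\tau t/3$ (which at this $t$ is of the same order as $v^2$, since $\tau t = v^2$) the bound is even worse. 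So the failure is not a matter of sharpening the Lieb-trace estimate on the centered matrices: any split that charges the bias and the fluctuation to separate line items is structurally too lossy here.

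The paper's proof avoids the split entirely. It applies the inequality $\psi_1(x)\le\log(1+x+x^2)$ to the uncentered matrix $\tau^{-1}H$, yielding $\EE e^{\tau^{-1}\psi_\tau(H)} \preceq I + \tau^{-1}\bSigma + \tau^{-2}\EE H^2$ (through the trace/Lieb machinery), and then feeds $m\bSigma^\tau = m\tau^{-1}\bSigma$ directly into the Chernoff exponent, where it exactly cancels the first-order term $\tau^{-1}\bSigma$ coming from the MGF bound. What survives is a single sub-Gaussian proxy $\tau^{-2}\EE H^2 \preceq \tau^{-2}v^2 I$ with no additive bias penalty and no Bernstein linear term, and the whole budget $2v\sqrt{\log(2d/\delta)/m}$ is spent on the single Chernoff optimization $\tau = 2v^2\sqrt{m}/y$. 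This cancellation inside the exponential is the content of the theorem's sharp constant, and it is not reproducible from a triangle inequality between $\|\hat\bSigma_2^\TT - \EE\hat\bSigma_2^\TT\|_2$ and $\|\EE\hat\bSigma_2^\TT-\bSigma\|_2$.
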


To better understand this result, note that $v^2$ can be written as
\$
 \frac{1}{2} \|    \EE \{ (\bX-\bmu)(\bX-\bmu)^\intercal \}^2  + {\rm tr}(\bSigma) \bSigma + 2 \bSigma^2  \|_2 ,
\$
which is well-defined if the fourth moments $\EE(X_k^4)$ are finite.
Let 
\[
K = \sup_{\bu\in \RR^d} {\rm kurt}(\bu^\T \bX)
\] 
be the maximal kurtosis of the one-dimensional projections of $\bX$. Then
\$
 v^2 \leq   \| \bSigma \|_2 \{  (K+1) {\rm tr}(\bSigma)/2 +  \|\bSigma \|_2 \}. 
\$
The following result is a direct consequence of Theorem \ref{thm:u-type}: $\widehat\bSigma_2^{\TT}$ admits exponential-type concentration for data with finite kurtoses.

\begin{corollary}  \label{coro:spectrum}
Assume that $K= \sup_{\bu\in \RR^d} {\rm kurt}(\bu^\T \bX)$ is finite. Then, for any $0< \delta <1$, the estimator $\hat \bSigma_2^{\TT} = \hat \bSigma_{2}^{\TT}(\tau)$ defined in Theorem~\ref{thm:u-type} satisfies
\#
 \| \hat{\bSigma}_{2}^{\TT} - \bSigma \|_2 \lesssim  K^{1/2} \| \bSigma \|_2  \sqrt{ \frac{ {\rm r}(\bSigma) ( \log d +\log \delta^{-1} ) }{n}}  \label{spec.bound}
\#
with probability at least $1-\delta$.
\end{corollary}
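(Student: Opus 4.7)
The plan is to combine Theorem~\ref{thm:u-type} with the inequality $v^2\le \|\bSigma\|_2\{(K+1)\,\mathrm{tr}(\bSigma)/2+\|\bSigma\|_2\}$ that the excerpt has already stated just above the corollary. The whole argument is essentially a book-keeping step: translate the bound \eqref{spectral.concentration} from dependence on $v$ into dependence on $\|\bSigma\|_2$, $\mathrm{r}(\bSigma)$ and $K$, and then absorb constants and the $\log 2$ into $\lesssim$.

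First, I would invoke Theorem~\ref{thm:u-type} with its prescribed choice of $\tau$, so that with probability at least $1-\delta$,
\[
\|\hat{\bSigma}_2^{\TT}-\bSigma\|_2 \;\le\; 2v\sqrt{\frac{\log(2d)+\log\delta^{-1}}{m}}.
\]
Second, I would verify the displayed algebraic bound on $v^2$. Writing $\bZ_i=\bX_i-\bmu$ and using $[(\bZ\bZ^\T)]^2=\|\bZ\|_2^2\,\bZ\bZ^\T$, the independence and centering of $\bZ_1,\bZ_2$ reduce $\EE[(\bZ_1-\bZ_2)(\bZ_1-\bZ_2)^\T]^2$ to $2\EE[\|\bZ\|_2^2\,\bZ\bZ^\T]+2\,\mathrm{tr}(\bSigma)\bSigma+4\bSigma^2$, which gives the representation of $v^2$ stated in the text. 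For any unit vector $\bu$, Cauchy--Schwarz together with the kurtosis assumption yields
\[
\bu^\T\EE[\|\bZ\|_2^2\,\bZ\bZ^\T]\bu=\sum_{k}\EE[Z_k^2(\bu^\T\bZ)^2]\le \sum_k\sqrt{\EE Z_k^4\cdot\EE(\bu^\T\bZ)^4}\le K\,\mathrm{tr}(\bSigma)\,\|\bSigma\|_2,
\]
so $\|\EE[\|\bZ\|_2^2\,\bZ\bZ^\T]\|_2\le K\,\mathrm{tr}(\bSigma)\,\|\bSigma\|_2$, and the stated bound on $v^2$ follows.

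Third, I would apply the definition $\mathrm{r}(\bSigma)=\mathrm{tr}(\bSigma)/\|\bSigma\|_2$ and the fact that $\mathrm{r}(\bSigma)\ge 1$ for nonzero $\bSigma$ to obtain
\[
v^2 \;\le\; \|\bSigma\|_2^2\bigl\{(K+1)\,\mathrm{r}(\bSigma)/2+1\bigr\}\;\lesssim\; K\,\|\bSigma\|_2^2\,\mathrm{r}(\bSigma),
\]
hence $v\lesssim K^{1/2}\|\bSigma\|_2\sqrt{\mathrm{r}(\bSigma)}$. Finally, using $m=\lfloor n/2\rfloor\asymp n$ and $\log(2d)\le \log d+\log 2\lesssim \log d$, substitution into the probabilistic bound from Theorem~\ref{thm:u-type} delivers \eqref{spec.bound}.

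There is no real obstacle here: the concentration has already been done in Theorem~\ref{thm:u-type}, and the pre-corollary display already packages the analytical bound on $v^2$. The only mildly delicate point is verifying that the kurtosis assumption is strong enough to control $\EE[\|\bZ\|_2^2\,\bZ\bZ^\T]$ entry-by-entry via Cauchy--Schwarz applied to each coordinate; everything else is a direct substitution.
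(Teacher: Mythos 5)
Your proof is correct and follows essentially the same route as the paper: start from the deviation bound of Theorem~\ref{thm:u-type}, reduce $v^2$ to $\frac{1}{2}\|\EE\{(\bX-\bmu)(\bX-\bmu)^\T\}^2 + \mathrm{tr}(\bSigma)\bSigma + 2\bSigma^2\|_2$, bound the first term by $K\,\mathrm{tr}(\bSigma)\,\|\bSigma\|_2$, and substitute using $\mathrm{r}(\bSigma)=\mathrm{tr}(\bSigma)/\|\bSigma\|_2\ge 1$. The one place you diverge is that the paper simply cites Lemma~4.1 of Minsker and Wei (2018) for $\|\EE\{(\bX-\bmu)(\bX-\bmu)^\T\}^2\|_2\le K\,\mathrm{tr}(\bSigma)\,\|\bSigma\|_2$, whereas you re-derive it directly: writing $\bu^\T\EE[\|\bZ\|_2^2\bZ\bZ^\T]\bu=\sum_k\EE[Z_k^2(\bu^\T\bZ)^2]$ and applying Cauchy--Schwarz coordinatewise, using $\EE Z_k^4\le K\sigma_{kk}^2$ and $\EE(\bu^\T\bZ)^4\le K\|\bSigma\|_2^2$. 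That makes the corollary self-contained, which is a modest plus; the paper's citation is shorter. Your final absorption step $(K+1)\mathrm{r}(\bSigma)/2+1\lesssim K\,\mathrm{r}(\bSigma)$ implicitly uses $K\ge 1$, which indeed always holds (kurtosis is at least $1$ by Cauchy--Schwarz), so that is fine though worth a word.
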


\begin{remark} \label{rem: MZ estimator}
An estimator proposed by  \cite{MZ2018} achieves a more refined deviation bound, namely, with $\| \bSigma \|_2 \sqrt{ {\rm r}(\bSigma) ( \log d +\log \delta^{-1})}$ in (\ref{spec.bound}) improved to $\| \bSigma \|_2 \sqrt{{\rm r}(\bSigma)  \log {\rm r}(\bSigma) } + \| \bSigma \|_2 \sqrt{\log \delta^{-1}}$; in particular, the deviations are controlled by the spectral norm $\|\bSigma\|_2$ instead of the (possibly much larger) trace ${\rm{tr}}(\bSigma)$. Estimators admitting such deviations guarantees are often called ``sub-Gaussian'' as they achieve performance similar to the sample covariance obtained from data with multivariate normal distributions. Unfortunately, the aforementioned estimator is computationally intractable. 
The question of computational tractability was subsequently resolved by \cite{H2018}  and \cite{CFB2019}.  The former showed that a polynomial-time algorithm  achieves statistically optimal rate under the $\ell_2$-norm, and the latter proposed an estimator that has a significantly faster runtime and has sub-Gaussian error bounds; in particular, these results apply to covariance estimation in Frobenius norm. 
Yet it remains an open problem to design a polynomial-time algorithm capable of efficiently computing the estimator proposed by \cite{MZ2018} that achieves near-optimal deviation in the spectral norm. 
\end{remark}

\subsection{An $M$-estimation viewpoint}

In this section, we discuss alternative tail-robust covariance estimators from an $M$-estimation perspective, and study both the  element-wise and spectrum-wise  truncated  estimators. The connection with truncated covariance estimators  is discussed at the end of this section. To proceed, we revisit the definition of Huber loss. 

\begin{definition}[Huber loss] \label{huber.def}
 The Huber loss $\ell_\tau(\cdot)$ \citep{H1964} is defined as
\begin{align}
	\ell_\tau(u) =
	\left\{\begin{array}{ll}
	 u^2/2  ,    & \mbox{if } | u | \leq  \tau ,  \\
	\tau | u | -  \tau^2 /2 ,   &  \mbox{if }  | u | > \tau ,
	\end{array}  \right.
\end{align}
where $\tau >0$ is a robustification parameter similar to that in Definition~\ref{truncated.def}.
\end{definition}

Compared with the squared error loss, large values of $u$ are down-weighted in the Huber loss, yielding robustness. Generally speaking, minimizing Huber's loss produces a biased estimator of the mean, and parameter $\tau$ can be chosen to control the bias.
In other words, $\tau$ quantifies the tradeoff between bias and robustness. 
As observed by \cite{sun2017adaptive}, in order to achieve an optimal tradeoff, $\tau$ should adapt  to the sample size, dimension and the noise level of the problem.

Starting with the  element-wise method, we define the entry-wise estimators
\# \label{huber.est}
	\hat{\sigma}^{\HH} _{1,k \ell} = \argmin_{\theta \in \RR}  \SUM \ell_{\tau_{k \ell}}  (  Y_{ik}Y_{i\ell} /2   -\theta  ), \ \ 1\leq k,\ell \leq d, 
\#
where $\tau_{k\ell}$ are robustification parameters satisfying $\tau_{k\ell} = \tau_{\ell k}$. 
When $k=\ell$, even though the minimization is over $\RR$, it turns out the solution $\hat{\sigma}^{\HH}_{1,kk}$ is still positive almost surely and therefore provides a reasonable estimator of $\sigma^{\HH}_{1,kk}$. To see this, for each $1\leq k\leq d$, define $   \theta_{0k} =  \min_{1\leq i\leq N} Y_{ik}^2/2$ and note that for any  $\tau>0$ and $\theta \leq    \theta_{0k}$,
\[
 \SUM  \ell_\tau (  Y_{ik}^2 /2 -\theta  ) \geq \SUM  \ell_\tau (    Y_{ik}^2 /2   -   \theta_{0k}  ).
\]
It implies that $\hat{\sigma}^{\HH}_{1,kk} \geq   \theta_{0k}$, which is strictly positive as long as there are no tied observations. Again, concatenating these marginal estimators, we obtain a Huber-type $M$-estimator 
\#
	\hat \bSigma_{1}^{\HH} = \hat{\bSigma}_1^{\HH}( \bGamma) = (\hat{ \sigma}^{\HH}_{1,k \ell})_{1\leq k,\ell \leq d} , \label{Huber1}
\# 
where $\bGamma  = (\tau_{k\ell})_{1\leq k,\ell \leq d }$. The following main result of this section indicates that $\hat{\bSigma}^{\HH}_1 $ achieves tight concentration under the max norm for data with finite fourth moments.

\begin{theorem} \label{thm:huber-type}
Let $\Vb = ( v_{k\ell})_{1\leq k,\ell \leq d}$ be a symmetric matrix with entries 
\# \label{v.cond}
	 v_{k\ell}^2 =    \var(   (X_{1k} - X_{2k})(X_{1\ell} - X_{2\ell })/2 ).
\#
For any $0< \delta <1$, the covariance estimator $\hat \bSigma^{\HH}_1 $ given in \eqref{Huber1}  with 
\#
	 \bGamma  = \sqrt{\frac{m}{2\log d + \log \delta^{-1}}} \,\Vb \label{def:taukl}
\# 
satisfies
\#
	\PP  \Bigg(  \| \hat{\bSigma}^{\HH}_1  - \bSigma \|_{\max}   \geq  4  \| \Vb \|_{\max}  \sqrt{\frac{2\log d + \log \delta^{-1}}{m }   }   \Bigg)  \leq 2 \delta   \label{max.concentration}
\#
as long as $m \geq 8 \log  (d^2 \delta^{-1} )$.
\end{theorem}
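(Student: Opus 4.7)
The proof has the familiar two-part shape of any Huber $M$-estimator deviation bound: (i) convert the event of interest into a sign condition on the Huber score, (ii) obtain an exponential concentration for that score via a Bernstein/Bernstein-type argument. What makes the present setting nontrivial is that the score evaluated at a candidate point is a $U$-statistic, not an i.i.d.\ sum, so a decoupling step is needed.

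\textbf{Step 1: entry-wise reduction and convexity trick.} I would apply a union bound over the $d(d{+}1)/2$ upper-triangular entries, so that it suffices to prove, for each fixed $(k,\ell)$ and $r=4\|\Vb\|_{\max}\sqrt{L/m}$ with $L:=2\log d+\log\delta^{-1}$, that
\[
\PP\bigl(|\hat\sigma^{\HH}_{1,k\ell}-\sigma_{k\ell}|\ge r\bigr) \;\lesssim\; \delta/d^{2}.
\]
Write $Z_i = Y_{ik}Y_{i\ell}/2$ so that $\EE Z_i=\sigma_{k\ell}$ and $\var(Z_i)=v_{k\ell}^{2}$, and abbreviate $\tau=\tau_{k\ell}$. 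Because $\ell_\tau$ is convex with derivative $\ell_\tau'=\psi_\tau$, the map $\theta\mapsto\sum_{i=1}^{N}\psi_\tau(Z_i-\theta)$ is continuous and non-increasing, so the first-order condition yields the standard equivalence
\[
\hat\sigma^{\HH}_{1,k\ell}>\sigma_{k\ell}+r \;\Longleftrightarrow\; \sum_{i=1}^{N}\psi_\tau(Z_i-\sigma_{k\ell}-r)>0,
\]
with a symmetric statement for the lower tail. The problem is thereby reduced to controlling the tail of the score sum at the shifted point $\sigma_{k\ell}+r$.

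\textbf{Step 2: Hoeffding decoupling and Bernstein.} The sum $S(\theta):=\sum_{i=1}^{N}\psi_\tau(Z_i-\theta)$ is a $U$-statistic of degree two in the i.i.d.\ observations $\bX_1,\dots,\bX_n$, because each $Z_i$ depends on exactly one pair $(\bX_a,\bX_b)$. I would invoke Hoeffding's classical averaging/decoupling identity: any such $U$-statistic can be written as an average over permutations of sums of $m=\lfloor n/2\rfloor$ disjoint-pair terms, so by Jensen's inequality applied to $x\mapsto e^{\lambda x}$,
\[
\EE\exp\bigl(\lambda\, N^{-1} S(\sigma_{k\ell}+r)\bigr)\le\EE\exp\bigl(\lambda\, m^{-1}\textstyle\sum_{j=1}^{m}\psi_\tau(Z_j'-\sigma_{k\ell}-r)\bigr),
\]
where $Z_j'=(\bX_{2j-1}-\bX_{2j})_k(\bX_{2j-1}-\bX_{2j})_\ell/2$ are i.i.d.\ copies of $Z_1$. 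The i.i.d.\ sum on the right is amenable to Bernstein's inequality: each summand $W_j=\psi_\tau(Z_j'-\sigma_{k\ell}-r)$ is bounded by $\tau$, has $\EE W_j^{2}\le\EE(Z_j'-\sigma_{k\ell}-r)^{2}=v_{k\ell}^{2}+r^{2}$, and its mean is close to $-r$ with the bias controlled by the elementary inequality $|\psi_\tau(u)-u|\le u^{2}/\tau$, giving $|\EE W_j+r|\le(v_{k\ell}^{2}+r^{2})/\tau$. Applying Bernstein to $m^{-1}\sum W_j$ yields an exponential bound of the form $\exp(-cm r^{2}/(v_{k\ell}^{2}+r^{2}+\tau r))$ provided the bias is at most $r/2$.

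\textbf{Step 3: calibration.} Substituting the prescribed $\tau_{k\ell}=v_{k\ell}\sqrt{m/L}$ and $r=4\|\Vb\|_{\max}\sqrt{L/m}\ge 4v_{k\ell}\sqrt{L/m}$, one finds $(v_{k\ell}^{2}+r^{2})/\tau\le v_{k\ell}\sqrt{L/m}\,(1+r^{2}/v_{k\ell}^{2})$, which is at most $r/2$ as soon as $m\ge 8\log(d^{2}/\delta)$; with this condition the Bernstein exponent simplifies to a constant multiple of $L$, delivering the desired $\exp(-L)=\delta/d^{2}$ tail. A symmetric argument for $\hat\sigma^{\HH}_{1,k\ell}<\sigma_{k\ell}-r$ and the union bound close the proof, with the factor of $2\delta$ on the right of \eqref{max.concentration} absorbing both tails.

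\textbf{Main obstacle.} The only genuinely non-routine step is treating the $U$-statistic dependence: naive Bernstein cannot be applied to $S(\theta)$ because the summands are not independent. Hoeffding's decoupling bypasses this at the small price of reducing the effective sample size from $N$ to $m$, which matches the $m$ appearing throughout the theorem. The remaining work — controlling the Huber bias by $(v_{k\ell}^{2}+r^{2})/\tau$, juggling constants so that $\|\Vb\|_{\max}$ rather than $v_{k\ell}$ appears in the final deviation, and converting $m\ge 8\log(d^{2}\delta^{-1})$ into a usable bias-variance separation — is bookkeeping once the decoupling is in place.
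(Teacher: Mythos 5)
Your plan is structurally aligned with the paper's proof: union bound over entries, monotonicity of the Huber score to translate the deviation of the $M$-estimator into a sign condition on the score at a shifted point, Hoeffding's averaging-over-permutations trick to reduce the $U$-statistic to i.i.d.\ blocks of size $m=\lfloor n/2\rfloor$, and then exponential concentration. Where you diverge is the concentration tool itself. The paper does \emph{not} apply Bernstein; it uses Catoni's inequality $-\log(1-x+x^2)\leq\psi_1(x)\leq\log(1+x+x^2)$ to bound $\EE\exp\{\psi_1((Z-\theta)/\tau)\}\leq\exp\{(\sigma_{k\ell}-\theta)/\tau+(v^2+(\sigma_{k\ell}-\theta)^2)/\tau^2\}$ directly, so the bias and variance enter jointly through a quadratic $B_\pm(\theta)$ whose roots $\theta_\pm$ are then solved — this is exactly Catoni's Proposition~2.4 route and it is what produces the precise constants $4$ and $m\geq 8\log(d^2\delta^{-1})$.

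The Bernstein route you propose does \emph{not} deliver those constants, and the place it breaks is the calibration step you dismiss as bookkeeping. With $\tau=v\sqrt{m/L}$ and the per-entry target $r=4v\sqrt{L/m}$ (or the larger $4\|\Vb\|_{\max}\sqrt{L/m}$, which only hurts), the bias bound $(v^2+r^2)/\tau=v\sqrt{L/m}\,(1+16L/m)$ evaluates to $\tfrac{3}{4}r$ under $m\geq 8L$, not $\tfrac{1}{2}r$ as you claim; you need $m\geq 16L$ for $\tfrac{1}{2}r$. Even granting $m\geq 16L$, the Bernstein exponent at $t=r/2$ is
$\frac{m(r/2)^2}{2(v^2+r^2)+\frac{2}{3}\cdot 2\tau\cdot(r/2)}\leq\frac{4v^2L}{4v^2+\frac{8}{3}v^2}\approx 0.6L,$
so you obtain $\exp(-0.6L)$, not the required $\exp(-L)=\delta/d^2$. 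One can salvage a theorem of the same shape by inflating the constant $4$ and the threshold $8$, but the statement as written needs the Catoni bound, because it is genuinely tighter than Bernstein here: it exploits the exact algebraic structure of the truncation $\psi_\tau$ at the specific scale $\lambda=1/\tau$, rather than a worst-case bounded-range estimate. A minor further point: your "$\Longleftrightarrow$" in Step~1 should be a one-sided implication (the score is non-increasing but need not be strictly so); the paper's $B_\pm$/$\theta_\pm$ formulation handles this cleanly and simultaneously absorbs the MGF bound.
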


The $M$-estimator counterpart of the spectrum-truncated  covariance estimator was first proposed by \cite{M2016} using a different robust loss function, and extended by \cite{MW2018} to more general framework of $U$-statistics. In line with the previous element-wise $M$-estimator, we restrict our attention to the Huber loss and consider 
\#
	\hat \bSigma^{\HH}_2 \in \argmin_{\Mb \in \RR^{d\times d}: \Mb= \Mb^\intercal }   \tr\bigg\{   \frac{1}{N} \SUM  \ell_\tau( \bY_i \bY_i^\intercal/2 - \Mb ) \bigg\} ,  \label{Huber2}
\#
which is a natural robust variant of the sample covariance matrix
$$
\hat \bSigma^{\sam}  = \argmin_{\Mb \in \RR^{d\times d}: \Mb= \Mb^\intercal }    \tr\bigg\{    \frac{1}{N}   \SUM  ( \bY_i \bY_i^\intercal/2 - \Mb )^2 \bigg\}.
$$
Define the $d\times d$ matrix $\Sb_0 = \EE \{ (\bX_1 -\bX_2)(\bX_1 - \bX_2)^\intercal  /2 - \bSigma \}^2$ that satisfies 
$$
	\Sb_0  = \frac{\EE \{(\bX-\bmu)(\bX-\bmu)^\T\}^2 + \tr(\bSigma)  \bSigma   }{2}  .
$$
The following result is modified  from  Corollary 4.1 of \cite{MW2018}. 
\begin{theorem}  \label{thm:wei-minsker}
Assume that there exists some $K>0$ such that $\sup_{\bu\in \RR^d} {\rm kurt}(\bu^\T \bX) \leq K$. Then for any $0<\delta<1$ and $v  \geq   \| \Sb_0 \|_2^{1/2}$, the $M$-estimator $\hat \bSigma^{\HH}_2$ with $\tau = v\sqrt{m/(2\log d+ 2\log \delta^{-1})}$ satisfies
\#
	 \| \hat \bSigma^{\HH}_2 - \bSigma \|_2 \leq C_1 v \sqrt{\frac{\log d + \log \delta^{-1}}{m}} 
\#
with probability at least $1- 5 \delta$ as long as $n \geq C_2 K \cdot {\rm r}(\bSigma) ( \log d +  \log \delta^{-1})$, where $C_1, C_2>0$ are absolute constants.
\end{theorem}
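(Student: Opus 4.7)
\emph{Proof plan.} Since the Huber loss $\ell_\tau$ is convex and $C^1$, the objective $L(\Mb) := \mathrm{tr}\{(1/N)\SUM \ell_\tau(\bY_i\bY_i^\intercal/2 - \Mb)\}$ is convex in the symmetric matrix $\Mb$, and its gradient with respect to $\Mb$ (taken in the Frobenius inner product on the space of symmetric matrices) equals $-(1/N)\SUM \psi_\tau(\bY_i\bY_i^\intercal/2 - \Mb)$, where $\psi_\tau := \ell_\tau'$ is applied to symmetric matrices through the functional calculus of Definition~\ref{def.matrix.functional}. Hence $\hat\bSigma_2^{\HH}$ satisfies the stationary condition $(1/N)\SUM \psi_\tau(\bY_i\bY_i^\intercal/2 - \hat\bSigma_2^{\HH}) = \mathbf{0}$. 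A local-quadratic / contraction argument in the operator norm, developed in the matrix $U$-statistic form in \cite{MW2018}, then shows that on the event that $\bSigma$ lies in the region where $\EE L$ admits a quadratic lower bound, $\|\hat\bSigma_2^{\HH} - \bSigma\|_2$ is bounded, up to an absolute constant, by the spectral norm of the \emph{score at the truth},
\[
	\bm{Z} := \frac{1}{N}\SUM \psi_\tau\!\left( \frac{\bY_i\bY_i^\intercal}{2} - \bSigma \right).
\]
The proof thus reduces to controlling $\|\bm{Z}\|_2$.

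I would treat $\bm{Z}$ as a matrix-valued $U$-statistic of order two in the i.i.d.\ sample $\bX_1,\ldots,\bX_n$. Following the Hoeffding-style decoupling used in \cite{MW2018} (averaging over partitions of $\{1,\ldots,n\}$ into $m = \lfloor n/2 \rfloor$ disjoint pairs, so each pair-average is an i.i.d.\ sum), it suffices to establish concentration for such an i.i.d.\ sum. I would apply the PAC-Bayesian matrix Bernstein inequality that is the central tool of \cite{MW2018}: since $\|\psi_\tau(\cdot)\|_2 \leq \tau$ almost surely and the matrix variance satisfies $\|\EE \psi_\tau(\cdot)^2\|_2 \leq \|\Sb_0\|_2 \leq v^2$, it yields, with probability at least $1 - \delta$,
\[
	\|\bm{Z} - \EE\bm{Z}\|_2 \;\lesssim\; v\sqrt{\frac{\log d + \log\delta^{-1}}{m}} + \tau\,\frac{\log d + \log\delta^{-1}}{m}.
\]
For the bias, the pointwise inequality $|\psi_\tau(x) - x| \leq x^2/\tau$ transfers through the matrix functional calculus to $\|\EE\bm{Z}\|_2 \leq \|\Sb_0\|_2/\tau \leq v^2/\tau$.

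Plugging in $\tau = v\sqrt{m/(2\log d + 2\log\delta^{-1})}$ balances the stochastic deviation and the bias, each of the target order $v\sqrt{(\log d + \log\delta^{-1})/m}$, so $\|\bm{Z}\|_2 \leq C_1 v\sqrt{(\log d + \log\delta^{-1})/m}$ on an event of probability at least $1 - C\delta$; combined with the contraction of paragraph~one this delivers the claimed bound on $\|\hat\bSigma_2^{\HH} - \bSigma\|_2$. The sample-size condition $n \geq C_2 K \cdot \mathrm{r}(\bSigma)(\log d + \log\delta^{-1})$ is precisely what is needed to guarantee that the achieved error $v\sqrt{(\log d + \log\delta^{-1})/m}$ is smaller than the radius of the quadratic region of $\EE L$ — i.e.\ smaller than a constant multiple of $\|\bSigma\|_2$ — via the bound $\|\Sb_0\|_2 \leq \|\bSigma\|_2\{(K+1)\mathrm{tr}(\bSigma)/2 + \|\bSigma\|_2\}$ already derived preceding Corollary~\ref{coro:spectrum}, under the kurtosis hypothesis. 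The factor $5\delta$ absorbs the union across the decoupling, matrix Bernstein, and localization events. The main obstacle is the PAC-Bayesian matrix Bernstein inequality for the $U$-statistic $\bm{Z}$ with Huber-truncated, spectrum-valued summands; by comparison, the bias computation and the score-to-error passage via strong convexity are routine once the matrix functional calculus is in place.
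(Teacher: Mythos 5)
Your plan traces exactly the route of \cite{MW2018}, which is what the paper itself does---the paper supplies no self-contained argument for this theorem, only the remark that it is ``modified from Corollary 4.1 of \cite{MW2018}.'' The ingredients you list (reduction to the score $\bm{Z}$ at the truth via a local strong-convexity/contraction step, Hoeffding decoupling of the matrix $U$-statistic into $m=\lfloor n/2\rfloor$ i.i.d.\ pair-sums, the PAC-Bayesian matrix Bernstein inequality using $\|\psi_\tau(\cdot)\|_2\leq\tau$ and $\|\EE\psi_\tau(\cdot)^2\|_2\leq\|\Sb_0\|_2$, the bias bound $\|\EE\bm{Z}\|_2\leq\|\Sb_0\|_2/\tau$ from $|\psi_\tau(x)-x|\leq x^2/\tau$ transported through the spectral calculus, and the sample-size condition as a localization requirement tied to $\|\Sb_0\|_2\leq\|\bSigma\|_2\{(K+1)\mathrm{tr}(\bSigma)/2+\|\bSigma\|_2\}$) are all the correct ones, and they coincide with the \cite{MW2018} proof the paper delegates to.
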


To solve the convex optimization problem \eqref{Huber2}, \cite{MW2018} propose the following gradient descent algorithm: starting with an initial estimator $ \hat \bSigma^{ (0)}$, at iteration $t=1,2,\ldots$, compute
\#
 \hat \bSigma^{(t)} =   \hat \bSigma^{(t-1)} -  \frac{1}{N} \sum_{i=1}^N  \psi_\tau\Big( \bY_i \bY_i^\intercal/2 -   \hat \bSigma^{(t-1)}  \Big) ,  \nn
\#
where $\psi_\tau$ is given in \eqref{psi.def}. 
From this point of view,  the truncated estimator $ \hat\bSigma_{2}^{\TT} $ given in \eqref{U.cov.estimator} can be viewed as the first step of the gradient descent iteration for solving optimization problem \eqref{Huber2} initiated at $\hat \bSigma^{ (0)} = \textbf{0}$. This procedure enjoys a nice contraction property, as demonstrated by Lemma~3.2 of \cite{MW2018}.
However, since the difference matrix $ \bY_i \bY_i^\intercal/2 -   \hat \bSigma^{(t-1)} $ for each $t$ is no longer rank-one, we need to perform  a singular value decomposition to compute the matrix $\psi_\tau ( \bY_i \bY_i^\intercal/2 -   \hat \bSigma^{(t-1)}  )$ for $i=1,\ldots, N$.

We end this section with a discussion of the similarities and differences  between M-estimators and estimators defined via truncation. Both types of estimators achieve tail robustness through a bias-robustness tradeoff, either element-wise or spectrum-wise. 
However (informally speaking), $M$-estimators truncate symmetrically around the true expectation as shown in (\ref{huber.est}) and (\ref{Huber2}), while the truncation-based estimators truncate around zero as  in   (\ref{element-wise.censored.est}) and (\ref{U.cov.estimator}). Due to smaller bias, $M$-estimators are expected to outperform the simple truncation estimators. However, since  the optimal choice of the robustification parameter is often much larger than the population moments in magnitude, either  element-wise or spectrum-wise, the difference between truncation estimators and $M$-estimators becomes insignificant when the sample size $n$ is large. Therefore, we advocate using the simple truncated estimator primarily due to its simplicity and computational efficiency.

\subsection{Median-of-means estimator}
\label{sec2.1.1}

Truncation-based approaches described in the previous sections require knowledge of robustification parameters $\tau_{k\ell}$. Adaptation and tuning of these parameters will be discussed in Section \ref{sec:3.5} below. 
Here, we suggest another method that does not need any tuning but requires stronger assumptions, namely,  existence of moments of order six.
This method is based on the median-of-means (MOM) technique \citep{Nemirovski1983Problem-complex00,DLLO2016,MS2017}. 
To this end, assume that the index set $\{1,\ldots,n\}$ is partitioned into $k$ disjoint groups $G_1,\ldots,G_k$ (partitioning scheme is assumed to be independent of $\bX_1,\ldots,\bX_n$) such that the cardinalities $|G_j|$ satisfy 
$\left| |G_j| - \frac{n}{k} \right| \leq 1$ for $j=1,\ldots,k$. 
For each $j=1,\ldots,k$, let $\bar \bX_{G_j} =  (1/|G_j|)\sum_{i\in G_j} \bX_i$ and 
\[
\hat{\bSigma}^{(j)} = \frac{1}{|G_j|}\sum_{i\in G_j}(\bX_i - \bar \bX_{G_j})(\bX_i - \bar \bX_{G_j})^\T
\]
be the sample covariance evaluated over the data in group $j$. 
Then, for all $1\leq \ell ,m\leq d$, the MOM estimator of $\sigma_{\ell m}$ is defined via 
\[
\hat\sigma_{\ell m}^{\mathrm{MOM}} = \mathrm{median} \big\{  \hat{\sigma}^{(1)}_{\ell m},\ldots,\hat{\sigma}^{(k)}_{\ell m} \big\},
\]
where $\hat{\sigma}^{(j)}_{\ell m}$ is the entry in the $\ell$-th row and $m$-th column of $\hat{\bSigma}^{(j)}$. This leads to
\[
\hat{\bSigma}^{\mathrm{MOM}} = \left( \hat\sigma_{\ell m}^{\mathrm{MOM}} \right)_{1\leq \ell ,m\leq d}.
\] 
Let $\Delta^2_{\ell m} = \mathrm{Var} ( (X_\ell - \mathbb E X_\ell)(X_m - \mathbb E X_m))$ for $1\leq \ell, m\leq d$. The following result provides a deviation bound for the MOM estimator $\hat{\bSigma}^{\mathrm{MOM}}$ under the max norm.

\begin{theorem} 
\label{thm:mom}
Assume that $\min_{\ell ,m}\Delta^2_{\ell m} \geq c_\ell > 0$ and $\max_{1\leq k\leq d }\mathbb E | X_k - \mathbb E X_k |^6  \leq c_u < \infty$. 
Then, there exists $C_0>0$ depending only on $(c_\ell, c_u)$ such that
\#
\PP   \Bigg(   \| \hat{\bSigma}^{\mathrm{MOM}} - \bSigma \|_{\max}   \geq  3\max_{\ell ,m} \Delta_{\ell m}\left\{ \sqrt{\frac{ \log (d+1) + \log \delta^{-1}}{ n}} + C_0\frac{k}{n}\right\}   \Bigg) \leq 2 \delta
\label{MOM.bound}
\#
for all $\delta$ satisfying $\sqrt{\{ \log (d+1) + \log \delta^{-1} \}/k} + C_0{\sqrt{k/n}} \leq 0.33$.
\end{theorem}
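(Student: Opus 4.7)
The plan is to reduce the max-norm control to an entrywise argument: bound $|\hat\sigma^{\mathrm{MOM}}_{\ell m}-\sigma_{\ell m}|$ for each fixed $(\ell,m)$ and then take a union bound over the (at most) $(d+1)^2/2$ distinct entries. Throughout, set $L=\log(d+1)+\log\delta^{-1}$ and $n_j=|G_j|$, which differs from $n/k$ by at most one.

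First I would analyze a single within-group sample covariance $\hat\sigma^{(j)}_{\ell m}$. Writing $\tilde X_{i\ell}=X_{i\ell}-\mu_\ell$ and expanding gives
\[
\hat\sigma^{(j)}_{\ell m}-\sigma_{\ell m}=\frac{1}{n_j}\sum_{i\in G_j}\big(\tilde X_{i\ell}\tilde X_{im}-\sigma_{\ell m}\big)\;-\;\bar{\tilde X}_{G_j,\ell}\bar{\tilde X}_{G_j,m},
\]
so that the deterministic bias is $\mathbb E\hat\sigma^{(j)}_{\ell m}-\sigma_{\ell m}=-\sigma_{\ell m}/n_j$. Using $\min_{\ell,m}\Delta^2_{\ell m}\geq c_\ell$ together with $\max_k\mathbb E|X_k-\mu_k|^6\leq c_u$, Cauchy--Schwarz produces $|\sigma_{\ell m}|\leq C_1\Delta_{\ell m}$ with $C_1=C_1(c_\ell,c_u)$, so the bias is bounded by $C_1\Delta_{\ell m}/n_j\lesssim\Delta_{\ell m}\cdot k/n$, accounting for the $C_0k/n$ term. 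The variance decomposes into the leading contribution $\Delta^2_{\ell m}/n_j$ coming from the first average, plus a lower-order remainder of size $O(1/n_j^2)$ controlled by the sixth moment hypothesis and absorbed into $C_0$.

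Second, I would apply Chebyshev's inequality at the scale $t=3\Delta_{\ell m}\sqrt{L/n}$, yielding
\[
p\;:=\;\PP\big(|\hat\sigma^{(j)}_{\ell m}-\mathbb E\hat\sigma^{(j)}_{\ell m}|>t\big)\;\leq\;\frac{\mathrm{Var}\big(\hat\sigma^{(j)}_{\ell m}\big)}{t^2}\;\leq\;\frac{k}{9L}(1+o(1)).
\]
The admissibility constraint $\sqrt{L/k}+C_0\sqrt{k/n}\leq 0.33$ translates into a quantitative gap $\tfrac{1}{2}-p\geq\eta>0$ for an explicit $\eta$ bounded below by an absolute constant. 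Third comes the MOM amplification: since the $k$ groups are independent and share the same mean, the median can deviate from $\mathbb E\hat\sigma^{(j)}_{\ell m}$ by more than $t$ only if at least $k/2$ individual groups do, so Hoeffding's binomial inequality gives
\[
\PP\Big(\big|\hat\sigma^{\mathrm{MOM}}_{\ell m}-\mathbb E\hat\sigma^{(j)}_{\ell m}\big|>t\Big)\;\leq\;\PP\big(\mathrm{Bin}(k,p)\geq k/2\big)\;\leq\;\exp(-2k\eta^2).
\]
Combined with the deterministic bias bound via the triangle inequality, and noting that $2k\eta^2\geq L+\log 2$ under the admissibility constraint, the per-entry failure probability is at most $2\delta/(d+1)^2$; a union bound over the $(d+1)^2/2$ upper-triangular entries delivers the global probability $2\delta$ claimed in \eqref{MOM.bound}.

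The main obstacle is Step one: pinning down the variance of $\hat\sigma^{(j)}_{\ell m}$ with leading constant $\Delta^2_{\ell m}/n_j$ and an explicitly controlled remainder (rather than a loose $O(1/n_j^2)$), because this leading constant is exactly what allows the Chebyshev bound in Step two to be amplified by MOM to the stated deviation $3\Delta_{\ell m}\sqrt{L/n}$. The sixth-moment hypothesis is used precisely here to control both the cross-term $\mathrm{Cov}\big(n_j^{-1}\sum\tilde X_{i\ell}\tilde X_{im},\,\bar{\tilde X}_{G_j,\ell}\bar{\tilde X}_{G_j,m}\big)$ and the ratio $|\sigma_{\ell m}|/\Delta_{\ell m}$, both of which calibrate the absolute constant $C_0$ and the admissible range of $\delta$.
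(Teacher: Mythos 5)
Your proposal has a genuine gap in Step 2, and the gap is precisely the point that makes Theorem~\ref{thm:mom} interesting. You choose the scale $t = 3\Delta_{\ell m}\sqrt{L/n}$ (with $L=\log(d+1)+\log\delta^{-1}$) and then apply Chebyshev to get
\[
p \;\le\; \frac{\operatorname{Var}\bigl(\hat\sigma^{(j)}_{\ell m}\bigr)}{t^2} \;\approx\; \frac{\Delta_{\ell m}^2\, k/n}{9\,\Delta_{\ell m}^2\, L/n} \;=\; \frac{k}{9L}.
\]
But the admissibility constraint $\sqrt{L/k}+C_0\sqrt{k/n}\le 0.33$ forces $L/k \le 0.109$, i.e.\ $k/L \ge 9.18$, so your Chebyshev bound yields $p \le k/(9L) \ge 1.02 > 1$ — it is vacuous. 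Equivalently, $t/\sqrt{\operatorname{Var}(\hat\sigma^{(j)}_{\ell m})} = 3\sqrt{L/k}\le 0.99 < 1$: you are asking Chebyshev for control \emph{inside} one standard deviation of the within-block estimator, where it gives nothing. Your claim that the admissibility constraint yields a gap $\tfrac{1}{2}-p\ge\eta>0$ is therefore false, and the subsequent Hoeffding amplification never gets off the ground. If you back off and enlarge $t$ so that Chebyshev actually bites (i.e.\ $t\gtrsim \Delta_{\ell m}\sqrt{k/n}$), you recover only the weaker deviation scale $\Delta_{\ell m}\sqrt{k/n}$, which matches $\Delta_{\ell m}\sqrt{L/n}$ only if $k$ is re-tuned to $k\asymp L$ for each $\delta$; that gives a $\delta$-dependent estimator, not the multiple-$\delta$ guarantee stated in the theorem.

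What the paper does instead — and what is essential here — is to replace the Chebyshev step by a \emph{normal approximation}. It invokes a Berry–Esseen bound for the within-block sample covariance (Theorem~2.9 of Pinelis and Molzon) together with Theorem~1 and Lemma~1 of Minsker and Strawn. The point is that $\hat\sigma^{(j)}_{\ell m}$ is approximately $N(\sigma_{\ell m},\Delta_{\ell m}^2/n_j)$, and by symmetry of the normal the single-block exceedance probability is not bounded by $\operatorname{Var}/t^2$ but satisfies $p \approx \tfrac{1}{2} - c\, t\sqrt{n_j}/\Delta_{\ell m} + O(\mathfrak{C}_{\ell m}/\sqrt{n_j})$. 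Thus $\tfrac{1}{2}-p \gtrsim t\sqrt{n}/(\Delta_{\ell m}\sqrt{k}) - \mathfrak{C}_{\ell m}\sqrt{k/n}$, which is a genuine positive margin of order $\sqrt{L/k}$ once $t\asymp\Delta_{\ell m}\bigl(\sqrt{L/n}+C_0 k/n\bigr)$. Feeding this margin into the binomial/Hoeffding step gives $\exp\bigl(-2k(\tfrac12-p)^2\bigr)\lesssim\exp(-cL)$, which crucially is \emph{independent of $k$}: the $k$ cancels. That cancellation is what allows a single fixed $k$ to deliver sub-Gaussian deviations simultaneously over the whole admissible range of $\delta$, and it is exactly the mechanism your Chebyshev-based argument cannot reproduce.
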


\begin{remark}
\
\begin{enumerate}
\item The only user-defined parameter in the definition of $\hat{\bSigma}^{\mathrm{MOM}}$ is the number of subgroups $k$. 
The bound above shows that, provided $k\ll \sqrt{n}$ (say, one could set $k = \frac{\sqrt{n}}{\log n})$, the term $C_0\frac{k}{n}$ in \eqref{MOM.bound} is of smaller order, and we obtain an estimator that admits tight deviation bounds for a wide range of $\delta$. In this sense, estimator $\hat{\bSigma}^{\mathrm{MOM}}$ is essentially a multiple-$\delta$ estimator \citep{DLLO2016}; see Remark~\ref{remark1}.

\item Application of the MOM construction to large covariance estimation problems has been explored by \cite{ABFL2018}. 
However, the results obtained therein are insufficient to conclude that MOM estimators are truly ``tuning-free''.
Under a bounded fourth moment assumption,  \cite{ABFL2018} derived a deviation bound (under max norm) for the element-wise median-of-means estimator with the number of partitions depending on a prespecified confidence level parameter. See Proposition~5 therein.

\end{enumerate}
\end{remark}

\section{Automatic tuning of robustification parameters} 
\label{sec:3.5}

For all the proposed tail-robust estimators besides the median-of-means, the robustification parameter needs to adapt properly to the sample size, dimensionality and noise level in order to achieve optimal tradeoff between bias and robustness in finite samples. An intuitive yet computationally expensive idea is to use cross-validation. 
Another approach is based on Lepski's method \citep{Lepskii1997}; this approach yields adaptive estimators with provable guarantees \citep{M2016,MW2018}, however, it is also not computationally efficient. 
In this section, we propose tuning-free approaches for constructing both  truncated and $M$-estimators that have low computational costs. Our nonasymptotic analysis provides useful guidance on the choice of key tuning parameters.

\subsection{Adaptive truncated estimator}\label{tuning:1}
In this section we introduce a data-driven procedure that automatically tunes the robustification parameters in the  element-wise truncated covariance estimator.
Practically, each individual parameter can be tuned by cross-validation from a grid search. This, however, will incur extensive computational cost even when the dimension $d$ is moderately large. 
Instead, we propose a data-driven procedure that automatically calibrates the $d(d+1)/2$ parameters. This procedure is motivated by the theoretical properties established in Theorem \ref{thm:element-wise}. 
To avoid notational clutter, we fix $1\leq k\leq \ell \leq d $ and define $\{ Z_{  1} \ldots,  Z_{ N} \}  = \{  Y_{1k}Y_{1\ell}/2    , \ldots, Y_{Nk}Y_{N\ell}/2 \} $ such that  $\sigma_{k\ell} = \EE Z_{  1}$. Then $\hat\sigma^{\TT}_{1, k\ell}$ can be written as $(1/N)\SUM \psi_{\tau_{k\ell}} (Z_{ i})$.  In view of \eqref{tuning1}, an ``ideal" choice of $\tau_{k\ell}$ is
\#\label{eq:scaling}
	\tau_{k \ell} = v_{k\ell} \sqrt{\frac{m }{2\log d + t}} ~\mbox{ with }~ v^2_{k\ell} = \EE Z_{  1}^2 ,
\#
where $t = \log \delta^{-1} \geq 1$ is  prespecified to control the confidence level and will be discussed later. 
A naive estimator of $v_{k\ell}^2$ is the empirical second moment $(1/N) \SUM Z_{  i}^2$, which tends to overestimate the true value when data have high kurtosis. Intuitively, a well-chosen $\tau_{k\ell}$ makes $(1/N) \SUM \psi_{\tau_{k\ell}} (Z_{  i})$ a good estimator of $\EE Z_{ 1}$, and meanwhile, we expect the empirical truncated second moment $(1/N)\SUM \psi^2_{\tau_{k\ell}} (Z_{ i}) = (1/N) \SUM ( Z_{  i}^2   \wedge \tau_{k\ell}^2 )$ to be a reasonable estimate of $\EE Z_{1}^2$ as well.
Plugging this empirical truncated second moment into \eqref{eq:scaling} yields
 \begin{align}
   \frac{ 1}{N}\SUM \frac{  ( Z_{ i}^2   \wedge \tau^2  ) }{\tau^2} =  \frac{2\log d + t}{m}   , \ \ \tau > 0.  \label{empirical.censored.moment.equation}
\end{align}
We then solve the above equation to obtain $\widehat \tau_{k\ell}$, a data-driven choice of $\tau_{k\ell}$. 
By Proposition~3 in \cite{WZZZ2018},  equation \eqref{empirical.censored.moment.equation} has a unique solution as long as $2\log d +t  < (m/N) \SUM I\{   Z_{  i}  \!\neq\! 0 \}$. We characterize the theoretical properties of this tuning method in a companion paper \citep{WZZZ2018}.

Regarding the choice of $t = \log \delta^{-1}$: on the one hand, as it controls the confidence level according to \eqref{sigma1.bound}, we should let $t=t_n$ be sufficiently large so that the estimator is concentrated around the true value with high probability.  On the other hand, $t$ also appears in the deviation bound that corresponds to the width of the confidence interval, so it should not grow too fast as a function of $n$. In practice, we recommend using $t=\log n$ (or equivalently, $\delta = n^{-1}$), a typical slowly varying function of $n$. 

To implement the spectrum-wise truncated covariance estimator in practice, note that there is only  one tuning parameter   whose theoretically optimal scale is
$$
	\frac{1}{2} \|  \EE \{ (\bX_1 - \bX_2 )(\bX_1 - \bX_2)^\intercal \}^2  \|_2^{1/2} \sqrt{\frac{m}{\log (2d) + t}}.
$$ 
Motivated by the data-driven tuning scheme for the  element-wise estimator, we choose $\tau$ by (approximately) solving the equation
\begin{align}
  \bigg\| \frac{1}{ \tau^2 N } \SUM  \bigg(    \frac{ \| \bY_i \|_2^2}{2}  \bigwedge  \tau    \bigg)^2  \frac{  \bY_i  \bY_i^\T }{ \|  \bY_i \|_2^2} \bigg\|_2 = \frac{\log(2d) + t  }{ m } , \nn
\end{align}
where as before we take $t=\log n$.

\subsection{Adaptive Huber-type $M$-estimator}\label{sec:adHuber}
To construct a data-driven approach for automatically tuning the adaptive Huber estimator, we follow the same rationale from the previous subsection.  Since the optimal $\tau_{k\ell}$ now depends on $\var(Z_{1})$ instead of the second moment $\EE Z_{ 1}^2 $, it is therefore conservative to directly apply the above data-driven method in this case. Instead, we propose to estimate $\tau_{k\ell}$ and  $\sigma_{k\ell}$ simultaneously by solving the following system of equations
\begin{subequations}\label{tau.M-view}
\begin{align}
	f_1(\theta, \tau )&  =   \frac{ 1}{N}\SUM \frac{ \{ ( Z_{ i}  -\theta )^2 \wedge \tau^2  \}  }{\tau^2} - \frac{2\log d + t}{n} =0 , \\
	f_2(\theta, \tau ) & =   \SUM \psi_\tau( Z_{  i} - \theta ) = 0 , 
\end{align}
\end{subequations}
for $\theta \in \RR$ and $\tau>0$. Via a similar argument, it can be shown that  the equation $f_1(\theta, \cdot) = 0$ has a unique solution as long as $2\log d +t  < (n/N) \SUM I\{   Z_{ i}  \neq \theta \}$; for any $\tau>0$, the equation $f_2(\cdot,  \tau)=0$ also has a unique solution. Starting with an initial estimate $\theta^{(0)} = (1/N)\SUM Z_{  i} $, which is the sample variance estimator of $\sigma_{k\ell}$, we iteratively solve $f_1(\theta^{(s-1)}, \tau^{(s)})=0$ and $f_2(\theta^{(s)}, \tau^{(s)})=0$ for $s=1, 2,\ldots$ until convergence.
The resultant estimator, denoted by $\hat{\sigma}^{\HH} _{3,k \ell} $ with slight abuse of notation, is then referred to as the adaptive Huber estimator of $\sigma_{k\ell}$.
We then obtain the data-adaptive Huber covariance matrix estimator as $
\hat\bSigma^{\HH}_3 = (\hat{\sigma}^{\HH} _{3,k \ell} )_{1\leq k,\ell \leq d}$. Algorithm~\ref{algm_2} presents the summary of this data-driven approach.

\begin{algorithm}
\caption{Data-Adaptive Huber Covariance Matrix Estimation}\label{algm_2}

\noindent
{\bf Input}\  Data  vectors $\bX_i \in  \RR^{d}$ ($i=1,\ldots, n$), tolerance level $\epsilon$ and maximum iteration $S_{\max}$.

\noindent
{\bf Output}\  Data-adaptive Huber covariance matrix estimator $\widehat \bSigma^{\HH}_3 = (\hat{\sigma}_{3, k \ell}^{\HH} )_{1\leq k,\ell \leq d}$. 

\begin{enumerate}
\item[1:] Compute pairwise differences  $\bY_1 = \bX_1 - \bX_2 , \bY_2 = \bX_1 - \bX_3,  \ldots, \,\bY_N = \bX_{n-1} - \bX_n$, where $N= n(n-1)/2$.

\item[2:] {~~~~}{\textbf{for}}$1\leq k \leq \ell \leq d$   {\textbf{do}}

\item[3:] {~~~~ ~~~~} $\theta^{(0)} = (2N)^{-1}\SUM Y_{ik}Y_{i\ell} $.

\item[4:] {~~~~ ~~~~}  {\textbf{for}} $s= 1, \ldots S_{\max}$   {\textbf{do}}

\item[5:]  {~~~~ ~~~~ ~~~~ }$\tau^{(s)} \leftarrow $  solution of  $f_1(\theta^{(s-1)},\cdot)=0$.

\item[6:] {~~~~ ~~~~ ~~~~} $\theta^{(s)} \leftarrow $ solution of $f_2(\cdot , \tau^{(s)})=0$.

\item[7:] {~~~~ ~~~~ ~~~~}  {\textbf{if}} $|\theta^{(s)}-\theta^{(s-1)}| < \epsilon$ \ {\textbf{break}} 

\item[8:] {~~~~ ~~~~}{\textbf{stop}} $\hat{\sigma}_{3, \ell k }^{\HH} =\hat{\sigma}_{3, k \ell}^{\HH} =\theta^{(S_{\max})}$.

\item[9:] {~~~~}{\textbf{stop}}

\item[10:] {\textbf{return}} $\widehat\bSigma^{\HH}_3 = (\hat{\sigma}_{3, k \ell}^{\HH} )_{1\leq k,\ell \leq d}$.
\end{enumerate}
\end{algorithm}

\section{Applications to structured matrix estimation}
\label{sec:4}
The robustness properties of the element-wise and spectrum-wise truncation estimators are demonstrated in Theorems \ref{thm:element-wise} and \ref{thm:u-type}. In particular, the exponential-type concentration bounds are essential for establishing reasonable estimators for high-dimensional structured covariance and precision matrices. In this section, we apply the proposed generic robust methods to the estimation of bandable and low-rank covariance matrices as well as sparse precision matrix.

\subsection{Bandable covariance matrix estimation} \label{sec:bandable}
Motivated by applications to climate studies and spectroscopy in which the index set of variables $\bX = (X_1,\ldots, X_d)^\T$ admits a natural order, one can expect that a large ``distance'' $|k -\ell |$ implies near-independence. We characterize this feature by the following class of bandable covariance matrices considered by \cite{BL2008a} and by \cite{CZZ2010}:
\#
\mathcal{F}_{\alpha } ( M_{0},M ) = \bigg\{   \bSigma = & (\sigma_{k\ell})_{1\leq k,\ell \leq d}    \in \RR^{d\times d} :  \lambda _1 ( \bSigma  ) \leq M_{0} , \nn \\
& \max_{1\leq \ell \leq d}\sum_{k:  \vert
k - \ell  \vert > m }    \vert \sigma _{k \ell } \vert \leq  \frac{M}{m^\alpha} \text{ for all } m  \bigg\}.
\label{bandableness class}
\#
Here $M_0 , M$ are regarded as universal constants and the parameter $\alpha $ specifies the decay rate of $\sigma _{k \ell}$ to zero as $\ell\rightarrow \infty $ for each row. 

When $\bX$ follows sub-Gaussian distribution, \cite{CZZ2010} proposed a minimax-optimal estimator over $\mathcal{F}_{\alpha }\left( M_{0},M\right)$ under the spectral norm. Specifically, they proposed a tapering estimator  
$\hat{\bSigma}^{\rm{tap}}_{m}=(\hat{\sigma}_{k \ell }\cdot \mathbb{\omega }_{\left\vert k - \ell \right\vert })$, where the positive integer $m \leq d$ specifies the bandwidth, $\mathbb{\omega }_q=1$, $2-2q/m$, $0$, when $q\leq m/2$, $m/2<q\leq m$, $q>m$, respectively.  $\hat{\bSigma}^{\rm{sam}} = (\hat \sigma_{k\ell})_{1\leq k,\ell \leq d}$ denotes the sample covariance. With the optimal choice of bandwidth $m \asymp \min \{n^{1/(2\alpha +1)}, d\}$, \cite{CZZ2010} showed that $\hat{\bSigma}^{\rm{tap}}_{m}$ achieves the minimax rate of convergence $\{  \sqrt{\log (d)/n} + n^{- \alpha/(2\alpha +1)} \} \wedge \sqrt{d/n}$ under the spectral norm.

To obtain a root-$n$ consistent covariance estimator, we expect the coordinates  of $\bX$ to have at least  finite fourth moments. Under this  condition, it is unclear whether the optimal rate can be achieved over $\mathcal{F}_{\alpha }\left( M_{0},M\right)$ without imposing additional distributional assumptions, such as the elliptical symmetry \citep{MZ2014,chen2018robust}.
Estimators that naively use the sample covariance will inherit its sensitivity to outliers. 
Recall the definition of $ \hat\bSigma_{2}^{\TT} $ in \eqref{U.cov.estimator}; a simple idea is to replace the sample covariance by a spectrum-wise truncated estimator $ \hat\bSigma_{2}^{\TT} $ in the first step, to which the tapering procedure can be applied. However, such an estimator is not optimal: indeed, the analysis of a tapering estimator requires each small  principal submatrix of the initial estimator to be highly concentrated around the population object. 
Suppose that we truncate the $\ell_2$-norm of the entire  vector $\bY_i$ at a level $\tau$ scaling with ${\rm tr}(\bSigma)$. For each subset $J \subseteq \{1 , \ldots, d\}$, let $\bY_{iJ}$ be the subvector of $\bY_i$ indexed by $J$. Then the corresponding principal submatrix 
$$
\frac{1}{N} \SUM  \psi_\tau\bigg( \frac{ 1}{2} \|   \bY_i \|_2^2\bigg) \frac{\bY_{iJ} \bY_{i J}^\T}{\|   \bY_i \|_2^2 } 
$$
is not an ideal robust estimator of $\bSigma_{JJ}$ because the ``optimal'' $\tau$ in this case should scale with ${\rm tr}(\bSigma_{JJ})$ rather than ${\rm tr}(\bSigma)$. This explains why directly applying the tapering procedure to $\hat\bSigma_{2}^{\TT} $ is not ideal.

In what follows, we propose an optimal robust covariance estimator based on the spectrum-wise truncation technique introduced in Section \ref{sec2.2}. First we introduce some notation. Let $\bZ_{i}^{(p,q)} = (Y_{i , p},Y_{i, p+1},\ldots,Y_{i , p+q-1})^\T$ be a subvector of $\bY_i$ given in \eqref{paired.data}. 
Accordingly,  define the truncated  estimator of the principal submatrix of $\bSigma$ as
\begin{equation}
\hat\bSigma_{2}^{(p,q), \TT} = \hat \bSigma^{(p,q), \TT}_{2 }(\tau)  =\frac{ 1}{N} \SUM \psi_\tau( \bZ^{(p,q)}_i \bZ^{(p,q)\T}_i/2 ), \label{principal.estimator}
\end{equation}
where $\tau$ is as in (\ref{tau.cond}) with $d$ replaced by $q$ and $v = \| \EE  \{\bZ_{1}^{(p,q)} \bZ_{1}^{(p,q)  \T }\}^2 \|_2/4 $.
Moreover, we define an operator that embeds a small matrix into a large zero matrix: for a $q \times q$ matrix $\Ab = (a_{k \ell })_{1\leq k , \ell \leq q}$, define the $d \times d$ matrix $\extend{d}{p}(\Ab) = (b_{k \ell  })_{1\leq k, \ell \leq d}$, where $p$ indicates the location and
\begin{equation*}
b_{k \ell } = 
\begin{cases}
a_{k-p+1, \ell -p+1} & \text{ if } p \leq  k , \ell  \leq  p + q -1, \\
0 & ~\mbox{otherwise} .
\end{cases}
\end{equation*}
Our final robust covariance estimator is then defined as 
\begin{equation} \label{eq: bandable estimator}
\hat\bSigma_{q} =  \sum_{j=-1}^{\ceil{(d-1)/q}} \extend{d}{j q +1}(\hat\bSigma_{2}^{(j q+1,2q) ,\TT}) - \sum_{j=0}^{\ceil{(d-1)/q}} \extend{d}{j q+1}(\hat\bSigma_{2}^{(j q +1, q), \TT}).
\end{equation} 

The idea behind the construction above is that a bandable covariance matrix in $\mathcal{F}_{\alpha }\left( M_{0},M\right)$ can be approximately decomposed into several  principal submatrices of size $2q$ and $q$, as shown in Figure \ref{Fig_bandable}. 
Using spectrum-wise truncated estimators $\hat\bSigma_{2}^{(p,q),\TT}$ and $\hat\bSigma_{2}^{(p,2q),\TT}$ to estimate the corresponding principal submatrices in this decomposition leads to the proposed estimator $\hat\bSigma_{q}$. 
\begin{figure}[t]
	\centering
	\includegraphics[width=5 in]{./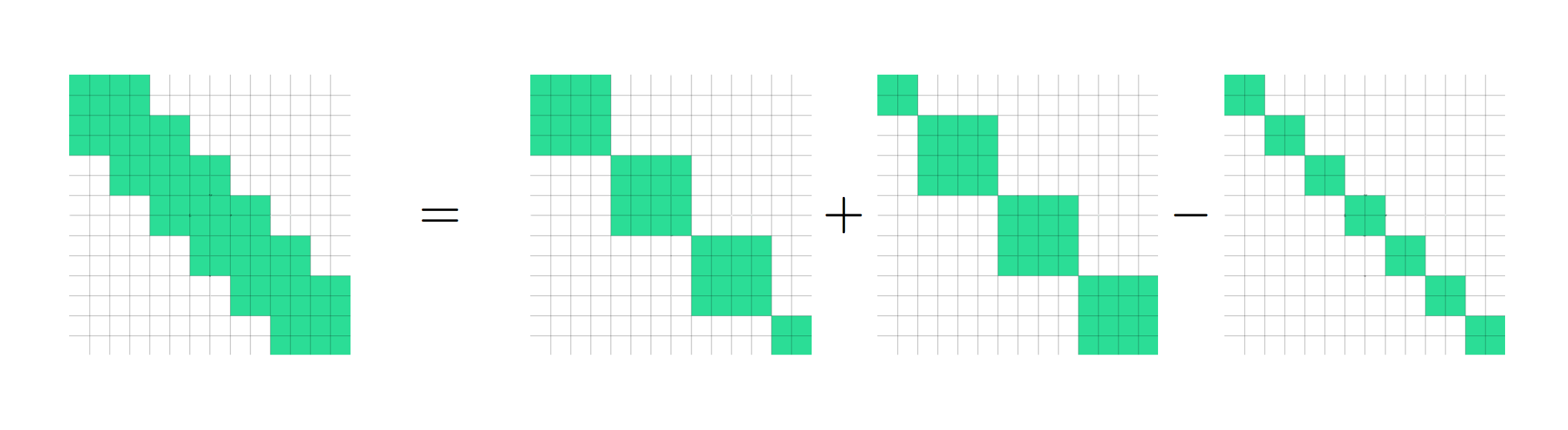}
	\caption{Motivation of our estimator of bandable covariance matrices}
	\label{Fig_bandable}
\end{figure}

This construction is different from the literature where the banding or tapering procedure is directly applied to an initial estimator, say the sample covariance matrix \citep{BL2008a, CZZ2010}. It is worth mentioning that a similar robust estimator can be constructed following the idea of \cite{CZZ2010}, which differs from our proposal. 
Computationally, our estimator evaluates as many as $O(d/q)$  matrices of size $q \times q$ (or $2q \times 2q$), while the method developed by  \cite{CZZ2010} computes as many as $O(d)$ such matrices.

The following result shows that the  estimator defined in \eqref{eq: bandable estimator} achieves near-optimal rate of convergence under the spectral norm as long as $\bX$ has uniformly bounded fourth moments. 
The proof is deferred to the supplementary material. 

\begin{theorem}\label{thm:band}
Assume that $\bSigma \in \mathcal{F}_{\alpha }\left( M_{0},M\right)$ and $\sup_{\bu\in \mathbb{S}^{d-1}} {\rm kurt}(\bu^\T \bX)\leq M_1$ for some constant $M_1>0$. For any $c_0>0$,  take $\delta=( n^{c_0} d)^{-1}$ in the definition of $\tau$ for constructing principal submatrix estimators $\hat\bSigma_{2}^{(p,q), \TT}$ in (\ref{principal.estimator}). 
Then, with a bandwidth $q \asymp  \{ n/ \log (nd)\}^{1/(2\alpha + 1)} \wedge  d$, the estimator $\hat\bSigma_{q}$ defined in (\ref{eq: bandable estimator}) is such that with probability at least $1-2 n^{-c_0}$, 
	\$
	\|\hat\bSigma_q -\bSigma\|_{2}
	 \leq 	C   \min \Bigg\{ \bigg(\frac{\log (n d )}{n} \bigg)^{  \alpha/( 2\alpha+1 ) }, \,  \sqrt{  \frac{d \cdot \log(nd)}{n}} \,   \Bigg\} ,
	\$
	where $C >0$ is a constant depending only on $M, M_0, M_1, c_0$.
\end{theorem}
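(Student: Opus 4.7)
The plan is to follow the classical bias-variance decomposition used in the analysis of tapering-type estimators, with the sample covariance replaced throughout by the spectrum-wise truncated block estimators. Concretely, first I would introduce the population analogue
\[
\bSigma^{\rm band} = \sum_{j=-1}^{\lceil (d-1)/q\rceil} \extend{d}{jq+1}\bigl(\bSigma^{(jq+1,2q)}\bigr) - \sum_{j=0}^{\lceil(d-1)/q\rceil}\extend{d}{jq+1}\bigl(\bSigma^{(jq+1,q)}\bigr),
\]
where $\bSigma^{(p,q)}$ denotes the $q\times q$ principal submatrix of $\bSigma$ starting at position $p$. A short combinatorial check shows that $\bSigma^{\rm band}$ equals $\bSigma$ on the band $\{(k,\ell):|k-\ell|\le q\}$ and is zero outside, so the approximation error is
\[
\|\bSigma-\bSigma^{\rm band}\|_2 \;\le\; \max_{\ell}\sum_{k:|k-\ell|>q}|\sigma_{k\ell}| \;\le\; M/q^{\alpha},
\]
by the defining property of $\mathcal{F}_\alpha(M_0,M)$ and the standard operator-norm-by-row-sum bound. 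This is the bias term.

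For the stochastic (variance) term $\|\hat\bSigma_q - \bSigma^{\rm band}\|_2$, I would exploit the fact that the $2q$-blocks used in the first sum split into two families indexed by even and odd $j$, and within each family the supports of $\extend{d}{jq+1}(\cdot)$ are pairwise disjoint; the $q$-blocks in the second sum are already disjoint. For a block-diagonal assembly, the spectral norm equals the maximum block spectral norm, so
\[
\|\hat\bSigma_q - \bSigma^{\rm band}\|_2 \;\le\; 4\max_{(p,q')\in\mathcal{B}}\|\hat\bSigma_2^{(p,q'),\TT}-\bSigma^{(p,q')}\|_2,
\]
where $\mathcal{B}$ is the (at most $O(d/q)$) collection of block positions of sizes $q$ or $2q$. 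Each term is controlled by Theorem~\ref{thm:u-type} applied to the subvector $\bZ_1^{(p,q')}$: the associated parameter $v_{(p,q')}^2$ satisfies, via the kurtosis bound $M_1$ carrying over to any subvector and the identity preceding Corollary~\ref{coro:spectrum}, $v_{(p,q')}^2\lesssim M_0^2 M_1\cdot q$ (since $\|\bSigma^{(p,q')}\|_2\le M_0$ and $\operatorname{tr}(\bSigma^{(p,q')})\le 2qM_0$). Applying the theorem with $\delta=(n^{c_0}d)^{-1}$ to one block and taking a union bound over the $O(d/q)$ blocks (absorbed into $\log(nd)$) gives, with probability at least $1-2n^{-c_0}$,
\[
\max_{(p,q')\in\mathcal{B}}\|\hat\bSigma_2^{(p,q'),\TT}-\bSigma^{(p,q')}\|_2 \;\lesssim\; \sqrt{\frac{q\,\log(nd)}{n}}.
\]

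Combining bias and variance yields $\|\hat\bSigma_q-\bSigma\|_2\lesssim M q^{-\alpha}+\sqrt{q\log(nd)/n}$; balancing by the prescribed choice $q\asymp \{n/\log(nd)\}^{1/(2\alpha+1)}$ gives the first term of the stated minimum. The second term $\sqrt{d\log(nd)/n}$ is obtained in the regime $q=d$, where the construction collapses to a single application of Theorem~\ref{thm:u-type} to the full vector, using $r(\bSigma)\le d$; taking the smaller bandwidth between this and the tapering choice covers both regimes simultaneously.

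The main obstacle I anticipate is the bookkeeping needed to turn the block-wise concentration into a clean operator-norm bound on $\hat\bSigma_q-\bSigma^{\rm band}$. The even/odd partitioning trick makes each family block-diagonal after embedding, but one must verify that the "extra" boundary blocks at $j=-1$ and $j=\lceil(d-1)/q\rceil$ (which may be truncated or protrude past $d$) are handled, and that the size-$q$ subtractions exactly cancel the double-counted diagonal $q\times q$ regions of the $2q$-blocks so that what remains is the banded matrix. Once this algebraic identity and the $\|\cdot\|_2\le 4\max\|\cdot\|_2$ control are in place, the rest reduces to inserting the bound from Theorem~\ref{thm:u-type}, kurtosis-based bound on $v_{(p,q')}$, and a union bound, which are essentially routine.
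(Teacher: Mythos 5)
Your proposal is correct and follows essentially the same route as the paper: same population banded approximation $\bSigma_q$, same row-sum bound $M/q^\alpha$ for the bias, same even/odd splitting of the $2q$-blocks to obtain block-diagonal families, the same per-block application of Theorem~\ref{thm:u-type} with the kurtosis-based bound $v_{(p,q')}^2\lesssim M_0^2 M_1 q$, the same union bound absorbed into the $\log(nd)$ factor, and the same balancing of bias against variance. The only cosmetic difference is your factor of $4$ rather than the paper's $3$ in the block-diagonal reduction (the $q$-blocks in the subtraction are already disjoint and need not be split into two parities), which does not affect the result.
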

According to the minimax lower bounds established by \cite{CZZ2010}, up to a logarithmic term our robust estimator achieves the optimal rate of convergence that is enjoyed by the tapering estimator when the data are sub-Gaussian. The logarithmic term is not easy to remove: for instance, one cannot improve the rate of convergence by replacing each principal submatrix estimator in (\ref{principal.estimator}) with the theoretically more refined but computationally intractable estimator proposed by \cite{MZ2018} (see Remark \ref{rem: MZ estimator}). 

\begin{remark}
To achieve the near-optimal convergence rate shown in Theorem \ref{thm:band},  the ideal choice of the bandwidth $q$ depends on the knowledge of $\alpha$. A fully data-driven and adaptive estimator can be obtained by selecting the optimal bandwidth via Lepski's method \citep{Lepskii1997}. We refer to \cite{LR2018} for constructing a similar adaptive estimator for a precision matrix with bandable Cholesky factor.  
\end{remark}

\subsection{Low-rank covariance matrix estimation}

In this section, we consider a structured model where $\bSigma = {\rm cov}(\bX)$ is approximately low-rank.
Using the trace-norm as a convex relaxation of the rank, our estimator is the solution to the following trace-norm penalized optimization program:
\#
\hat\bSigma_{2, \gamma}^{\TT} \in   \argmin_{\Ab \in \cS_d }  \bigg\{ \frac{1}{2} \| \Ab - \hat\bSigma_{2}^{\TT} \|_{\F}^2 + \gamma \| \Ab \|_{\tr}    \bigg\} ,  \label{lowrank.U}
\#
where $\cS_d$ denotes the set of $d\times d$ positive semi-definite matrices, $\gamma>0$ is a regularization parameter and $\hat\bSigma_{2}^{\TT}$, defined in \eqref{U.cov.estimator}, serves as a pilot estimator. This trace-penalized method was first proposed by \cite{L2014} with the initial estimator taken to be the sample covariance matrix, and later studied by \cite{M2016} using a different initial estimator. In fact, given the initial estimator $\hat\bSigma_{2}^{\TT}$, the estimator given in \eqref{lowrank.U} has the following closed-form expression \citep{L2014}:
\#
\hat\bSigma_{2 , \gamma}^{\TT} = \sum_{k=1}^d \max\{  \lambda_k( \hat\bSigma_{2}^{\TT} ) - \gamma, 0 \} \bv_k( \hat\bSigma_{2}^{\TT} ) \bv_k(\hat\bSigma_{2}^{\TT} )^\T , \label{lowrank.U2}
\#
where $\lambda_1(\hat\bSigma_{2}^{\TT}) \geq \cdots \geq \lambda_d(\hat\bSigma_{2}^{\TT})$ are the eigenvalues of $\hat\bSigma_{2}^{\TT}$ in an non-increasing order and $\bv_1(\hat\bSigma_{2}^{\TT}),\ldots, \bv_d(\hat\bSigma_{2}^{\TT})$ are the associated orthonormal eigenvectors. The following theorem provides a deviation bound for $\hat\bSigma_{2 , \gamma}^{\TT} $ under the Frobenius norm. The proof follows directly from Theorem \ref{thm:u-type} and Theorem 1 of \cite{L2014}, and therefore is omitted.

\begin{theorem} \label{thm:lowrank.U-type}
	For any $t>0$ and $v>0$ satisfying \eqref{v2.def}, let 
	$$
	\tau =  v \sqrt{ \frac{m}{\log(2d)+t} } ~\mbox{ and }~ \gamma \geq 2 v \sqrt{\frac{\log(2d) + t}{m}} . 
	$$
	Then  with probability at least $1 - e^{-t}$, the trace-penalized estimator $\hat\bSigma_{2,\gamma}^{\TT}$ satisfies
	\#
	\| \hat\bSigma_{2,\gamma}^{\TT}  - \bSigma \|_{\F}^2 \leq 
	\inf_{\Ab \in \cS_d }  \Big[ \| \bSigma - \Ab \|_{\F}^2 + \min\{ 4 \gamma \| \Ab \|_{\tr} , 3\gamma^2 {\rm rank}(\Ab) \} \Big]  \nn \\
	\mbox{ and }~  \| \hat\bSigma_{2,\gamma}^{\TT}  - \bSigma \|_2 \leq 2 \gamma. \nn
	\#
	In particular, if ${\rm rank}(\bSigma) \leq  r_0$, then with probability at least $1- e^{-t}$,
	\#
	   \| \hat\bSigma_{2,\gamma}^{\TT}  - \bSigma \|_{\F}^2  \leq   \min\{ 4 \| \bSigma \|_2 \gamma   , 3  \gamma^2 \} r_0  .
	\#
\end{theorem}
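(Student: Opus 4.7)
The plan is to assemble the claim from two ingredients: Theorem~\ref{thm:u-type}, which controls the operator-norm error of the pilot estimator $\hat\bSigma_2^{\TT}$ under the proposed choice of $\tau$, and Theorem~1 of \cite{L2014}, a purely deterministic oracle inequality for the trace-penalized estimator $\hat\bSigma_{2,\gamma}^{\TT}$ defined by \eqref{lowrank.U}--\eqref{lowrank.U2}.

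First, I apply Theorem~\ref{thm:u-type} with $\delta=e^{-t}$ and $\tau=v\sqrt{m/(\log(2d)+t)}$. This produces an event $\mathcal{E}$ with $\PP(\mathcal{E})\geq 1-e^{-t}$ on which
\[
\|\hat\bSigma_2^{\TT}-\bSigma\|_2 \;\leq\; 2v\sqrt{\frac{\log(2d)+t}{m}} \;\leq\; \gamma
\]
by the standing lower bound on $\gamma$. All subsequent arguments are carried out deterministically on $\mathcal{E}$, so that the tuning parameter $\gamma$ dominates the pilot's operator-norm error.

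Second, I derive the spectral bound $\|\hat\bSigma_{2,\gamma}^{\TT}-\bSigma\|_2\leq 2\gamma$ directly from the closed form \eqref{lowrank.U2}. That representation shows $\hat\bSigma_{2,\gamma}^{\TT}$ is obtained from $\hat\bSigma_2^{\TT}$ by soft-thresholding the eigenvalues at level $\gamma$ and truncating negatives to zero, so the displacement of the $k$-th eigenvalue equals $\gamma$ when $\lambda_k(\hat\bSigma_2^{\TT})\geq\gamma$ and equals $|\lambda_k(\hat\bSigma_2^{\TT})|$ otherwise. On $\mathcal{E}$, Weyl's inequality together with $\bSigma\succeq 0$ yields $\lambda_k(\hat\bSigma_2^{\TT})\geq \lambda_{\min}(\bSigma)-\gamma\geq -\gamma$, so every displacement is bounded by $\gamma$. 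Hence $\|\hat\bSigma_{2,\gamma}^{\TT}-\hat\bSigma_2^{\TT}\|_2\leq\gamma$, and the triangle inequality closes this step.

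Third, I invoke Theorem~1 of \cite{L2014}, a deterministic oracle inequality for the trace-penalized estimator of a positive semi-definite covariance: whenever $\gamma$ dominates (an appropriate multiple of) the operator-norm error of the initial estimator --- which holds on $\mathcal{E}$ by the first step --- it furnishes exactly the Frobenius oracle inequality stated in the theorem. The special case follows by taking $\Ab=\bSigma$ in the infimum, noting $\|\bSigma-\bSigma\|_{\F}=0$, and using $\|\bSigma\|_{\tr}\leq r_0\|\bSigma\|_2$ when $\mathrm{rank}(\bSigma)\leq r_0$. The only mild subtlety is the spectral step, where naive soft-thresholding intuition would miss the possibility of very negative pilot eigenvalues; this is ruled out by working on $\mathcal{E}$ and using positivity of $\bSigma$. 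The remainder is black-box bookkeeping built on the two cited ingredients.
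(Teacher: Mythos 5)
Your proof is correct and follows essentially the same route as the paper, which simply cites Theorem~\ref{thm:u-type} together with Theorem~1 of \cite{L2014} and omits the details; you apply Theorem~\ref{thm:u-type} with $\delta = e^{-t}$ to put $\gamma$ above the operator-norm error of the pilot on a $1-e^{-t}$ event, invoke Lounici's deterministic oracle inequality on that event, and specialize to $\Ab=\bSigma$ using $\|\bSigma\|_{\tr}\leq r_0\|\bSigma\|_2$. The one place you go slightly beyond the paper's ``follows directly'' is the self-contained derivation of $\|\hat\bSigma_{2,\gamma}^{\TT}-\bSigma\|_2\leq 2\gamma$ from the closed form \eqref{lowrank.U2} via eigenvalue soft-thresholding, Weyl's inequality, and $\bSigma\succeq 0$, which is a clean way to see the spectral bound without unpacking Lounici's argument.
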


\subsection{Sparse precision matrix estimation}

Our third example is related to sparse precision matrix estimation in high dimensions. Recently, \cite{ABFL2018} showed that minimax optimality is achievable within a larger class of distributions if the sample covariance matrix is replaced by a robust pilot estimator, and also provided a unified theory for covariance and precision matrix estimation based on general pilot estimators. Specifically, \cite{ABFL2018} robustifed the CLIME estimator \citep{CLL2011} using three different pilot estimators: adaptive Huber, median-of-means and rank-based estimators.
Based on the element-wise  truncation procedure and the difference of trace (D-trace) loss proposed by \cite{zhang2014sparse}, we  further consider a robust method for estimating the precision matrix $\bTheta^* = \bSigma^{-1}$ under sparsity, which represents a useful complement to the methods developed by  \cite{ABFL2018}.

The advantage of using the D-trace loss is that it automatically gives rise to a symmetric solution. Specifically, using the element-wise truncated estimator $\hat{\bSigma}_{1}^{\TT}  = \hat \bSigma_1^{\TT}(\bGamma) $ in \eqref{element-wise.censored.est} as an initial estimate of $\bSigma$, we propose to solve 
\#\label{eq:mp}
\widehat\bTheta  \in \argmin_{\bTheta \in \RR^{d\times d}} \bigg\{  \underbrace{  \frac{1}{2} \big\langle\bTheta^2, \hat{\bSigma}_{1}^{\TT}   \big\rangle-\tr(\bTheta) }_{\cL(\bTheta )}+ \lambda \|\bTheta \|_{\ell_1} \bigg\}.
\#
where $ \|\bTheta \|_{\ell_1} =  \sum_{ k\neq \ell} |\Theta_{k \ell }|$ for $\bTheta= (\Theta_{k\ell})_{1\leq k,\ell \leq d}$. For simplicity, we write $\cL(\bTheta)= \langle\bTheta^2, \hat{\bSigma}_{1}^{\TT} \rangle-\tr(\bTheta)$.  \citet{zhang2014sparse} imposed a positive definiteness constraint on $\bTheta$, and proposed an alternating direction method of multipliers (ADMM) algorithm to solve the constrained D-trace loss minimization. However, with the positive definiteness  constraint, the ADMM algorithm at each iteration computes the singular value decomposition of a $d\times d$ matrix, and therefore is computationally intensive for large-scale data.  In \eqref{eq:mp}, we impose no constraint on $\bTheta$ primarily for computational simplicity. 

 Before presenting the main theorem, we need to introduce an assumption on the restricted eigenvalue of the Hessian matrix of $\cL$. First, note that the Hessian of $\cL$ can be written as 
\$
\Hb_{\bGamma}=  \frac{1}{2} ( \Ib\otimes \widehat\bSigma_1^\TT +\widehat\bSigma_1^\TT\otimes \Ib ) ,
\$
where $\bGamma$ is the tuning parameter matrix in \eqref{element-wise.censored.est}.
For matrices $\Ab, \Bb\in \RR^{d^2 \times d^2}$, we define $\langle\Ab, \Ab\rangle_\Bb=\vecc (\Ab)^\T\,\Bb\,\vecc(\Ab)$, where $\vecc(\Ab)$ designates the $d^2$-dimensional vector concatenating the columns of $\Ab$.  Let $\cS=\supp(\bTheta^* ) \subseteq \{ 1,\ldots, d\}^2$, the support set of $\bTheta^*$. 
\begin{definition} (Restricted Eigenvalue for Matrices) \label{def:restricted eigenvalue}
For any $\xi >0$ and $m\geq 1$, we define the maximal and minimal restricted eigenvalues of the Hessian matrix $\Hb_{\bGamma}$  as
\$
\kappa_{-} (\bGamma, \xi,m) &=  \underset{\Wb}{\inf} \bigg\{
\frac{\langle\Wb, \Wb\rangle_{\Hb_{\bGamma}}}{\|\Wb\|_{\F}^2} :\Wb \in \RR^{d\times d}, \Wb \ne 0,   \exists J \mbox{ such that } \cS \subseteq J, \\
& \quad\qquad\qquad\qquad \qquad |J|\le m,\, \|\Wb_{J^{\rm c}}\|_{\ell_1} \le \xi  \| \Wb_{J}  \|_{\ell_1}
\bigg\};\\
\kappa_{+} (\bGamma, \xi,m) &=  \underset{\Wb}{\sup} \bigg\{
\frac{\langle\Wb, \Wb\rangle_{\Hb_{\bGamma}}}{\|\Wb\|_{\F}^2} :\Wb \in \RR^{d\times d}, \Wb \ne 0, \exists J \mbox{ such that } \cS \subseteq J, \\
& \quad\qquad\qquad\qquad \qquad  |J|\le m, \, \|\Wb_{J^{\rm c}}\|_{\ell_1} \le \xi   \| \Wb_{J}  \|_{\ell_1}\bigg\}.
\$
\end{definition}

\begin{cond}(Restricted Eigenvalue Condition)  \label{REC}
We say restricted eigenvalue condition with $(\bGamma, 3,k)$ holds  if $0<\kappa_-=\kappa_-(\bGamma, 3,k)\leq\kappa_+(\bGamma, 3,k)= \kappa_+<\infty$. 
\end{cond}

Condition \ref{REC} is a form of the localized restricted eigenvalue condition \citep{fan2015tac}. Moreover, we assume that the true precision matrix $\bTheta^*$ lies in the following class of matrices:
\$
\cU(s,M)&=\bigg\{\bOmega\in \RR^{d\times d}: \bOmega=\bOmega^\T,  \bOmega\succ 0, \|\bOmega\|_1\leq M, \, \sum_{k, \ell}  I(\Omega_{k \ell }\neq0)\leq s \bigg\}.
\$
A similar class of precision matrices has been studied in the literature; see, for example, \cite{zhang2014sparse}, \cite{CRZ2016} and \cite{ sun2018graphical}.  Recall the definition of $\Vb$ in Theorem~\ref{thm:element-wise}. 
We are ready to present the main result, with the proof deferred to the supplementary material.

\begin{theorem}\label{thm:spre}
Assume that $\bTheta^* =\bSigma^{-1} \in \cU(s,M)$.
Let $\bGamma \in \RR^{d\times d}$ be as in Theorem \ref{thm:element-wise} and let $\lambda$ satisfy
\$
\lambda =  4 C \|\Vb \|_{\max}  \sqrt{\frac{2\log d + \log \delta^{-1} }{\lfloor n/2\rfloor }} ~\mbox{ for some } C \geq  M.
\$
Assume  Condition~\ref{REC} is fulfilled with $k=s$ and $\bGamma$ specified above. Then with probability at least $1-2\delta$,  we have 
\$
	\|\hbP-\bTheta^* \|_{\text F}
	&	\leq  6\, C \kappa_-^{-1} \|\Vb \|_{\max} \,s^{1/2} \sqrt{\frac{2 \log d + \log \delta^{-1} }{\lfloor n/2\rfloor }}.
\$
\end{theorem}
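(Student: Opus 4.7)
The plan is to follow the standard high-dimensional $M$-estimation strategy (in the spirit of the D-trace analysis of \citet{zhang2014sparse} and the general framework of decomposable regularizers): use Theorem \ref{thm:element-wise} to control the score $\nabla \cL(\bTheta^*)$, derive a basic inequality from optimality of $\widehat\bTheta$, establish a cone condition on the error $\widehat\Delta = \widehat\bTheta - \bTheta^*$, and then invoke Condition \ref{REC} with $k = s$ and $\xi = 3$ to conclude.

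First I would control the score. Since $\cL$ is quadratic in $\bTheta$ with Hessian $\Hb_\bGamma$, a direct computation yields
\[
\nabla \cL(\bTheta^*) = \tfrac{1}{2}\bigl(\bTheta^*(\hat\bSigma_1^{\TT} - \bSigma) + (\hat\bSigma_1^{\TT} - \bSigma)\bTheta^*\bigr),
\]
using $\bTheta^*\bSigma = \Ib$. The bound $\|\nabla \cL(\bTheta^*)\|_{\max} \leq \|\bTheta^*\|_1 \|\hat\bSigma_1^{\TT} - \bSigma\|_{\max} \leq M\|\hat\bSigma_1^{\TT} - \bSigma\|_{\max}$ combined with Theorem \ref{thm:element-wise} gives, on an event $\mathcal{E}$ of probability at least $1 - 2\delta$,
\[
\|\nabla \cL(\bTheta^*)\|_{\max} \leq 2M\|\Vb\|_{\max}\sqrt{(2\log d + \log\delta^{-1})/\lfloor n/2\rfloor} \leq \lambda/2,
\]
where the last inequality uses $C \geq M$ in the definition of $\lambda$. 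Next, the optimality of $\widehat\bTheta$ gives $\cL(\widehat\bTheta) + \lambda\|\widehat\bTheta\|_{\ell_1} \leq \cL(\bTheta^*) + \lambda\|\bTheta^*\|_{\ell_1}$, and exploiting the fact that $\cL$ is an exact quadratic produces the basic inequality
\[
\tfrac{1}{2}\langle \widehat\Delta, \widehat\Delta\rangle_{\Hb_\bGamma} \leq -\langle \nabla \cL(\bTheta^*), \widehat\Delta\rangle + \lambda\bigl(\|\bTheta^*\|_{\ell_1} - \|\widehat\bTheta\|_{\ell_1}\bigr).
\]
On $\mathcal{E}$, Hölder's inequality bounds $|\langle \nabla \cL(\bTheta^*), \widehat\Delta\rangle| \leq (\lambda/2)\|\widehat\Delta\|_{\ell_1}$ (extending the seminorm $\|\cdot\|_{\ell_1}$ to include diagonal entries and observing that the diagonal contribution to the score can be absorbed).

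Splitting $\widehat\Delta = \widehat\Delta_\cS + \widehat\Delta_{\cS^c}$ and using the standard triangle-inequality manipulation $\|\bTheta^*\|_{\ell_1} - \|\widehat\bTheta\|_{\ell_1} \leq \|\widehat\Delta_\cS\|_{\ell_1} - \|\widehat\Delta_{\cS^c}\|_{\ell_1}$, the basic inequality rearranges to $\|\widehat\Delta_{\cS^c}\|_{\ell_1} \leq 3\|\widehat\Delta_\cS\|_{\ell_1}$, which places $\widehat\Delta$ in the cone for which Condition \ref{REC} is assumed. Applying the lower restricted eigenvalue then yields
\[
\kappa_- \|\widehat\Delta\|_F^2 \leq \langle \widehat\Delta, \widehat\Delta\rangle_{\Hb_\bGamma} \leq 3\lambda \|\widehat\Delta_\cS\|_{\ell_1} \leq 3\lambda \sqrt{s}\,\|\widehat\Delta_\cS\|_F \leq 3\lambda\sqrt{s}\,\|\widehat\Delta\|_F,
\]
so $\|\widehat\Delta\|_F \leq 3\lambda\sqrt{s}/\kappa_- = 6C\kappa_-^{-1}\|\Vb\|_{\max}\sqrt{s(2\log d + \log\delta^{-1})/\lfloor n/2\rfloor}$, which is exactly the asserted bound.

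The main technical nuisance is that the penalty $\|\cdot\|_{\ell_1}$ acts only on off-diagonal entries, whereas the support $\cS$ and the score inequality naturally involve all entries; this asymmetry requires care in the decomposability step (for instance, augmenting $\cS$ by the diagonal, which costs at most a factor absorbed in constants, or handling diagonal indices separately since $\widehat\bTheta$ is unconstrained there and the corresponding first-order condition forces $\nabla\cL(\widehat\bTheta)_{kk} = 0$). A secondary subtlety is that Condition \ref{REC} is imposed directly on the \emph{random} Hessian $\Hb_\bGamma$ rather than on its population counterpart $\frac{1}{2}(\Ib\otimes\bSigma + \bSigma\otimes \Ib)$, so no additional concentration argument is needed to pass from the empirical to the population Hessian — the statement of Theorem \ref{thm:spre} builds that transfer into the hypothesis.
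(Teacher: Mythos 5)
Your overall strategy (bound the score via Theorem~\ref{thm:element-wise}, derive a cone condition, invoke Condition~\ref{REC}) matches the paper's, and the score bound $\|\nabla\cL(\bTheta^*)\|_{\max} \leq \lambda/2$ is identical. However, there is a genuine gap at the last step: your argument does not actually produce the stated constant.

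You use the \emph{basic inequality from suboptimality}, $\cL(\widehat\bTheta) + \lambda\|\widehat\bTheta\|_{\ell_1} \leq \cL(\bTheta^*) + \lambda\|\bTheta^*\|_{\ell_1}$, which for the quadratic loss gives
$\tfrac{1}{2}\langle\widehat\Delta,\widehat\Delta\rangle_{\Hb_\bGamma} \leq -\langle\nabla\cL(\bTheta^*),\widehat\Delta\rangle + \lambda(\|\bTheta^*\|_{\ell_1}-\|\widehat\bTheta\|_{\ell_1})$.
After the standard manipulations this yields $\langle\widehat\Delta,\widehat\Delta\rangle_{\Hb_\bGamma} \leq 3\lambda\sqrt{s}\,\|\widehat\Delta\|_F$, hence $\|\widehat\Delta\|_F \leq 3\lambda\sqrt{s}/\kappa_-$. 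With $\lambda = 4C\|\Vb\|_{\max}\sqrt{(2\log d + \log\delta^{-1})/\lfloor n/2\rfloor}$ this equals $12C\kappa_-^{-1}\|\Vb\|_{\max}\sqrt{s}\sqrt{(2\log d + \log\delta^{-1})/\lfloor n/2\rfloor}$, not $6C\cdots$; your claimed final equality $3\lambda\sqrt{s}/\kappa_- = 6C\cdots$ is an arithmetic slip (it would require $\lambda = 2C\|\Vb\|_{\max}\sqrt{\cdots}$). So as written, your proof establishes the theorem only with constant $12$ in place of $6$.

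The paper avoids this loss by starting instead from the KKT stationarity condition $\nabla\cL(\widehat\bTheta) + \lambda\widehat\bGamma = 0$ combined with the lower bound $\kappa_-\|\widehat\Delta\|_F^2 \leq \langle\nabla\cL(\widehat\bTheta)-\nabla\cL(\bTheta^*),\widehat\Delta\rangle$ (the symmetrized Bregman divergence, Lemma~\ref{lemma:b.4}). Because the quadratic term enters with coefficient $1$ rather than $\tfrac{1}{2}$ on the left-hand side, one obtains $\kappa_-\|\widehat\Delta\|_F \leq \tfrac{3}{2}\lambda\sqrt{s}$, which is exactly the factor of two needed. To repair your argument, replace the suboptimality inequality with the first-order condition (using that $\widehat\bTheta$ is a stationary point, not merely that it has no larger objective value), and the rest of your argument goes through. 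Your remark about the diagonal being excluded from $\|\cdot\|_{\ell_1}$ is a fair observation, but the paper glosses over it in the same way, so I would not count that against you.
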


\begin{remark}
The nonasymptotic probabilistic bound in Theorem~\ref{thm:spre} is established under the assumption that  Condition~\ref{REC} holds.
It can be shown that Condition~\ref{REC} is satisfied with high probability as long as the coordinates of $\bX$ have bounded fourth moments. The proof is based on  an argument  similar to that in the proof of Lemma 4 in the work of \cite{sun2017adaptive}, and thus is omitted here. 
\end{remark}

\section{Numerical study}\label{sec:5}

In this section, we assess the numerical performance of proposed tail-robust covariance estimators. We consider the  {\it element-wise truncated covariance estimator} $\hat{\bSigma}_{1}^{\TT}$ defined in \eqref{element-wise.censored.est}, the {\it spectrum-wise  truncated  covariance estimator} $\hat\bSigma_{2}^{\TT}$ defined in \eqref{U.cov.estimator}, the {\it Huber-type $M$-estimator} $\hat\bSigma_{1}^{\HH}$ given in \eqref{Huber1} and the {\it adaptive Huber $M$-estimator} $\hat\bSigma^{\HH}_3$ in Section \ref{sec:adHuber}.

Throughout this section, we let $\{\tau_{k\ell}\}_{1\leq k,\ell \leq d}=\tau$ for $\hat\bSigma_{1}^{\HH}$.  To compute $\hat\bSigma_{2}^{\TT}$ and $\hat\bSigma_{1}^{\HH}$,  the robustification parameter $\tau$ is selected by five-fold cross-validation. The robustification parameters $\{\tau_{k\ell}\}_{1\leq k,\ell \leq d}$ for $\hat{\bSigma}_{1}^{\TT}$ are  tuned by solving the equation \eqref{empirical.censored.moment.equation}, and thus is an adaptive elementwise-truncated estimator.
To implement the {\it adaptive Huber $M$-estimator} $\hat\bSigma^{\HH}_3$, we calibrate $\{\tau_{k\ell}\}_{1\leq k,\ell \leq d}$ and estimate $\{\sigma_{k\ell}\}_{1\leq k,\ell \leq d}$ simultaneously by solving the equation system \eqref{tau.M-view} as described in Algorithm \ref{algm_2}.

We first generate a data matrix $\Yb\in \mathbb{R}^{n\times d}$ with rows being i.i.d. vectors from a  distribution with mean ${\textbf 0}$ and covariance matrix $\Ib_d$.  We then rescaled the data and set $\Xb =\Yb \bSigma^{1/2}$ as the final data matrix, where $\bSigma \in \mathbb{R}^{d\times d}$ is a structured covariance matrix.  
We consider four distribution models outlined below:
\begin{itemize}
\item[(1)](Normal model). The rows of $\Yb$ are i.i.d. generated from the standard normal distribution.  
\item[(2)] (Student's $t$ model). $\Yb=\Zb/\sqrt{3}$, where the entries of $\Zb$ are i.i.d. with Student's distribution with 3 degrees of freedom. 
\item[(3)] (Pareto model). $\Yb=4\Zb/3$, where the entries of $\Zb$ are i.i.d. with Pareto distribution with shape parameter 3 and scale parameter 1.  
\item[(4)] (Log-normal model). $\Yb =\exp\{0.5+ \Zb\}/(e^3-e^2)$, where the entries of $\Zb$ are i.i.d. with standard normal distribution.
\end{itemize}

The covariance matrix $\bSigma$ has one of the following three structures:
\begin{itemize}
\item[(a)](Diagonal structure). $\bSigma=\Ib_d$;
\item[(b)](Equal correlation structure). $\sigma_{k \ell }=1$ for $k=\ell$ and $\sigma_{k \ell }=0.5$ when $k \neq \ell$;
\item[(c)](Power decay structure). $\sigma_{k \ell}=1$ for $k=\ell$ and $\sigma_{ k  \ell } = 0.5^{|k-\ell|}$ when $k  \neq  \ell$.
\end{itemize}

In each setting, we choose $(n, d)$ as $(50, 100)$, $(50, 200)$ and $(100, 200)$, and simulate 200 replications for each scenario. The performance is assessed by the relative mean error (RME) under spectral, max or Frobenius norm:
\begin{equation*}
\text{RME}= \frac{\sum_{i=1}^{200}\| \widehat{\bSigma}_i - \bSigma \|_{2,\max, \F}}{\sum_{i=1}^{200}\| \widetilde{\bSigma}_i - \bSigma \|_{2,\max, \F}} ,
\end{equation*}
where $\widehat{\bSigma}_i$ is the estimate of $\bSigma$ in the $i$th simulation using one of the four robust methods and $\widetilde{\bSigma}_i$ denotes the sample covariance estimate that serves as a benchmark. 
The smaller the RME is, the more improvement the robust method achieves.

Tables \ref{Tab_sim_1}--\ref{Tab_sim_3} summarize the simulation results, which indicate that  all the robust estimators outperform the sample covariance matrix by a visible margin when data are generated from a heavy-tailed or an  asymmetric distribution. On the other hand, the proposed estimators perform almost as good as the sample covariance matrix when the data follows a normal distribution, indicating high efficiency in this case. The performance of the four robust estimators are comparable in all scenarios: the {\it spectrum-wise  truncated  covariance estimator} $\hat\bSigma_{2}^{\TT}$ has the smallest RME under spectral norm, while the other three estimators perform better under max and Frobenius norms. This outcome is inline with our intuition discussed in Section \ref{sec:general}.
Furthermore, the computationally efficient {\it adaptive Huber $M$-estimator} $\hat\bSigma^{\HH}_3$ performs comparably as the {\it Huber-type $M$-estimator} $\hat\bSigma_{1}^{\HH}$ where the robustification parameters are chosen by cross-validation.

\begin{table}[htbp]
\small
\begin{center}
\caption{RME under diagonal structure.}
\label{Tab_sim_1}
\begin{tabular}{c|ccc|ccc|ccc|ccc}
\hline\hline
\multicolumn{13}{c}{$n=50, \ p=100$}
\\
\multicolumn{1}{c}{} & \multicolumn{3}{c}{Normal} & \multicolumn{3}{c}{$t_3$} & \multicolumn{3}{c}{Pareto} & \multicolumn{3}{c}{Log-normal}
\\\hline
 
 & 2 & max & F  & 2 & max & F  & 2 & max & F  & 2 & max & F 
\\
$\hat\bSigma_{1}^{\HH}$  & 0.97 & 0.95 & 0.98 & 0.37 & 0.39 & 0.65 
		   & 0.27 & 0.21 & 0.47 & 0.27 & 0.21 & 0.51 
\\
$\hat \bSigma^{\HH}_3$ & 0.97 & 0.90 & 0.96 & 0.37 & 0.36 & 0.59 
		   & 0.29 & 0.24 & 0.45 & 0.24 & 0.19 & 0.49 
\\
$\hat{\bSigma}_{1}^{\TT}$  & 0.97 & 0.91 & 0.96 & 0.40 & 0.38 & 0.62 
           & 0.27 & 0.23 & 0.42 & 0.25 & 0.18 & 0.50 
\\
$\hat\bSigma_{2}^{\TT}$    & 0.96 & 0.99 & 0.98 & 0.34 & 0.41 & 0.67 
            & 0.26 & 0.25 & 0.44 & 0.25 & 0.26 & 0.56 
\\\hline
\multicolumn{13}{c}{$n=50, \ p=200$}
\\
\multicolumn{1}{c}{} & \multicolumn{3}{c}{Normal} & \multicolumn{3}{c}{$t_3$} & \multicolumn{3}{c}{Pareto} & \multicolumn{3}{c}{Log-normal}
\\\hline
 
  & 2 & max & F & 2 & max & F  & 2 & max & F & 2 & max & F 
\\
$\hat\bSigma_{1}^{\HH}$ & 0.98 & 0.95 & 0.98 & 0.32 & 0.29 & 0.60 
		   & 0.29 & 0.23 & 0.41 & 0.24 & 0.20 & 0.43 
\\
$\hat \bSigma^{\HH}_3$ & 0.98 & 0.96 & 0.97 & 0.31 & 0.26 & 0.54 
		   & 0.27 & 0.20 & 0.42 & 0.24 & 0.19 & 0.38 
\\
$\hat{\bSigma}_{1}^{\TT}$  & 0.97 & 0.95 & 0.96 & 0.33 & 0.29 & 0.63 
           & 0.26 & 0.19 & 0.39 & 0.23 & 0.18 & 0.42 
\\
$\hat\bSigma_{2}^{\TT}$    & 0.95 & 0.98 & 0.95 & 0.31 & 0.33 & 0.65 
            & 0.24 & 0.26 & 0.48 & 0.22 & 0.23 & 0.48 
\\\hline
\multicolumn{13}{c}{$n=100, \ p=200$}
\\
\multicolumn{1}{c}{} & \multicolumn{3}{c}{Normal} & \multicolumn{3}{c}{$t_3$} & \multicolumn{3}{c}{Pareto} & \multicolumn{3}{c}{Log-normal}
\\\hline

  & 2 & max & F & 2 & max & F  & 2 & max & F & 2 & max & F 
\\
$\hat\bSigma_{1}^{\HH}$ & 0.99 & 0.98 & 0.99 & 0.40 & 0.47 & 0.58 
		   & 0.46 & 0.49 & 0.51 & 0.32 & 0.20 & 0.47
\\
$\hat \bSigma^{\HH}_3$ & 0.95 & 0.99 & 0.98 & 0.39 & 0.47 & 0.59 
		   & 0.45 & 0.45 & 0.48 & 0.28 & 0.21 & 0.49
\\
$\hat{\bSigma}_{1}^{\TT}$  & 0.97 & 0.94 & 0.97 & 0.38 & 0.45 & 0.57 
           & 0.46 & 0.49 & 0.51 & 0.26 & 0.27 & 0.47
\\
$\hat\bSigma_{2}^{\TT}$    & 0.94 & 1.01 & 0.95 & 0.33 & 0.51 & 0.64 
            & 0.42 & 0.53 & 0.61 & 0.28 & 0.27 & 0.58
\\\hline
\end{tabular}
\end{center}
\begin{singlespace}
\begin{center}
\begin{minipage}{30pc}
{\it 
Mean relative errors of the the four robust estimators $\hat\bSigma_{1}^{\HH}$, $\hat \bSigma^{\HH}_3$, $\hat{\bSigma}_{1}^{\TT}$ and $\hat\bSigma_{2}^{\TT}$ over 200 replications when the true covariance matrix has a diagonal structure. 2, max and F denote the spectral, max and Frobenius norms, respectively.  }
\end{minipage}
\end{center}
\end{singlespace}
\end{table}


\begin{table}[htbp]
\small
\begin{center}
\caption{RME under equal correlation structure. }
\label{Tab_sim_2}
\begin{tabular}{c|ccc|ccc|ccc|ccc}
\hline\hline
\multicolumn{13}{c}{$n=50, \ p=100$}
\\
\multicolumn{1}{c}{} & \multicolumn{3}{c}{Normal} & \multicolumn{3}{c}{$t_3$} & \multicolumn{3}{c}{Pareto} & \multicolumn{3}{c}{Log-normal}
\\\hline
 
 & 2 & max & F & 2 & max & F  & 2 & max & F  & 2 & max & F 
\\
$\hat\bSigma_{1}^{\HH}$ & 0.97 & 0.94 & 0.97 & 0.68 & 0.12 & 0.68 
		   & 0.68 & 0.23 & 0.59 & 0.58 & 0.27 & 0.46 
\\
$\hat \bSigma^{\HH}_3$ & 0.96 & 0.95 & 0.96 & 0.69 & 0.15 & 0.64 
		   & 0.62 & 0.21 & 0.59 & 0.52 & 0.27 & 0.44 
\\
$\hat{\bSigma}_{1}^{\TT}$  & 0.97 & 0.96 & 0.97 & 0.67 & 0.14 & 0.67 
           & 0.64 & 0.22 & 0.57 & 0.59 & 0.28 & 0.47 
\\
$\hat\bSigma_{2}^{\TT}$    & 0.95 & 0.99 & 1.02 & 0.56 & 0.26 & 0.71 
            & 0.62 & 0.27 & 0.60 & 0.50 & 0.33 & 0.51
\\\hline
\multicolumn{13}{c}{$n=50, \ p=200$}
\\
\multicolumn{1}{c}{} & \multicolumn{3}{c}{Normal} & \multicolumn{3}{c}{$t_3$} & \multicolumn{3}{c}{Pareto} & \multicolumn{3}{c}{Log-normal}
\\\hline
 
  & 2 & max & F & 2 & max & F  & 2 & max & F & 2 & max & F 
\\
$\hat\bSigma_{1}^{\HH}$ & 0.97 & 0.94 & 0.98 & 0.77 & 0.21 & 0.76 
		   & 0.67 & 0.34 & 0.50 & 0.69 & 0.23 & 0.67 
\\
$\hat \bSigma^{\HH}_3$ & 1.00 & 0.97 & 0.98 & 0.77 & 0.22 & 0.73 
		   & 0.63 & 0.31 & 0.50 & 0.70 & 0.23 & 0.68 
\\
$\hat{\bSigma}_{1}^{\TT}$  & 0.99 & 0.97 & 0.96 & 0.78 & 0.24 & 0.71 
           & 0.63 & 0.33 & 0.46 & 0.70 & 0.23 & 0.68 
\\
$\hat\bSigma_{2}^{\TT}$    & 0.95 & 0.98 & 1.00 & 0.74 & 0.35 & 0.80 
            & 0.61 & 0.34 & 0.51 & 0.66 & 0.31 & 0.72 
\\\hline
\multicolumn{13}{c}{$n=100, \ p=200$}
\\
\multicolumn{1}{c}{} & \multicolumn{3}{c}{Normal} & \multicolumn{3}{c}{$t_3$} & \multicolumn{3}{c}{Pareto} & \multicolumn{3}{c}{Log-normal}
\\\hline

  & 2 & max & F & 2 & max & F  & 2 & max & F & 2 & max & F 
\\
$\hat\bSigma_{1}^{\HH}$ & 1.00 & 0.96 & 0.99 & 0.79 & 0.23 & 0.78 
		   & 0.63 & 0.46 & 0.57  & 0.53 & 0.21 & 0.47
\\
$\hat \bSigma^{\HH}_3$ & 0.98 & 0.98 & 0.97 & 0.79 & 0.24 & 0.79 
		   & 0.69 & 0.48 & 0.58  & 0.57 & 0.22	& 0.48
\\
$\hat{\bSigma}_{1}^{\TT}$  & 1.00 & 1.00 & 0.99 & 0.78 & 0.21 & 0.77 
           & 0.65 & 0.45 & 0.57 & 0.55	& 0.23	& 0.50 
\\
$\hat\bSigma_{2}^{\TT}$    & 0.97 & 1.02 & 1.03 & 0.73 & 0.32 & 0.83 
            & 0.62 & 0.54 & 0.61 &  0.50 & 0.29 & 0.55 
\\\hline
\end{tabular}
\end{center}
\end{table}


\begin{table}[htbp]
\small
\begin{center}
\caption{RME under power decay structure.}
\label{Tab_sim_3}
\begin{tabular}{c|ccc|ccc|ccc|ccc}
\hline\hline
\multicolumn{13}{c}{$n=50, \ p=100$}
\\
\multicolumn{1}{c}{} & \multicolumn{3}{c}{Normal} & \multicolumn{3}{c}{$t_3$} & \multicolumn{3}{c}{Pareto} & \multicolumn{3}{c}{Log-normal}
\\\hline
 
 & 2 & max & F & 2 & max & F  & 2 & max & F  & 2 & max & F 
\\
$\hat\bSigma_{1}^{\HH}$ & 0.98 & 0.95 & 0.98 & 0.58 & 0.30 & 0.71 
		   & 0.48 & 0.29 & 0.57 & 0.69 & 0.39 & 0.79 
\\
$\hat \bSigma^{\HH}_3$ & 0.95 & 0.95 & 0.93 & 0.58 & 0.28 & 0.72 
		   & 0.48 & 0.26 & 0.58 & 0.70 & 0.39 & 0.78 
\\
$\hat{\bSigma}_{1}^{\TT}$  & 0.97 & 0.98 & 0.96 & 0.59 & 0.30 & 0.71 
           & 0.49 & 0.26 & 0.57 & 0.72 & 0.39 & 0.77 
\\
$\hat\bSigma_{2}^{\TT}$    & 0.98 & 0.98 & 0.99 & 0.52 & 0.33 & 0.77 
            & 0.47 & 0.31 & 0.60 & 0.66 & 0.45 & 0.81 
\\\hline
\multicolumn{13}{c}{$n=50, \ p=200$}
\\
\multicolumn{1}{c}{} & \multicolumn{3}{c}{Normal} & \multicolumn{3}{c}{$t_3$} & \multicolumn{3}{c}{Pareto} & \multicolumn{3}{c}{Log-normal}
\\\hline
 
  & 2 & max & F & 2 & max & F  & 2 & max & F & 2 & max & F 
\\
$\hat\bSigma_{1}^{\HH}$ & 0.98 & 0.95 & 0.97 & 0.58 & 0.30 & 0.71 
		   & 0.48 & 0.29 & 0.57 & 0.69 & 0.39 & 0.79 
\\
$\hat \bSigma^{\HH}_3$ & 0.96 & 0.93 & 0.95 & 0.56 & 0.29 & 0.66 
		   & 0.49 & 0.26 & 0.55 & 0.72 & 0.38 & 0.77 
\\
$\hat{\bSigma}_{1}^{\TT}$  & 0.98 & 0.97 & 0.97 & 0.59 & 0.27 & 0.71 
           & 0.48 & 0.26 & 0.58 & 0.70 & 0.36 & 0.80 
\\
$\hat\bSigma_{2}^{\TT}$    & 0.98 & 0.98 & 1.01 & 0.54 & 0.24 & 0.76 
            & 0.41 & 0.31 & 0.60 & 0.68 & 0.42 & 0.82 
\\\hline
\multicolumn{13}{c}{$n=100, \ p=200$}
\\
\multicolumn{1}{c}{} & \multicolumn{3}{c}{Normal} & \multicolumn{3}{c}{$t_3$} & \multicolumn{3}{c}{Pareto} & \multicolumn{3}{c}{Log-normal}
\\\hline

  & 2 & max & F & 2 & max & F  & 2 & max & F & 2 & max & F 
\\
$\hat\bSigma_{1}^{\HH}$ & 0.99 & 0.98 & 1.00 & 0.45 & 0.25 & 0.66 
		   & 0.42 & 0.31 & 0.54 & 0.48	& 0.35 & 0.62

\\
$\hat \bSigma^{\HH}_3$ & 0.98 & 0.98 & 0.99 & 0.47 & 0.26 & 0.68 
		   & 0.41 & 0.30 & 0.53 & 0.47 & 0.34	& 0.61

\\
$\hat{\bSigma}_{1}^{\TT}$  & 1.00 & 0.99 & 1.00 & 0.50 & 0.30 & 0.68 
           & 0.41 & 0.34 & 0.56 & 0.49	& 0.38	& 0.64

\\
$\hat\bSigma_{2}^{\TT}$   & 0.99 & 1.04 & 1.01 & 0.41 & 0.31 & 0.70 
            & 0.40 & 0.39 & 0.59 & 0.43 & 0.43 & 0.69

\\\hline
\end{tabular}
\end{center}
\end{table}

\section{Discussion}\label{sec:6}

In this paper, we surveyed and unified selected recent results on covariance estimation for heavy-tailed distributions. More specifically, we proposed  element-wise and spectrum-wise truncation techniques to robustify the sample covariance matrix.  The robustness, referred to as the {\it tail robustness}, is demonstrated  by finite-sample deviation analysis  in the presence of heavy-tailed data: the proposed estimators  achieve exponential-type deviation bounds under mild moment conditions. 
We emphasize that the tail robustness is different from the classical notion of robustness that is often characterized by the breakdown point \citep{hampel1971general}. 
Nevertheless, it does not provide any information on the convergence properties of an estimator, such as consistency and efficiency. Tail robustness is a concept that combines robustness, consistency, and finite-sample error bounds.

We discussed three types of procedures in Section~\ref{sec:general}: truncation-based  methods,  their $M$-estimation counterparts and the median-of-means method. Truncated estimators have closed-form expressions and therefore are easy to implement in practice. The corresponding $M$-estimators achieve comparable sub-Gaussian-type error bounds, which are of the order $ \sqrt{\log(d/\delta)/n}$ under the max norm and of order $\sqrt{  {\rm r}(\bSigma)\log(d/\delta)/n} $ under the spectral norm, but with sharper moment-dependent constants. Computationally, $M$-estimators can be efficiently evaluated via gradient descent method or iteratively reweighted least squares algorithm.
Both truncated and $M$-estimators involve robustification parameters that need to be calibrated to fit the noise level of the problem. Adaptation and tuning of these parameters are discussed in Section~\ref{sec:3.5}. The MOM estimator proposed in Section~\ref{sec2.1.1} is tuning-free because the number of blocks depends neither on noise level nor on confidence level.
Following the terminology proposed by \cite{DLLO2016}, truncation-based estimators are $\delta$-dependent estimators as they depend on the confidence level $1-\delta$ at which one aims to control, while the MOM estimator achieves sub-Gaussian error bounds simultaneously at all confidence levels in a certain range but requires slightly stronger assumptions, namely, the existence of sixth moments instead of fourth.

Three examples discussed in Section~\ref{sec:4} illustrate that both element-wise and spectrum-wise truncated covariance estimators can serve as building blocks for a variety of estimation problems in high dimensions. A natural question is  whether one can construct a single robust estimator that achieves exponentially fast concentration both element-wise and spectrum-wise, that is, satisfies the results in Theorems \ref{thm:element-wise} and \ref{thm:u-type} simultaneously. Here we discuss a theoretical solution to this question. In fact, one can arbitrarily pick one element, denoted as $\hat \bSigma^{\TT}$, from the collection of matrices 
\#
	\mathcal{H}=\Bigg\{ \Sb \in \RR^{d\times d} : ~&\Sb = \Sb^\T,  ~\| \hat \bSigma_2^{\TT} -\Sb \|_2 \leq  2 v \sqrt{\frac{\log(2d) + \log \delta^{-1}}{m}}  \nn \\
	& \mbox{and }~  \| \hat{\bSigma}^{\TT}_1 - \Sb  \|_{\max}   \leq  2 \|\Vb \|_{\max}  \sqrt{\frac{2\log d + \log \delta^{-1} }{m }   }\Bigg\}.   \nn
\#
Due to Theorems \ref{thm:element-wise} and \ref{thm:u-type}, with probability at least $1-3\delta$, the set $\mathcal{H}$ is non-empty since  it contains the true covariance matrix $\bSigma$. Therefore, it follows from the the triangle inequality that the inequalities
\#
	\| \hat \bSigma^{\TT} -\bSigma \|_2 \leq  4 v \sqrt{\frac{\log(2d) + \log \delta^{-1}}{m}}, \nn  \\
\mbox{and } ~ \| \hat \bSigma^{\TT}  - \bSigma \|_{\max}   \leq  4 \|\Vb \|_{\max}  \sqrt{\frac{2\log d + \log \delta^{-1} }{m }   } . \nn
\#
hold simultaneously with probability at least $1-3\delta$.

\newpage
\section*{Supplementary Material} 
\appendix

In Sections~\ref{proof:lower bound}--\ref{proof:sec:4}, we provide proofs of all the theoretical results in the main text. In addition, we investigate robust covariance estimation and inference under factor models in Section~\ref{sec:factor.model}, which might be of independent interest.

\section{Proof of Proposition~\ref{thm:element-wise:lower}}
\label{proof:lower bound}

Without loss of generality we assume $\bmu={\bf 0}$. We construct a random vector $\bX \in \RR^d$ that follows the distribution below:
$$
\PP\Big\{ \bX=(0,\ldots,0, \underbrace{n\eta}_{j\text{th}},0,\ldots,0)^\intercal\Big\}=\PP\Big\{ \bX=(0,\ldots,0, \underbrace{-n\eta}_{j\text{th}},0,\ldots,0)^\intercal\Big\}=\frac{\sigma^2}{2n^2\eta^2}
$$
for each $j=1,\ldots,d$, and 
$$\PP( \bX={\bf 0} )=1-\frac{d\sigma^2}{n^2\eta^2}.$$
Here we assume  $\eta^2>d\sigma^2/n^2$ so  that $\PP ( \bX={\bf 0} )>0$. In other words, the number of non-zero elements of $\bX$ is at most $1$. It is easy to see that the mean and covariance matrix of $\bX$ are ${\bf 0}$ and $\sigma^2\Ib_d$, respectively.

Consider the empirical mean $\bar{\bX}= (1/n)\sum_{i=1}^n\bX_i $, where $\bX_1,\ldots,\bX_n$ are i.i.d. from $\bX$. It follows that 
\#
	\PP (\|\bar\bX\|_{\infty}\geq \eta )&\geq \PP\Big(\text{exactly one of the } n \text{ samples is not equal to } {\bf 0}\Big) \nn \\
	&= \frac{d\sigma^2}{n\eta^2}\bigg(1-\frac{d\sigma^2}{n^2\eta^2}\bigg)^{ n-1  }. \nn
\#
Therefore, as long as $\delta<(2e)^{-1}$, the following bound
$$ \|\bar\bX\|_{\infty}\geq \sigma \sqrt{\frac{d }{n \delta}}\bigg(1-\frac{2e\delta}{n}\bigg)^{(n-1)/2}$$
holds with probability at least $\delta$, as claimed.
\qed

\section{Proofs for Section~\ref{sec:general}} 
\label{sec:A1}

 \subsection{Proof of Theorem~\ref{thm:element-wise}}
 
 For each $1\leq k\leq \ell\leq d$, note that $\hat{\sigma}^{\TT}_{1,k\ell}$ is a $U$-statistic with a bounded kernel of order two, say  $\hat{\sigma}^{\TT}_{1,k\ell} = {n \choose 2}^{-1} \sum_{1\leq i<j\leq n} h_{k\ell}(\bX_i, \bX_j)$. According to \cite{H1963}, $\hat{\sigma}_{1,k\ell}^{\TT}$ can be represented as an average of (dependent) sums of independent random variables. Specifically, define
$$
	W(\bx_1,\ldots,\bx_n) = \frac{h_{k\ell}(\bx_1,\bx_2) + h_{k\ell}(\bx_3, \bx_4) + \cdots + h_{k\ell}(\bx_{2m-1}, \bx_{2m})}{m}  
$$
for $\bx_1,\ldots, \bx_n \in \RR^d$.
Let $\sum_{\cP}$ denote the summation over all $n!$ permutations $(i_1,\ldots, i_n)$ of $[n]:= \{ 1, \ldots, n \}$ and $\sum_{\cC}$ denote the summation over all ${n \choose 2}$ pairs $(i_1, i_2)$ ($i_1<i_2$) from $[n]$. Then we have $m \sum_{\cP} W(\bx_1,\ldots,\bx_n)  = m 2! (n-2)! \sum_{\cC} h_{k\ell}(\bx_{i_1}, \bx_{i_2})$ and hence
\#
	\hat{\sigma}_{1,k\ell}^{\TT} = \frac{1}{n!}  \sum_{\cP} W(\bX_{i_1},\ldots,\bX_{i_n}) . \label{U.representation}
\#

Write $\tau = \tau_{k\ell}$ and $v= v_{k\ell}$ for simplicity. For any $y>0$, by Markov's inequality,  \eqref{U.representation}, convexity and independence, we derive that
\#
	&	\PP (  \hat{\sigma}_{1,k\ell}^{\TT}  - \sigma_{k\ell} \geq y  ) \leq e^{-  (m/\tau) (y+ \sigma_{k\ell})    } \EE e^{  (m/\tau) \hat{\sigma}_{1,k\ell}^{\TT}    } \nn \\
& \leq e^{-  (m/\tau)  (y+ \sigma_{k\ell})    } \frac{1}{n!}\sum_{\cP} \EE e^{ (1/\tau) \sum_{j=1}^m h_{k\ell} (\bX_{i_{2j-1}}, \bX_{i_{2j}}) } \nn \\
& = e^{-   (m/\tau)  (y+ \sigma_{k\ell})    } \frac{1}{n!}\sum_{\cP} \prod_{j=1}^m  \EE e^{  (1/\tau)   h_{k\ell} (\bX_{i_{2j-1}}, \bX_{i_{2j}}) }. \nn
\#
Note that $ h_{k\ell} (\bX_{i_{2j-1}}, \bX_{i_{2j}}) = \psi_\tau ( Y_{\pi k} Y_{ \pi \ell}/2 ) = \tau \psi_1( Y_{\pi k} Y_{ \pi \ell}/(2\tau))$ for some $1\leq \pi \leq N$.  In addition, it is easy to verify the inequality that
\# 
-\log (1-x+x^2)\leq \psi_1(x)\leq  \log (1+x+x^2)  ~\mbox{ for all } x\in \RR. \label{basic.ineq}
\#
Therefore, we have
\#
&	\EE e^{   (1/\tau)    h_{k\ell} (\bX_{i_{2j-1}}, \bX_{i_{2j}}) }   \leq \EE \{ 1 + Y_{\pi k} Y_{ \pi \ell}/(2\tau) + (Y_{\pi k} Y_{ \pi \ell})^2/(2\tau)^2 \} \nn \\
	&  = 1 +  \sigma_{k\ell} / \tau  +(1/\tau)^2\EE( Y_{\pi k} Y_{ \pi \ell}/2)^2  \leq e^{  \sigma_{k\ell} / \tau   + (v / \tau)^2 } . \nn
\#
Combining the above calculations gives
\#
\PP (  \hat{\sigma}_{1,k\ell}^{\TT}  - \sigma_{k\ell} \geq y  ) \leq e^{-  (m/\tau)  y +  m (v/\tau)^2} =  e^{-m y^2/(4v^2)} , \nn
\#
where the equality holds by taking $\tau = 2v^2/y$. Similarly, it can be shown that $\PP (  \hat{\sigma}_{1,k\ell}^{\TT}   - \sigma_{k\ell} \leq - y  ) \leq  e^{-m y^2/(4v^2)} $. Consequently, for $\delta \in (0,1)$, taking $y= 2v \sqrt{ ( 2\log d +  \log  \delta^{-1} ) /m}$, or equivalently, $ \tau = v\sqrt{m/ (2\log d+  \log \delta^{-1} )}$, we arrive at
\#
	 \PP\Bigg(  | \hat{\sigma}_{1,k\ell}^{\TT}   - \sigma_{k\ell}  | \geq 2v \sqrt{\frac{2 \log d +\log  \delta^{-1}}{m}}   \Bigg) \leq  \frac{2\delta }{d^2}.   \nn
\#
From the union bound it follows that
\$
	\PP\Bigg( \| \hat{\bSigma}_1^{\TT}  - \bSigma \|_{\max} > 2 \max_{1\leq k,\ell \leq d} v_{k\ell}\sqrt{\frac{ 2 \log d + \log  \delta^{-1}}{n}}  \Bigg) \leq (1+d^{-1}) \delta.
\$
This proves \eqref{sigma1.bound}.  \qed

\subsection{Proof of Theorem~\ref{thm:u-type}}

To begin with, note that $\hat \bSigma_2^{\TT}$ can be written as a $U$-statistic of order 2. Define the index set $\cI = \{ (i , j): 1\leq i <  j \leq n \}$ with cardinality $\binom{n}{2}$. Let $h(\bX_i, \bX_j)= (\bX_i-\bX_j)(\bX_i-\bX_j)^\T/2 $ and $\Zb_{i, j}= \tau^{-1} \psi_\tau(h(\bX_i, \bX_j))= \psi_{1}(\tau^{-1}h(\bX_i, \bX_j))$, such that
\$
\wt \bSigma  :=  \frac{1}{\tau} \hat \bSigma_2^{\TT} = \frac{1}{ \binom{n}{2}}\sum_{(i, j) \in \cI } \Zb_{i, j}.
\$

We now rewrite the $U$-statistic $\wt \bSigma$ as a convex combination of sums of independent random matrices. As in the proof of Theorem~\ref{thm:element-wise}, we define
\$
\Wb_{(1, \ldots, n)} = m^{-1} ( \Zb_{1,2}+ \Zb_{3,4}+\ldots+ \Zb_{2m-1, 2m} ).
\$
For every permutation $\pi=(i_1, \ldots, i_n)$, we adopt the notation $\Wb_\pi=\Wb_{(i_1, \ldots, i_n)}$ such that $\wt\bSigma^\tau = (n !)^{-1}\sum_{\pi\in\cP} \Wb_\pi$. Using the convexity of the mappings $\Ab \mapsto \lambda_{\max}(\Ab)$ and $x\mapsto e^x$, we obtain that
\$
\exp \{ \lambda_{\max} (\wt \bSigma -  \bSigma^\tau ) \} \leq  \frac{1}{n!} \sum_{\pi\in\cP}\exp \{ \lambda_{\max}  ( \Wb_\pi-\bSigma^\tau ) \},
\$
where $\bSigma^\tau : =\tau^{-1}\bSigma$. Combined with Markov's inequality and the inequality $e^{\lambda_{\max}(\Ab)}\leq \tr e^\Ab$, this further implies
\$
	& \PP \{ \sqrt{m} \,\lambda_{\max} (  \hat \bSigma_2^{\TT}  -\bSigma )\geq  y  \}  =\PP\Big\{ e^{ \lambda_{\max} (m \wt\bSigma- m\bSigma^\tau ) } \geq  e^{  y \sqrt{m} / \tau  }  \Big\} \\
&\leq  e^{-  y \sqrt{m} / \tau } \frac{1}{n!} \sum_{\pi\in \cP}\EE\exp  \{ \lambda_{\max} (m \Wb_\pi- m \bSigma^\tau ) \} \\
&\leq  e^{-  y \sqrt{m} / \tau  } \frac{1}{n!} \sum_{\pi\in \cP}\EE \tr\exp  (m \Wb_\pi-m \bSigma^\tau ).
\$
For every $\pi=( i_1, \ldots, i_n) \in \cP$, define $\Zb_{\pi , j}= \Zb_{i_{2j-1}, i_{2j}}$ and $\Hb_{\pi , j}=h(\bX_{i_{2j-1}}, \bX_{i_{2j}})$, such that $\Zb_{\pi ,1}, \ldots, \Zb_{\pi, m}$ are independent and $\EE \Hb_{\pi , j}=\bSigma$. Then $\Wb_\pi$ can be written as $\Wb_\pi=  m^{-1} ( \Zb_{\pi , 1}+\ldots+ \Zb_{\pi , m} )$. Recall that $\psi_\tau(x)=\tau\psi_1(x/\tau)$. In view of \eqref{basic.ineq}, we have the matrix inequality
\$
-\log( \Ib  -  \tau^{-1}\Hb_{\pi  , j} +  \tau^{-2 } \Hb^2_{\pi  , j}  ) \preceq   \Zb_{\pi , j}\preceq \log( \Ib +  \tau^{-1} \Hb_{\pi,  j}  +  \tau^{-2} \Hb_{\pi , j}^2  ).
\$
Then we can bound $\EE\exp \tr (m\Wb_\pi- m \bSigma^\tau )$ by
\#\label{thm:utype:eq:1}
&\EE_{[m-1]}\EE_m\tr\exp \bigg( \sum_{j=1}^{m-1}{ \Zb_{\pi , j}}- m\bSigma^\tau+{ \Zb_{\pi , m}}\bigg)\nn\\
&\hspace{1cm}\leqslant \EE_{[m-1]}\EE_m\tr\exp \bigg\{ \sum_{j=1}^{m-1}{ \Zb_{\pi , j}}-m \bSigma^\tau+\log( \Ib + \tau^{-1} \Hb_{\pi  ,m} + \tau^{-2}\Hb_{\pi ,  m}^2 )\bigg\},
\#
where the expectation $\EE_m$ is taken with respect to $\{\bX_{i_{2m-1}}, \bX_{i_{2m}}\}$ and the expectation $\EE_{[m-1]}$ is taken with respect to $\{\bX_{i_{1}},..., \bX_{i_{2m-2}}\}$. To bound the right-hand side of \eqref{thm:utype:eq:1}, we follow a similar argument as in \cite{M2016}. By Lieb's concavity theorem (see, e.g. Fact 2.5 in \cite{M2016}) and Jensen's inequality, we arrive at
\$
& \EE\tr\exp ( m \Wb_\pi-m \bSigma^\tau ) \nn \\
&\leq  \EE\tr\exp \bigg\{\sum_{j=1}^{m-1}{\Zb_{\pi , j}}-m \bSigma^\tau+\log(\Ib +  \tau^{-1} \EE \Hb_{\pi ,m} + \tau^{-2} \EE \Hb_{\pi ,m}^2 )\bigg\}\\
&\leq \tr \exp\bigg\{\sum_{j=1}^m \log (\Ib+ \tau^{-1} \EE \Hb_{\pi , j}  + \tau^{-2} \EE \Hb_{\pi ,j}^2 ) - m\bSigma^\tau\bigg\}\\
&\leq  \tr \exp\bigg( \frac{1}{\tau^2} \sum_{j=1}^m\EE \Hb_{\pi ,j}^2 \bigg) \\
& \leq d\exp( m \tau^{-2} \|  \EE \Hb_{\pi , 1}^2 \|_2  )\\
&= d\exp( m \tau^{-2} v^2 ),
\$
where we used the bound $ \tr e^{\Ab} \leq d e^{\| \Ab \|}$ in the last inequality and the definition $v^2$ from \eqref{v2.def} in the last equality.

Letting $\tau = 2v^2\sqrt{m}/y$, we get
\$
& \PP \{  \sqrt{m} \, \lambda_{\max} (  \hat \bSigma_2^{\TT}  -\bSigma )\geq y \}   \leq  d \exp\bigg(-\frac{ y\sqrt{m}}{\tau } + \frac{m v^2}{\tau^2} \bigg) \leq d e^{-y^2/(4v^2)}.
\$
Similarly, it can be shown that
\$
\PP \{ \sqrt{m} \, \lambda_{\min} (  \hat \bSigma_2^{\TT} -\bSigma )\leq  - y \} \leq d  e^{-y^2/(4v^2)}.
\$

Finally, taking $y = 2 v\sqrt{\log(2d)+ \log \delta^{-1}}$ in the last two displays proves \eqref{spectral.concentration}. \qed

\subsection{Proof of Theorem~\ref{thm:huber-type}}

Let $v_{\max} = \max_{1\leq k,\ell \leq d} v_{k\ell}$. By the union bound, for any $y>0$ it holds
\#
	& \PP( \| \hat{\bSigma}_1^{\HH} - \bSigma \|_{\max} \geq v_{\max} \, y  ) \nn \\
& \leq \sum_{1\leq k\leq \ell \leq d}  \PP( | \hat{\sigma}^{\HH}_{1, k\ell} - \sigma_{k \ell} | \geq v_{k\ell} \,y )   \leq \frac{d(d+1)}{2} \max_{1\leq k\leq \ell \leq d } \PP( | \hat{\sigma}_{1,k\ell}^{\HH} - \sigma_{k \ell} | \geq v_{k\ell} \,y ).\label{union.bound}
\#
In the rest of the proof, we fix $(k,\ell) \in [d] \times [d]$ and write $\tau = \tau_{k\ell}$ and $v = v_{k\ell}$ for simplicity. Moreover, define the index set $\cI= \{ (i,j) : 1\leq i <  j \leq n\}$, the collection $\{Z_{i,j} = (X_{ik}- X_{jk})(X_{i\ell} - X_{j \ell} )/2 : (i,j) \in \cI \}$ of random variables indexed by $\cI$ and the loss function $\cL(\theta) = \sum_{(i,j) \in \cI} \ell_{\tau}(Z_{i,j} - \theta)$. With this notation, we have $$\hat \sigma_{1,k\ell}^{\HH}  = \hat{\theta} := \argmin_{\theta \in \RR} \cL(\theta).$$ Without loss of generality, we assume $\bmu = (\mu_1,\ldots, \mu_d)^\T = \textbf{0}$; otherwise, we can simply replace $X_{ik}$ by $X_{ik}-\mu_k$ for all $i\in [n]$ and $k\in [d]$.

Note that $\hat{\theta}$ is the unique solution of the equation
$$
	\Psi(\theta):= \frac{1}{{n \choose 2}}\sum_{(i,j) \in \cI} \psi_\tau(Z_{i,j} - \theta) = 0 ,
$$
where $\psi_\tau(\cdot)$ is defined in \eqref{psi.def}. Similarly to the proof of Theorem~\ref{thm:element-wise}, we define
\$
 w_{(1, \ldots, n) }(\theta)= \frac{1}{m \tau} \{ \psi_\tau( Z_{1,2} -\theta) + \psi_\tau( Z_{3,4}-\theta ) + \ldots + \psi_\tau( Z_{2m-1, 2m} -\theta)  \} .
\$
Denote by $\cP$ the class of all $n!$ permutations on $[n]$ and let $\pi=(i_1, \ldots, i_n)$ be a permutation, i.e., $\pi(j)=i_j$ for $j=1,\ldots, n$. Put $w_\pi (\theta)=w_{(i_1,\ldots, i_n)}(\theta)$ for $\pi \in \cP$, such that $\tau^{-1}m\Psi(\theta) = (n !)^{-1}\sum_{\pi\in\cP} m w_\pi(\theta)$. By convexity, we have
\#
	\EE \{ e^{  \tau^{-1}m\Psi(\theta)} \}  \leq \frac{1}{n !} \sum_{\pi \in \cP}   \EE \{ e^{ m w_\pi(\theta) } \}.\nn
\#
Recall that $\EE Z_{i,j} = \sigma_{ k \ell}$ for any $(i,j) \in \cI$.  By \eqref{v.cond}, 
$$
	v^2 = \var(Z_{1,2})  
	 = \frac{1}{2} \{  \EE ( ( X_{k} - \mu_k)^2 ( X_\ell - \mu_\ell)^2 ) + \sigma_{kk} \sigma_{\ell \ell}   \}.
$$ 
For $\pi = (1,\ldots, n)$, by \eqref{basic.ineq} and the fact that $\tau^{-1} \psi_\tau(x) =  \psi_1(x/\tau)$, we have
\#
	& \EE \{ e^{ m w_\pi(\theta) } \} = \prod_{j=1}^m \EE \exp\{ \psi_1( (Z_{2j-1,2j} - \theta)/\tau )  \} \nn \\
	& \leq  \prod_{j=1}^m  \EE   \{ 1 + \tau^{-1} (Z_{2j-1,2j} - \theta)  + \tau^{-2} (Z_{2j-1,2j} - \theta)^2 \} \nn \\
	& \leq  \prod_{j=1}^m  [ 1 + \tau^{-1} ( \sigma_{ k \ell } - \theta)  + \tau^{-2} \{ v^2 + ( \sigma_{k\ell}  - \theta )^2 \} ] \nn \\
	& \leq \exp[ m \tau^{-1} ( \sigma_{ k \ell } - \theta)  + m \tau^{-2} \{ v^2 + ( \sigma_{k\ell}  - \theta )^2 \} ]. \label{exp.moment.bd1}
\#
Similarly, it can be shown that 
\#
	\EE \{ - e^{ m w_\pi(\theta) } \} \leq \exp[ - m \tau^{-1} ( \sigma_{ k \ell } - \theta)  + m \tau^{-2} \{ v^2 + ( \sigma_{k\ell}  - \theta )^2 \} ]. \label{exp.moment.bd2}
\#
Inequalities \eqref{exp.moment.bd1} and \eqref{exp.moment.bd2} hold for every permutation $\pi \in \cP$. For $\eta \in (0,1)$, define
\#
	B_+(\theta) =   \sigma_{ k \ell } - \theta   +   \frac{ v^2 + ( \sigma_{k\ell}  - \theta )^2 }{\tau} + \frac{\tau \log \eta^{-1} }{m} , \nn \\
	B_-(\theta) =   \sigma_{ k \ell } - \theta  -  \frac{ v^2 + ( \sigma_{k\ell}  - \theta )^2 }{\tau} - \frac{\tau \log \eta^{-1} }{m}. \nn
\#
Together, \eqref{exp.moment.bd1}, \eqref{exp.moment.bd2} and Markov's inequality imply
\#
 \PP\{ \Psi(\theta)  > B_+(\theta)\} \leq  e^{-  \tau^{-1} m B_+(\theta)} \EE\{ e^{ \tau^{-1} m\Psi(\theta)} \} \leq \eta , \nn \\
\mbox{and }~ \PP\{ \Psi(\theta)  < B_-(\theta)\} \leq  e^{-  \tau^{-1} m B_-(\theta)} \EE\{ - e^{ \tau^{-1} m\Psi(\theta)} \} \leq \eta . \nn
\#
Recall that $\Psi(\hat{\theta})=0$. Let $\theta_+$ be the smallest solution of the quadratic equation $B_+(\theta_+)=0$, and $\theta_-$ be the largest solution of the equation $B_-(\theta_-)=0$. Noting that $\Psi(\cdot)$ is decreasing, it follows from the last display that
\#
	\PP ( \theta_- \leq  \hat{\theta} \leq \theta_+ )   \geq 1 - 2 \eta . \nn
\#
Similarly to the proof of Proposition~2.4 in \cite{C2012}, it can be shown that with $\tau = v \sqrt{m/\log  \eta^{-1}  }$,
\#
	\theta_+ \leq \sigma_{k\ell} + 2\bigg(  \frac{v^2}{\tau} + \frac{\tau\log  \eta^{-1}  }{  m} \bigg)  ~\mbox{ and }~ \theta_- \geq \sigma_{k\ell}- 2\bigg(  \frac{v^2}{\tau} + \frac{\tau\log  \eta^{-1}  }{  m} \bigg)  \nn
\#
as long as $m \geq 8 \log  \eta^{-1} $. Consequently, we obtain that with probability at least $1-2\eta $, $ |  \hat{\sigma}^{\HH}_{1,k \ell} - \sigma_{k\ell} | \leq 4 v \sqrt{ ( \log \eta^{-1} ) /m}$. 

Taking $y= 4\sqrt{ (\log  \eta^{-1} ) / m}$ in \eqref{union.bound} yields $\| \hat{\bSigma}_1^{\HH} - \bSigma \|_{\max} \leq 4 v_{\max}  \sqrt{ ( \log  \eta^{-1} )  / m}$ with probability at least $1-  d(d+1) \eta $. Finally, taking $\delta = d^2 \eta $ proves \eqref{max.concentration}. \qed

\subsection{Proof of Theorem~\ref{thm:mom}}
 
We will use Theorem~1 and Lemma~1 in \cite{MS2017} that connect the performance of $\hat\sigma_{\ell m}^{\mathrm{MOM}}$ to the rate of convergence of $\hat\sigma_{\ell m}^{(1)}$ to the normal law. 
It is well known that, whenever the 4th moments of the entries of $\bX$ are finite, $\sqrt{|G_1|} \frac{\hat{\sigma}^{(1)}_{\ell m}  - \sigma_{\ell m}}{\Delta_{\ell m}}$ converges in distribution to the standard normal distribution.  
The rate of this convergence can be obtained via an analogue of the Berry-Esseen theorem for the sample covariance. Specifically, for any $1\leq \ell ,m\leq d$, we seek an upper bound on
\[
\sup_{t\in \mathbb R}\left| \PP  \Bigg( \sqrt{|G_1|} \frac{\hat{\sigma}^{(1)}_{\ell m}  - \sigma_{\ell m}}{\Delta_{\ell m}}\leq t\Bigg) - 
\PP (Z \leq t )
\right|,
\]
where $Z\sim \mathcal N(0,1)$. To this end, we will  use Theorem 2.9 in \cite{pinelis2016optimal}. 
Using the notation therein, we take $\bV = (X_\ell - \mathbb E X_\ell ,X_m - \mathbb E X_m,X_\ell X_m - \mathbb E (X_\ell X_m))^\intercal$, $f(x_1,x_2,x_3) = x_3 - x_1\cdot x_2$, and deduce that 
\#
\label{BE.bound}
\sup_{t\in \mathbb R}\left| \PP  \Bigg( \sqrt{|G_1|} \frac{\hat{\sigma}^{(1)}_{\ell m}  - \sigma_{\ell m}}{\Delta_{\ell m}}\leq t\Bigg) - 
\PP (Z \leq t ) 
\right| \leq \frac{\mathfrak{C}_{\ell m}}{\sqrt{|G_1|}},
\#
where $\mathfrak{C}_{\ell m}>0$ is a constant depending on $\Delta_{\ell m}$ and $\mathbb E  | (X_\ell - \mathbb E X_\ell ) (X_m - \mathbb E X_m) |^3$. 
Together with Theorem 1 and Lemma 1 of \cite{MS2017}, \eqref{BE.bound} implies that 
\[
\left| \hat\sigma_{\ell m}^{\mathrm{MOM}} - \sigma_{\ell m} \right| \leq 3 \Delta_{\ell m}\, {\sqrt{\frac{k}{n}}}\left( \sqrt{\frac{s}{k}} + \mathfrak{C}_{\ell m}{\sqrt{\frac{k}{n}}} \right)
\]
with probability at least $1 - 4 e^{-2s}$ for all $s>0$ satisfying
\#
\label{eq.s}
\sqrt{\frac{s}{k}} + \mathfrak{C}_{\ell m}{\sqrt{\frac{k}{n}}} \leq 0.33.
\# 
Taking the union bound over all $\ell,m$, we obtain that with probability at least $1 - 2d(d+1) e^{-2s}$,
\[
\| \hat{\bSigma}^{\mathrm{MOM}} - \bSigma \|_{\max} \leq 3 \max_{\ell ,m}\Delta_{ \ell m}\left( \sqrt{\frac{s}{n}} + \max_{\ell ,m}\mathfrak{C}_{\ell m} \,\frac{k}{n}\right)
\]
for all $s>0$ satisfying \eqref{eq.s}. 
The latter is equivalent to the statement of the theorem.  \qed

\subsection{Proof of Corollary \ref{coro:spectrum}}

From the proof of Theorem~\ref{thm:u-type},  we find that 
$$
 \|  \EE \{ (\bX_1 - \bX_2 )(\bX_1 - \bX_2)^\intercal \}^2 \|_2  = 2 \|    \EE \{ (\bX-\bmu)(\bX-\bmu)^\intercal \}^2  + {\rm Tr}(\bSigma) \bSigma + 2 \bSigma^2  \|_2 .
$$
Under the bounded kurtosis condition that $K = \sup_{\bu\in \mathbb{S}^{d-1}} {\rm kurt}(\bu^\T \bX)<\infty$, it follows from Lemma~4.1 in \cite{MW2018} that
$$
 \|    \EE \{ (\bX-\bmu)(\bX-\bmu)^\intercal \}^2\|_2  \leq K {\rm Tr}(\bSigma) \| \bSigma \|_2.
$$
Together, the last two displays imply
$$
	\|  \EE \{ (\bX_1 - \bX_2 )(\bX_1 - \bX_2)^\intercal \}^2 \|_2  \leq  2 \| \bSigma \|_2 \{  (K+1) {\rm Tr}(\bSigma) + 2 \|\bSigma \|_2 \}.
$$
 Taking $v= \|  \EE \{ (\bX_1 - \bX_2 )(\bX_1 - \bX_2)^\intercal \}^2 \|_2^{1/2} /2$ that scales with $ {\rm Tr}(\bSigma)^{1/2} \| \bSigma \|_2^{1/2}={\rm r}(\bSigma)^{1/2} \| \bSigma \|_2$, the  resulting estimator satisfies
\#
 \| \hat{\bSigma}^{\TT}_{2} - \bSigma \|_2 \lesssim  K^{1/2} \| \bSigma \|_2  \sqrt{ \frac{ {\rm r}(\bSigma) ( \log d + t ) }{n}} \label{spec.bound}
\#
with probability at least $1- e^{-t}$.  \qed

\section{Proofs for Section~\ref{sec:4}}
\label{proof:sec:4}
\subsection{Proof of Theorem~\ref{thm:band}}
Define each principal submatrix of $\bSigma$ as $\bSigma^{(p,q)} = \EE  \bZ^{(p,q)}_1 \bZ^{(p,q)\T}_1/2$, which is estimated by $\hat\bSigma_2^{(p,q), \TT}$. As a result, we expect the final estimator $\hat\bSigma_{q}$ to be close to 
\begin{equation*} 
\bSigma_{q} =  \sum_{j=-1}^{\ceil{(d-1)/q}} \extend{d}{j q +1}(\bSigma^{(j q+1,2q) }) - \sum_{j=0}^{\ceil{(d-1)/q}} \extend{d}{j q+1}(\bSigma^{(j q +1, q)}).
\end{equation*}
By the triangle inequality, we have $\|\hat\bSigma_{q}-\bSigma\|_2  \leq   \|\hat\bSigma_{q}-\bSigma_q\|_2 + \|\bSigma_{q}-\bSigma\|_2 $. We first establish an upper bound for the bias term $\|\bSigma_{q}-\bSigma\|_2$. According to the decomposition illustrated by Figure \ref{Fig_bandable}, $\bSigma_{q}$ is a  banded version of the population covariance with bandwidth between $q$ and $2q$. Therefore, we bound the spectral norm of $\bSigma_{q}-\bSigma$ by the $\|\cdot \|_{1,1}$ norm as follows:
\begin{equation*} 
\|\bSigma_{q}-\bSigma\|_2 \leq  \max_{1\leq \ell  \leq d}\sum_{k:  \vert
	k - \ell  \vert > q }    \vert \sigma _{k \ell } \vert \leq \frac{M }{q^{ \alpha}}.
\end{equation*}

It remains to control the estimation error $\|\hat\bSigma_{q}-\bSigma_q\|_2$. Define $\Db^{(p,q)}=\hat\bSigma_2^{(p,q), \TT}-\bSigma^{(p,q)}$, $$
	\Sb_1=\sum_{j=-1 : j {\rm~is~odd}}^{\ceil{(d-1)/q}} \extend{d}{j q +1} \{  \Db^{(j q+1,2q) } \} , \quad \Sb_2=\sum_{j= 0: j {\rm~is~even}}^{\ceil{(d-1)/q}} \extend{d}{j q +1}\{  \Db^{(j q+1,2q) } \} ,
$$ 
and $\Sb_3=\sum_{j=0}^{\ceil{(d-1)/q}} \extend{d}{j q+1}\{  \Db^{(j q +1, q)}\}$. Note that each $\Sb_i$ above is a sum of disjoint block diagonal matrices. Therefore,
\#
\|\hat\bSigma_{q}-\bSigma_q\|_2 	&\leq   \|\Sb_1\|_2 +\|\Sb_3\|_2 +\|\Sb_3\|_2    \notag\\
	&\leq  3  \max_{j=-1}^{\ceil{(d-1)/q}}\{\|  \Db^{(j q+1,2q)}\|_2 ,\| \Db^{(j q+1,q)}\|_2 \} \label{eq: max band0}.
\#
Applying Theorem~\ref{thm:u-type} to each principal submatrix with the choice of $\delta=( n^{c_0} d )^{-1}$ in  $\tau$, and by the union bound, we obtain that with probability at least $1-2d\delta=1-2n^{-c_0}$,
\#
& \max_{j=-1}^{\ceil{(d-1)/q}}\{\| \Db^{(j q+1,2q)}\|_2 ,\|  \Db^{(j q+1,q)}\|_2 \} \notag \\
&\leq  2\| \bSigma \|_2^{1/2} \{  (M_1+1) q\| \bSigma \|_2  +  \|\bSigma \|_2 \}^{1/2} \sqrt{ \frac{\log(4q)+\log \delta^{-1}}{m}  }  \nn \\
&\leq  2M_0 \sqrt{ 1+ (M_1+1)q  }   \sqrt{ \frac{\log(4d)+c_0\log (nd)}{n} },  \nn
\#
where we used the inequalities ${\rm tr}(\Db^{(j q+1,2q)})\leq 2q \| \bSigma \|_2$ and $\| \bSigma \|_2\leq M_0$. Plugging this into (\ref{eq: max band0}), we obtain that with probability at least $1-2n^{-c_0}$, 
\begin{equation*}
\|\hat\bSigma_{q}-\bSigma_q\|_2 \leq 6 M_0 \sqrt{ 1+ (M_1+1)q } \sqrt{ \frac{\log(4d)+c_0\log (nd)}{n} }.
\end{equation*}

In view of the upper bounds on  $\|\hat\bSigma_{q}-\bSigma_q\|_2$ and $\|\bSigma_{q}-\bSigma\|_2$,  the optimal bandwidth  $q$ is of order $\{ n/ \log (nd)\}^{1/(2\alpha + 1)} \wedge  d$, which leads to the desired result. \qed

\subsection{Proof of Theorem~\ref{thm:spre}}
Define the symmetrized Bregman divergence for the loss function $\cL(\bTheta)= \langle\bTheta^2, \hat{\bSigma}_{1}^{\TT} \rangle-\tr(\bTheta)$ as $D_\cL^{\rm s}(\bTheta_1,\bTheta_2)=\langle \nabla\cL(\bTheta_1)-\nabla\cL(\bTheta_2),\bTheta_1-\bTheta_2\rangle$. 
We first need the following two lemmas. 
\begin{lemma}\label{lemma:cone}
Provided $\lambda\geq 2\| \nabla \cL(\bTheta^*)\|_{\max}$, $\widehat\bTheta$ falls in the $\ell_1$-cone
\$
\|\widehat\bTheta_{\cS^{{\rm c}}}-\bTheta^*_{\cS^{{\rm c}}} \|_{\ell_1}\leq 3 \|\widehat\bTheta_\cS-\bTheta^*_\cS \|_{\ell_1}.
\$
\end{lemma}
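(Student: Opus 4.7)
The plan is to follow the classical Lagrangian argument for $\ell_1$-penalized convex $M$-estimation. Let $\bfsym{\Delta}=\widehat\bTheta-\bTheta^*$. Since the loss $\cL(\bTheta)=\tfrac12\langle\bTheta^2,\hat{\bSigma}_1^{\TT}\rangle-\tr(\bTheta)$ is quadratic in $\bTheta$ with positive semi-definite Hessian $\Hb_{\bGamma}$, the first-order convexity inequality $\cL(\widehat\bTheta)-\cL(\bTheta^*)\geq\langle\nabla\cL(\bTheta^*),\bfsym{\Delta}\rangle$ holds. Combining this with the optimality of $\widehat\bTheta$ in program~\eqref{eq:mp}, namely
\[
\cL(\widehat\bTheta)+\lambda\|\widehat\bTheta\|_{\ell_1}\leq\cL(\bTheta^*)+\lambda\|\bTheta^*\|_{\ell_1},
\]
yields the basic inequality $\langle\nabla\cL(\bTheta^*),\bfsym{\Delta}\rangle\leq\lambda\bigl(\|\bTheta^*\|_{\ell_1}-\|\widehat\bTheta\|_{\ell_1}\bigr)$.

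To bound the right-hand side, I would use that $\bTheta^*$ is positive definite, so its diagonal is nonzero and therefore contained in $\cS$; in particular $\cS^{\rm c}$ consists of off-diagonal indices only. Splitting the penalty as a sum over $\cS\cap\{k\neq\ell\}$ and $\cS^{\rm c}$, and using $\bTheta^*_{\cS^{\rm c}}=\mathbf{0}$ together with the reverse triangle inequality entry by entry, I obtain
\[
\|\bTheta^*\|_{\ell_1}-\|\widehat\bTheta\|_{\ell_1}\leq \|\bfsym{\Delta}_\cS\|_{\ell_1}-\|\bfsym{\Delta}_{\cS^{\rm c}}\|_{\ell_1}.
\]
For the left-hand side, H\"older's inequality combined with the hypothesis $\lambda\geq 2\|\nabla\cL(\bTheta^*)\|_{\max}$ gives
\[
\langle\nabla\cL(\bTheta^*),\bfsym{\Delta}\rangle\geq -\|\nabla\cL(\bTheta^*)\|_{\max}\,\|\bfsym{\Delta}\|_1\geq -\tfrac{\lambda}{2}\bigl(\|\bfsym{\Delta}_\cS\|_{\ell_1}+\|\bfsym{\Delta}_{\cS^{\rm c}}\|_{\ell_1}\bigr).
\]

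Putting the two estimates together and dividing by $\lambda>0$ produces
\[
-\tfrac12\bigl(\|\bfsym{\Delta}_\cS\|_{\ell_1}+\|\bfsym{\Delta}_{\cS^{\rm c}}\|_{\ell_1}\bigr)\leq \|\bfsym{\Delta}_\cS\|_{\ell_1}-\|\bfsym{\Delta}_{\cS^{\rm c}}\|_{\ell_1},
\]
which rearranges directly into $\|\bfsym{\Delta}_{\cS^{\rm c}}\|_{\ell_1}\leq 3\|\bfsym{\Delta}_\cS\|_{\ell_1}$, the claimed cone condition.

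I do not anticipate a serious obstacle. The only point requiring mild care is that the penalty $\|\cdot\|_{\ell_1}$ is imposed only on off-diagonal entries whereas H\"older's inequality naturally generates the full entrywise $\ell_1$ norm of $\bfsym{\Delta}$. Because the diagonal sits inside $\cS$, the diagonal contribution of $\|\bfsym{\Delta}\|_1$ is automatically absorbed into $\|\bfsym{\Delta}_\cS\|_{\ell_1}$ without weakening the bound, so the cone constant $3$ emerges cleanly; the argument is entirely analogous to the basic-inequality proof for the Lasso.
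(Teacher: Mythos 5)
Your proof is correct, and it takes a route that is genuinely different from the paper's, though both are standard in the $\ell_1$-penalized $M$-estimation literature. The paper argues at the first-order-condition level: it starts from the monotonicity of the gradient of the convex loss, $0\leq\langle\nabla\cL(\widehat\bTheta)-\nabla\cL(\bTheta^*),\widehat\bTheta-\bTheta^*\rangle$, then substitutes the KKT stationarity condition $\nabla\cL(\widehat\bTheta)=-\lambda\widehat\bGamma$ for a specific subgradient $\widehat\bGamma$ of the off-diagonal $\ell_1$ penalty, and exploits the sign structure of $\widehat\bGamma$ (zero on the diagonal, $\sgn(\widehat\Theta_{k\ell})$ on $\cS^{\rm c}$) to extract the $-\lambda\|\bfsym{\Delta}_{\cS^{\rm c}}\|_{\ell_1}+\lambda\|\bfsym{\Delta}_{\cS}\|_{\ell_1}$ cancellation, finishing with H\"older on the remaining term. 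You instead argue at the objective-value level: the basic inequality $\cL(\widehat\bTheta)+\lambda\|\widehat\bTheta\|_{\ell_1}\leq\cL(\bTheta^*)+\lambda\|\bTheta^*\|_{\ell_1}$ combined with the first-order convexity lower bound $\cL(\widehat\bTheta)-\cL(\bTheta^*)\geq\langle\nabla\cL(\bTheta^*),\bfsym{\Delta}\rangle$, followed by the entrywise reverse triangle inequality on $\|\bTheta^*\|_{\ell_1}-\|\widehat\bTheta\|_{\ell_1}$ and H\"older. Your approach avoids having to write down and manipulate a specific subgradient, which is a mild simplification; the paper's approach dualizes this by pushing the subgradient through the inner product directly. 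Both give the cone constant $3$, and both rely on the same two conventions that you correctly flag and handle: that the diagonal of $\bTheta^*$ sits inside $\cS$ (so $\cS^{\rm c}$ is purely off-diagonal) and that the restricted $\ell_1$ norms in the cone statement are interpreted so that $\|\bfsym{\Delta}_{\cS}\|_{\ell_1}+\|\bfsym{\Delta}_{\cS^{\rm c}}\|_{\ell_1}$ dominates the full vectorized $\ell_1$ norm generated by H\"older.
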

\begin{proof}[Proof of Lemma \ref{lemma:cone}]
Set $\hat \bGamma=(\hat{\Gamma}_{k\ell})_{1\leq k,\ell\leq d}\in \mathbb{R}^{d \times d},$ where $\hat{\Gamma}_{k\ell}=\partial |\hat{\Theta}_{k\ell}|\in [−1,1]$ whenever $k\neq \ell$, and $\hat{\Gamma}_{k\ell}=0$ whenever $k=\ell$. Here $\partial f(x_0)$ denotes the subdifferential of $f$ at $x_0$. By the convexity of the loss function and the optimality condition, we have
\$
 0&\leq \langle \nabla\cL (\hbP)-\nabla\cL(\bTheta^*), \hbP-\bTheta^* \rangle\\
 &=\langle - \lambda\widehat\bGamma-\nabla\cL(\bTheta^*), \hbP-\bTheta^*\rangle\\
 &=-\langle \lambda\widehat\bGamma, \hbP-\bTheta^* \rangle-\langle \nabla\cL(\bTheta^*), \hbP-\bTheta^* \rangle\\
 &\leq -\lambda \|\widehat\bTheta_{\cS^{{\rm c}}}-\bTheta^*_{\cS^{{\rm c}}}\|_{\ell_1} +\lambda \|\widehat\bTheta_{\cS}-\bTheta^*_{\cS}\|_{\ell_1} + \frac{\lambda}{2} \|\widehat\bTheta-\bTheta^* \|_{\ell_1}.
\$
Rearranging terms proves the stated result.
\end{proof}

\begin{lemma}\label{lemma:b.4}
Under the restricted eigenvalue condition, it holds
\$
D_{\cL}^{{\rm s}}(\hbP,\bTheta^*) \geq \kappa_- \|\hbP-\bTheta^* \|_{\text F}^2.
\$
\end{lemma}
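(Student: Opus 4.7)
The plan is to reduce the lemma to a direct application of the restricted eigenvalue condition. Since $\cL(\bTheta) = \tfrac12 \langle \bTheta^2, \hat\bSigma_1^{\TT}\rangle - \tr(\bTheta)$ is a quadratic form in $\bTheta$, its gradient is $\nabla \cL(\bTheta) = \tfrac12 ( \bTheta \hat\bSigma_1^{\TT} + \hat\bSigma_1^{\TT} \bTheta ) - \Ib$. Setting $\Wb = \hbP - \bTheta^*$, a direct computation therefore gives
\[
D_\cL^{\rm s}(\hbP, \bTheta^*) \;=\; \tfrac12 \bigl\langle \Wb \hat\bSigma_1^{\TT} + \hat\bSigma_1^{\TT} \Wb, \, \Wb \bigr\rangle.
\]

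Next I would pass to vectorizations. Using the standard identities $\vecc(\Ab \Bb) = (\Ib \otimes \Ab)\vecc(\Bb)$ and $\vecc(\Bb \Ab) = (\Ab^\T \otimes \Ib)\vecc(\Bb)$ together with the symmetry of $\hat\bSigma_1^{\TT}$, the two trace inner products become $\vecc(\Wb)^\T (\Ib \otimes \hat\bSigma_1^{\TT}) \vecc(\Wb)$ and $\vecc(\Wb)^\T (\hat\bSigma_1^{\TT} \otimes \Ib) \vecc(\Wb)$ respectively. Summing and halving, we recognize exactly the Hessian $\Hb_{\bGamma} = \tfrac12 (\Ib \otimes \hat\bSigma_1^{\TT} + \hat\bSigma_1^{\TT} \otimes \Ib)$, so
\[
D_\cL^{\rm s}(\hbP, \bTheta^*) \;=\; \langle \Wb, \Wb \rangle_{\Hb_{\bGamma}}.
\]

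It then remains to verify that $\Wb$ satisfies the cone membership required by Definition~\ref{def:restricted eigenvalue}. By Lemma~\ref{lemma:cone}, taking $J = \cS$ so that $\cS \subseteq J$ and $|J| = s$, we have $\|\Wb_{J^{\rm c}}\|_{\ell_1} \le 3\|\Wb_J\|_{\ell_1}$, i.e., $\Wb$ lies in the feasible set defining $\kappa_-(\bGamma, 3, s)$. Condition~\ref{REC} with $k = s$ then yields $\langle \Wb, \Wb\rangle_{\Hb_{\bGamma}} \ge \kappa_- \|\Wb\|_{\F}^2$, which finishes the proof.

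The argument is essentially bookkeeping; the only mild subtlety is carrying out the vectorization identity correctly so that the symmetrized Bregman divergence aligns exactly with the quadratic form $\langle \cdot, \cdot\rangle_{\Hb_{\bGamma}}$ used in the definition of the restricted eigenvalue. Once that identification is made, the lemma is immediate from the cone containment supplied by Lemma~\ref{lemma:cone} and Condition~\ref{REC}.
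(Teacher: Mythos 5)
Your proof is correct and follows essentially the same route as the paper's: identify the symmetrized Bregman divergence with the quadratic form $\langle\Wb,\Wb\rangle_{\Hb_{\bGamma}}$, then invoke the cone membership from Lemma~\ref{lemma:cone} and Condition~\ref{REC}. The only (cosmetic) difference is that the paper reaches the quadratic form via the mean value theorem, whereas you observe that $\cL$ is exactly quadratic so $\nabla^2\cL$ is constant and the identity $D_\cL^{\rm s}(\hbP,\bTheta^*)=\vecc(\Wb)^\T\Hb_{\bGamma}\vecc(\Wb)$ holds without any intermediate point; your version is marginally cleaner but not a different argument. (Both proofs, yours and the paper's, implicitly rely on the standing hypothesis $\lambda\geq 2\|\nabla\cL(\bTheta^*)\|_{\max}$ needed for Lemma~\ref{lemma:cone}, which the lemma statement does not make explicit---worth flagging but not a gap relative to the paper.)
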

\begin{proof}
We use $\text{vec}(\Ab)$ to denote the vectorized form of  matrix $\Ab$. Let $\bDelta=\hbP-\bTheta^*$. Then by the mean value theorem, there exists a $\gamma\in [0,1]$ such that
\$
D_\cL^{\rm s}(\hbP, \bTheta^*) &= \langle \nabla\cL(\hbP)-\nabla\cL(\bTheta^*),\hbP-\bTheta^* \rangle\\
&=\text{vec}(\hbP-\bTheta^*)^\T  \nabla^2\cL(\hbP+\gamma\bDelta) \text{vec}(\hbP-\bTheta^*)\\
&\geq \kappa_- \|\bDelta\|_{\text F}^2, 
\$
where the last step is due to the restricted eigenvalue condition and Lemma \ref{lemma:cone}.  This completes the proof. 
\end{proof}
Applying Lemma \ref{lemma:b.4}  gives
 \#\label{tech_2.1}
  \kappa_-\| \hbP-\bTheta^* \|_{\text F}^2 \leq \langle \nabla\cL (\hbP)-\nabla\cL(\bTheta^*), \hbP-\bTheta^* \rangle.
  \#
Next, note that the sub-differential of the norm $\|\cdot\|_{\ell_1}$ evaluated at $\bPsi = (\Psi_{k\ell})_{1\leq k, \ell \leq d}$ consists the set of all symmetric matrices $\bGamma= (\Gamma_{k\ell})_{1\leq k, \ell \leq d}$ such that $\Gamma_{k \ell}=0$ if $k=\ell$, $\Gamma_{k \ell}=\text{sign}(\Psi_{k \ell})$ if $k \ne  \ell$ and $\Psi_{k \ell }\ne 0$, $\Gamma_{k \ell}\in [-1,+1]$ if $k\ne \ell$ and $\Psi_{k \ell}= 0$. Then by the Karush-Kuhn-Tucker conditions, there exists some $\widehat{\bGamma}\in \partial \|\hbP\|_{\ell_1}$ such that 
\$
\nabla\cL(\hbP)+\lambda \widehat{\bGamma}={\bf 0}.
\$  
Plugging the above equality into \eqref{tech_2.1} and  rearranging terms, we obtain
	\#\label{tech_2.3}
 &\kappa_-\|\hbP-\bTheta^* \|_\F^2+\underbrace{\langle\nabla\cL(\bTheta^* ),\hbP-\bTheta^* \rangle}_{\text{\Rom{1}}} +\underbrace{\langle\lambda\widehat{\bGamma},\hbP -\bTheta^* \rangle}_{\text{\Rom{2}}}\leq 0.
	\#
We bound terms \Rom{1} and \Rom{2} separately, starting with \Rom{1}. Our first observation is
\$
\nabla\cL(\bTheta^*)=(\bTheta^* \widehat\bSigma_1^\TT-\Ib)/2+(\widehat\bSigma_1^\TT\bTheta^* - \Ib)/2.
\$
By Theorem \ref{thm:element-wise}, we obtain that with probability at least $1-2\delta$,
\$
\|\nabla\cL (\bTheta^*)\|_{\max}
&\leq \|\bTheta^*\|_{1,1}\|\widehat\bSigma_1^\TT-\bSigma \|_{\max}\leq 2 M \|\Vb \|_{\max}  \sqrt{\frac{2\log d + \log \delta^{-1} }{\lfloor n/2\rfloor } } \leq  \lambda/2.  
\$
Let $\cS$ be the support of nonzero elements of $\bTheta^*$ and  $\cS^{{\rm c}}$ be its complement  with respect to the full index set $\{(k,\ell):1\leq k, \ell \leq d\}$. For term \Rom{1}, separating the support of $\nabla\cL(\bTheta^*)$ and $\hbP-\bTheta^*$ to $\cS$ and $\cS^{{\rm c}}$ and applying the matrix H{\"o}lder inequality, we obtain
	\$
	\langle\nabla\cL(\bTheta^*),\hbP-\bTheta^*\rangle
	&=\langle (\nabla\cL(\bTheta^* ) )_{\cS}, (\hbP-\bTheta^*)_{\cS}\rangle +\langle (\nabla\cL(\bTheta^*))_{\cS^{{\rm c}} }, (\hbP - \bTheta^* )_{\cS^{{\rm c}}}\rangle  \notag\\
	& \geq -\|   (\nabla\cL(\bTheta^*) )_{\cS}\|_{\text{F}}\|  (\hbP - \bTheta^* )_{\cS}\|_{\text F} -\| (\nabla\cL(\bTheta^*) )_{\cS^{{\rm c}}}\|_{\text{F}}\| (\hbP - \bTheta^* )_{\cS^{{\rm c}} }\|_{\text{F}}.
	\$
	For term \Rom{2}, separating the support of $\lambda\widehat{\bGamma}$ and $\hbP-\bTheta^*$  to  $\cS$ and $\cS^{{\rm c}}$, we have	
	\#\label{tech_2.5}
	\langle \lambda \widehat\bGamma, \hbP -\bTheta^* \rangle&=\langle\lambda\widehat\bGamma_{\cS },(\hbP-\bTheta^*)_{\cS}\rangle+\langle\lambda\widehat\bGamma_{\cS^{{\rm c}} },(\hbP-\bTheta^*)_{\cS^{{\rm c}}}\rangle .
	\#
Let $1_\cA\in \RR^{d\times d}$ be a $d$-by-$d$ matrix such that $1_{k \ell}=1$ if $(k,\ell)\in \cA$, $1_{k \ell}=0$ otherwise. For  the last term in the above equality, we have 
\#\label{tech_2.6}
	\langle\lambda\widehat\bGamma_{\cS^{{\rm c}}},(\hbP-\bTheta)_{\cS^{{\rm c}}}\rangle&=\langle\lambda  \cdot 1_{\cS^{{\rm c}}},|\hbP_{\cS^{{\rm c}}}|\rangle=\langle\lambda  \cdot 1_{\cS^{{\rm c}}}, |(\hbP-\bTheta)_{\cS^{{\rm c}}}|\rangle.
	\#
	Plugging \eqref{tech_2.6} into \eqref{tech_2.5} and applying the matrix H{\"o}lder inequality yields
	\$
	\langle \lambda \widehat\bGamma,\hbP-\bTheta^* \rangle 
	&= \langle\lambda\widehat\bGamma_{\cS},(\hbP-\bTheta^*)_{\cS}\rangle+\langle\lambda  \cdot 1_{S^{{\rm c}}}, |(\hbP-\bTheta^*)_{\cS^{{\rm c}}}|\rangle \notag\\
	&= \langle\lambda\widehat\bGamma_{\cS},(\hbP-\bTheta^*)_{\cS}\rangle+\|\lambda  \cdot 1_{\cS^{{\rm c}}}\|_{\text F}\|(\hbP-\bTheta^*)_{\cS^{{\rm c}}}\|_{\text F}\\
	&\geq -\|\lambda \cdot 1_{\cS}\|_{\text F}\|(\hbP - \bTheta^*)_{\cS}\|_{\text F}+\lambda \sqrt{s}\|(\hbP-\bTheta^*)_{\cS^{{\rm c}}}\|_{\text F}.
\$
	Plugging the bounds for  \Rom{1} and \Rom{2}  back into \eqref{tech_2.3}, we find 
	\$ 
	&\kappa_- \|\hbP-\bTheta^* \|^2_{\text F}+ (\|\lambda \cdot 1_{\cS^{{\rm c}}}\|_{\text F}-\|(\nabla\cL(\bTheta^*))_{\cS^{{\rm c}}}\|_{\text F})\|(\hbP-\bTheta^*)_{\cS^{{\rm c}}}\|_{\text F}\\
	& ~~~~\leq (\| (\nabla\cL(\bTheta^*) )_{\cS}\|_{\text F}+\|\lambda \cdot 1_{\cS}\|_{\text F})\|\hbP-\bTheta^*\|_{\text F}.
	\$
Since $\|(\nabla\cL(\bTheta^*))_{\cS^{{\rm c}}}\|_{\text F}  \leq |\cS^{{\rm c}}|^{1/2} \|  \nabla\cL(\bTheta^*) \|_{\max} \leq  |\cS^{{\rm c}}|^{1/2} \lambda  = \|\lambda \cdot 1_{\cS^{{\rm c}}}\|_{\text F}$, it follows that
\#
\kappa_- \|\hbP-\bTheta^* \|^2_{\text F} \leq (\|(\nabla\cL(\bTheta^*))_{\cS}\|_{\text F}+\|\lambda \cdot 1_{\cS}\|_{\text F})\|\hbP-\bTheta^*\|_{\text F}. \nn
\#
Canceling $\|\widehat \bTheta-\bTheta^*\|_\F$ on both sides yields
        \$
	\kappa_-\|\hbP-\bTheta^*\|_{\text F}
	&\leq \|\lambda \cdot 1_S\|_{\text F}+\|\nabla\cL(\bTheta^*)_{\cS}\|_{\text F}
	\leq 3\lambda\sqrt{s}/2 
	\$
under the scaling $\lambda\geq 2\|\nabla\cL(\bTheta^*)\|_{\max}$. Plugging $\lambda$ completes the proof.   \qed

\section{Robust estimation and inference under factor models}
\label{sec:factor.model}

As a complement to the three examples considered in the main text, in this section we discuss  robust covariance estimation (Section \ref{sec:factor}) and inference (Section \ref{robust.FDP}) under factor models, which might be of independent interest. In Section \ref{robust.FDP}, we provide a self-contained analysis to prove the consistency of estimating the false discovery proportion, while there is no such a theoretical guarantee in \cite{FKSZ2019} without using sample splitting.

\subsection{Covariance estimation through factor models}
\label{sec:factor}

Consider the approximate factor model of the form $\bX = (X_1,\ldots, X_d)^\T = \bmu + \Bb \bbf + \bvarepsilon$, from which we observe
\#
	  \bX_i = (X_{i1}, \ldots, X_{id})^\T = \bmu + \Bb \bbf_i + \bvarepsilon_i , \ \ i=1,\ldots, n, \label{afm}
\#
where $\bmu$ is a $d$-dimensional unknown mean vector, $\Bb = (\bb_1, \ldots, \bb_d)^\T \in \RR^{d\times r}$ is the factor loading matrix,  $\bbf_i \in \RR^r$ is a vector of common factors to the $i$th observation and is independent of the idiosyncratic noise $\bvarepsilon_i$. 
For more details about factor analysis, we refer the readers to \cite{AR1956}, \cite{CR1983}, \cite{BL2012} and \cite{FH2017}, among others. 
Factor pricing model has been widely used in  financial economics, where $X_{ik}$ is the excess return of fund/asset $k$ at time $i$, $\bbf_i$'s are the systematic risk factors related to some specific linear pricing model, such as the capital asset pricing model (CAPM) \citep{S1964}, and the Fama-French three-factor model \citep{FF1993}.

Under model \eqref{afm}, the covariance matrix of $\bX $ can be written as
\#
	\bSigma = (\sigma_{k\ell})_{1\leq k,\ell \leq d} = \Bb \cov(\bbf) \Bb^\T + \bSigma_{\varepsilon}, \label{cov.X}
\#
where $\bSigma_\varepsilon = (\sigma_{\varepsilon , k\ell})_{1\leq k,\ell \leq d}$ denotes the covariance matrix of $\bvarepsilon = (\varepsilon_1, \ldots \varepsilon_d)^\T$, which is typically assumed to be sparse. When $\bSigma_\varepsilon = \Ib_d$, model \eqref{afm} is known as the strict factor model.
To make the model identifiable, following \cite{BL2012} we assume that $\cov(\bbf) = \Ib_r$ and that the columns of $\Bb$ are orthogonal.

We consider the robust estimation of $\bSigma$ based on independent observations $\bX_1,\ldots, \bX_n$ from model \eqref{afm}. By \eqref{cov.X} and the identifiability condition, $\bSigma$ is comprised of two components: the low-rank component $\Bb \Bb^\T$ and the sparse component $\bSigma_{\varepsilon}$. Using a pilot robust covariance estimator $\hat \bSigma_{1}^{\TT}$ given in \eqref{element-wise.censored.est} or $\hat\bSigma^{\HH}_1$ given in \eqref{Huber1}, we propose the following robust version of the principal orthogonal complement thresholding (POET) procedure \citep{FLM2013}:

\begin{enumerate}
\item[(i)] Let $\hat \lambda_1 \geq \hat \lambda_2 \geq \cdots \geq \hat \lambda_r$ be the top $r$ eigenvalues of $\hat  \bSigma_1^{\HH}$ (or $\hat{\bSigma}_1^{\TT}$) with corresponding eigenvectors $\hat \bv_1 , \hat  \bv_2 , \ldots, \hat \bv_r$. Compute the principal orthogonal complement 
\#
	\hat \bSigma_\varepsilon = (\hat \sigma_{\varepsilon, k\ell })_{1\leq k,\ell \leq d} = \hat \bSigma_1^{\HH}  -  \hat \Vb \hat \bLambda \hat \Vb^\T,  \label{poc}
\#
where $\hat \Vb = (\hat \bv_1 ,  \ldots, \hat \bv_r)$ and $\hat{\bLambda} =  {\rm diag}\,( \hat \lambda_1, \ldots, \hat \lambda_r)$.

\item[(ii)] To achieve sparsity, apply the adaptive thresholding method \citep{RLZ2009, CL2011} to $\hat \bSigma_\varepsilon$ and obtain $\hat  \bSigma_\varepsilon^{\cT} = (\hat{\sigma}^{\cT}_{\varepsilon, k\ell})_{1\leq k,\ell\leq d}$ such that  
\#  \label{soft.thresholding}
	 \hat{\sigma}^{\cT}_{\varepsilon, k\ell} = \begin{cases}
	 \hat{\sigma}_{\varepsilon, k\ell}  & \mbox{ if } k=\ell,  \\
	 s_{k \ell}( \hat{\sigma}_{\varepsilon, k\ell} )  & \mbox{ if } k \neq \ell,
	 \end{cases} 
\#
where $s_{k\ell} (z) = \sgn(z) (|z| - \lambda_{k\ell})$, $z\in \RR$ is the soft thresholding function with $\lambda_{k\ell} = \lambda ( \hat \sigma_{\varepsilon, kk}  \,\hat \sigma_{\varepsilon, \ell \ell })^{1/2}$ and $\lambda>0$ being a regularization parameter.

\item[(iii)] Obtain the final estimator of $\bSigma$  as $\hat{\bSigma} = \hat \Vb \hat \bLambda \hat \Vb^\T + \hat{\bSigma}_\varepsilon^{\cT}$.
\end{enumerate}

\begin{remark}
 The POET method \citep{FLM2013} employs the sample covariance matrix as an initial estimator and has desirable properties for sub-Gaussian data.  
 For elliptical distributions, 
  \cite{FLW2017} proposed to use the marginal Kendall's tau to estimate $\bSigma$, and to use its top $r$ eigenvalues and the spatial Kendall's tau to estimate the corresponding leading eigenvectors. In the above robust POET procedure, we only need to compute one initial estimator of $\bSigma$ and moreover, optimal convergence rates can be achieved in high dimensions under finite fourth moment conditions; see Theorem \ref{thm:poet}. 
\end{remark}


\begin{cond} \label{cond:fm}
Under model \eqref{afm}, the latent factor $\bbf \in \RR^r$ and the idiosyncratic noise $\bvarepsilon\in \RR^d$ are independent. Moreover,
\begin{itemize}
\item[(i)](Identifiability) $\cov(\bbf) = \Ib_r$ and the columns of $\Bb$ are orthogonal;

\item[(ii)](Pervasiveness) there exist positive constants $c_l, c_u$ and $C_1$ such that 
$$
	c_l \leq  \min_{1\leq \ell \leq r}\{ \lambda_\ell (\Bb^\T \Bb /d ) -  \lambda_{\ell +1}(\Bb^\T \Bb /d ) \} \leq c_u ~\mbox{ with }~  \lambda_{r+1}(\Bb^\T \Bb /d ) =0,
$$
and $\max\{ \| \Bb \|_{\max} , \| \bSigma_\varepsilon \|_2  \}\leq C_1$;

\item[(iii)](Moment condition) $\max_{1\leq \ell \leq d} {\rm kurt}(X_\ell)   \leq C_2$ for some constant $C_2 >0$;

\item[(iv)](Sparsity) $\bSigma_\varepsilon$ is sparse in the sense that $s:= \max_{1\leq k\leq d} \sum_{\ell =1}^d I(\sigma_{\varepsilon, k\ell} \neq 0) $ satisfies
$$
	s^2 \log d = o(n) ~\mbox{ and }~ s^2 = o(d) ~\mbox{ as }~ n, d \to \infty.
$$
\end{itemize}
\end{cond}

\begin{theorem} \label{thm:poet}
Under Condition~\ref{cond:fm}, the robust POET estimator with 
$$
	\tau_{k\ell } \asymp \sqrt{n/(\log d )} , \, 1\leq k, \ell \leq d , ~\mbox{ and }~ \lambda \asymp w_{n,d} := \sqrt{ \log(d) /n} + d^{-1/2} 
$$
satisfies
\#
	 \| \hat{\bSigma}_\varepsilon^{\cT}  - \bSigma_\varepsilon \|_{\max}  =  O_{\PP}(  w_{n,d} ) ,  \quad \| \hat{\bSigma}_\varepsilon^{\cT}  - \bSigma_\varepsilon \|_2 =  O_{\PP}(s w_{n,d} ) ,  \label{sig.noise.bound}  \\
 	\| \hat{\bSigma} - \bSigma \|_{\max} = O_{\PP}(w_{n,d} )   ~\mbox{ and }~  \| \hat{\bSigma} - \bSigma \|_2 = O_{\PP}(d  w_{n,d}  )   \label{sig.total.bound}
\#
as $n, d \to \infty$.
\end{theorem}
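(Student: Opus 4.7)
The plan is to adapt the POET analysis of \cite{FLM2013} by replacing the sample covariance with the robust pilot estimator and invoking Theorem~\ref{thm:huber-type} in place of the standard sub-Gaussian concentration. Under Condition~\ref{cond:fm}(iii) the kurtosis assumption gives $\|\Vb\|_{\max} = O(1)$ (with $\Vb$ as in Theorem~\ref{thm:huber-type}), and the choice $\tau_{k\ell} \asymp \sqrt{n/\log d}$ yields the single probabilistic ingredient
\[
\|\hat{\bSigma}_1^{\HH} - \bSigma\|_{\max} = O_{\PP}\!\left(\sqrt{(\log d)/n}\,\right).
\]
All subsequent steps are deterministic consequences of this bound, the spiked spectrum of $\bSigma = \Bb\Bb^\T + \bSigma_\varepsilon$, and the sparsity in Condition~\ref{cond:fm}(iv).

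Next I would carry out the spectral perturbation analysis. Condition~\ref{cond:fm}(ii) forces $\lambda_k(\bSigma) \asymp d$ for $k\leq r$ with all eigengaps of order $d$, $\lambda_k(\bSigma) = O(1)$ for $k>r$, and the incoherence property $\|\bv_k\|_\infty = O(d^{-1/2})$ for the top-$r$ eigenvectors $\bv_k$ of $\bSigma$ (derived from $\bv_k \approx \Bb\balpha_k/\|\Bb\balpha_k\|_2$ together with $\|\Bb\|_{\max} = O(1)$). Combining the crude inequality $\|\hat{\bSigma}_1^{\HH} - \bSigma\|_2 \leq d\,\|\hat{\bSigma}_1^{\HH} - \bSigma\|_{\max}$ with Weyl's inequality and the Davis--Kahan $\sin\Theta$ theorem gives, after sign alignment,
\[
\max_{k\leq r}\bigl|\hat\lambda_k/\lambda_k(\bSigma) - 1\bigr| = O_{\PP}\!\left(\sqrt{(\log d)/n}\,\right),\qquad \max_{k\leq r}\|\hat\bv_k - \bv_k\|_2 = O_{\PP}\!\left(\sqrt{(\log d)/n}\,\right).
\]

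The third and most delicate step is to decompose
\[
\hat{\bSigma}_\varepsilon - \bSigma_\varepsilon = \bigl(\hat{\bSigma}_1^{\HH} - \bSigma\bigr) - \sum_{k=1}^r\bigl(\hat\lambda_k\hat\bv_k\hat\bv_k^\T - \lambda_k(\bSigma)\bv_k\bv_k^\T\bigr) + \bigl(\textstyle\sum_{k=1}^r \lambda_k(\bSigma)\bv_k\bv_k^\T - \Bb\Bb^\T\bigr),
\]
and control each piece in max norm. The first bracket is $O_{\PP}(\sqrt{(\log d)/n})$ by step one. The middle sum is split into eigenvalue-perturbation, eigenvector-perturbation and cross contributions; combining $\hat\lambda_k \asymp d$, the $\ell_2$ eigenvector rate, and an entry-wise bootstrap based on the eigen-equation $\hat{\bSigma}_1^{\HH}\hat\bv_k = \hat\lambda_k\hat\bv_k$ (which upgrades $\|\hat\bv_k-\bv_k\|_2$ control to $\|\hat\bv_k-\bv_k\|_\infty$ control via $\|\bv_k\|_\infty = O(d^{-1/2})$) delivers $O_{\PP}(w_{n,d})$ in max norm. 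The last bracket is the deterministic spiked-model bias: a direct factor-model calculation comparing $\bv_k$ to the eigenvectors of $\Bb\Bb^\T$ (which differ in $\ell_2$ by $O(\|\bSigma_\varepsilon\|_2/d) = O(1/d)$) combined with the incoherence bound shows it is $O(d^{-1/2})$ in max norm. Thus $\|\hat{\bSigma}_\varepsilon - \bSigma_\varepsilon\|_{\max} = O_{\PP}(w_{n,d})$, and standard adaptive soft-thresholding analysis at level $\lambda \asymp w_{n,d}$ preserves this max-norm rate and, via $\|\Ab\|_2 \leq \max_k\sum_\ell|A_{k\ell}|$ applied to the row-$s$-sparse error matrix, produces the spectral-norm rate $O_{\PP}(s\,w_{n,d})$ in \eqref{sig.noise.bound}.

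Finally, \eqref{sig.total.bound} follows from writing $\hat\bSigma - \bSigma = (\hat\Vb\hat\bLambda\hat\Vb^\T - \Bb\Bb^\T) + (\hat{\bSigma}_\varepsilon^{\cT} - \bSigma_\varepsilon)$: the max-norm rate is immediate from what precedes, while in spectral norm the low-rank piece inflates by a factor of $d$ because $\hat\lambda_k \asymp d$. The main obstacle throughout is promoting Davis--Kahan-type $\ell_2$ eigenvector control to entry-wise control: a naive use of Davis--Kahan alone loses a factor $\sqrt{d}$ in max norm and would destroy the stated rate. The resolution relies crucially on the pervasiveness-induced delocalization $\|\bv_k\|_\infty = O(d^{-1/2})$ together with the bootstrapping identity derived from the eigen-equation, which is the heart of the whole proof.
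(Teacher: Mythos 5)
Your proposal is correct and rests on exactly the ingredients the paper uses: Theorem~\ref{thm:huber-type} for max-norm concentration of the Huber pilot, Weyl's inequality together with $\|\Ab\|_2 \le d\|\Ab\|_{\max}$ and the $\asymp d$ eigengaps from pervasiveness for relative eigenvalue control, and an $\ell_\infty$ eigenvector perturbation bound driven by the delocalization $\|\bv_k\|_\infty = O(d^{-1/2})$ of the leading population eigenvectors. The only substantive difference is that the paper does not carry out your eigen-equation bootstrap or the POET decomposition and thresholding analysis by hand; it verifies precisely these three pilot-estimator conditions and then cites Theorem~3 and Proposition~3 of \cite{FWZ2016} for the $\ell_\infty$ eigenvector perturbation (which packages your bootstrap argument) and Theorem~2.1 with (A.1) of \cite{FLW2017} for the generic POET guarantees given a consistent pilot estimator.
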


\subsection{Factor-adjusted multiple testing}
\label{robust.FDP}

Here we consider the problem of simultaneously testing the hypotheses
\#
	H_{0k} : \mu_k = 0 ~\mbox{ versus }~ H_{1k}: \mu_k \neq 0, ~\mbox{ for } k=1,\ldots, d ,  \label{mt} 
\#
under model \eqref{afm}. Although the key implication from the multi-factor pricing theory is that the intercept $\mu_k$ should be zero, known as the ``mean-variance efficiency'' pricing, for any asset $k$, an important question is whether such a pricing theory can be validated by empirical data. In fact, a very small proportion of $\mu_k$'s might be nonzero according to the Berk and Green equilibrium \citep{BG2004}. Various statistical methods have been proposed to identify those positive $\mu_k$'s \citep{BSW2010, FH2017, LD2017}. These works assume that both the factor and idiosyncratic noise follow multivariate normal distributions. To accommodate the heavy-tailed character of empirical data, we develop a robust multiple testing procedure that controls the overall false discovery rate or false discovery proportion.

For each $1\leq k\leq d$, let $T_k$ be a generic test statistic for testing the individual hypothesis $H_{0k} : \mu_k = 0$. For any threshold level $z>0$, we reject the $j$th hypothesis whenever $|T_j| \geq z$. The numbers of total discoveries $R(z)$ and false discoveries $V(z)$ are defined by
\#
	R(z) = \sum_{k=1}^d  I( | T_k | \geq z )  \ \ \mbox{ and } \ \ V(z) = \sum_{ k \in \mathcal{H}_0} I( |T_k | \geq z ), 
\#
respectively, where $\mathcal{H}_0 = \{  1\leq k\leq d: \mu_k = 0\}$. The main object of interest is the false discovery proportion (FDP), given by
\$
\FDP(z) = V(z)/ R(z). 
\$ 
Throughout we use the convention $0/0 = 0$. Note that $R(z)$ is observable given all the test statistics, while $V(z)$ is an unobservable random variable that needs to be estimated. For testing individual hypotheses $H_{0 k}$, the standardized means $Z_k$, where $Z_k = n^{-1/2} \sn X_{ik}$, are sensitive to the tails of the sampling distributions. In particular, when the number of features $d$ is large, stochastic outliers from the test statistics $Z_k$ can be so large that they are mistakenly regarded as discoveries. Motivated by recent advances on robust estimation and inference \citep{C2012, ZBFL2017}, we consider the following robust $M$-estimator of $\mu_k$:
\#
	\hat{\mu}_k  = \argmin_{\theta \in \RR } \sn \ell_{\tau_k} (X_{i k } - \theta ) ~\mbox{ for some }~ \tau_k >0 .  \label{rest}
\#
The corresponding test statistic is then given by $T_k= \sqrt{n} \,\hat{\mu}_k$ for $k =1,\ldots, d$.

Based on the law of large numbers, we define the approximate FDP by
\# \label{AFDP.def}
	\FDP_{{\rm A}}( z ) = \frac{1}{R(z)}  \sum_{ k=1}^d  \bigg\{ \Phi\bigg(  \frac{ - z + \sqrt{n}\, \bb_k^\T \overline{\bbf} }{\sqrt{\sigma_{kk} - \| \bb_k \|_2^2}}  \bigg) + \Phi\bigg(  \frac{- z - \sqrt{n}\, \bb_k^\T \overline{\bbf}}{\sqrt{\sigma_{kk} - \| \bb_k\|_2^2 } } \bigg) \bigg\} , 
\#
where $\overline{\bbf} = (1/n)\sn  \bbf_i $. It is shown in the appendix that  the approximate FDP in \eqref{AFDP.def} serves as a conservative surrogate for the true FDP.

Note that the approximate FDP defined in \eqref{AFDP.def} depends on a number of unknown parameters, say $\{ \bb_k , \sigma_{kk} \}_{k=1}^d$ and $\overline{\bbf}$. In this section, we describe robust procedures to estimate these quantities using the only observations $\bX_1,\ldots, \bX_n$.
\begin{enumerate}
\item[(a)] Compute the Huber-type covariance estimator $\hat{\bSigma}_1^{\HH} = (\hat{\sigma}^{\HH}_{1, k\ell})_{1\leq k,\ell \leq d}$  (or the truncated estimator $\hat{\bSigma}^{\TT}_1$), and let $\hat{\lambda}_1 \geq \cdots \geq \hat{\lambda}_r$ and $\hat{\bv}_1,\ldots, \hat{\bv}_r$ be its top $r$ eigenvalues and the corresponding eigenvectors, respectively.

\item[(b)] Compute $\hat{\Bb} = ( \hat{\lambda}^{1/2}_1 \hat{\bv}_1, \ldots,  \hat{\lambda}^{1/2}_r \hat{\bv}_r ) \in \RR^{d\times r}$ and $\hat{\bu} =  \sqrt{n} \, (\hat{\Bb}^\T \hat{\Bb})^{-1} \hat{\Bb}^\T \overline{\bX} \in \RR^r$, which serve as estimators of $\Bb$ and $\sqrt{n} \overline{\bbf}$, respectively. Here $\overline{\bX} = (1/n) \sn \bX_i$.

\item[(c)] Denote by $\hat{\bb}_1, \ldots, \hat{\bb}_d$ the $d$ rows of $\hat{\Bb}$. For any $z\geq 0$, we estimate the approximate FDP $\FDP_{{\rm A}}(z)$ by
\#
	\hat{\FDP}_{{\rm A}}(z) = \frac{1}{R(z)}  \sum_{ k=1}^d  \bigg\{ \Phi\bigg(  \frac{ - z +  \hat \bb_k^\T \hat{\bu} }{\sqrt{ \hat  \sigma_{\varepsilon, kk} }}  \bigg) + \Phi\bigg(  \frac{- z -  \hat \bb_k^\T  \hat{\bu} }{\sqrt{ \hat{\sigma}_{\varepsilon , kk} } } \bigg) \bigg\} ,
\#
where $\hat{\sigma}_{\varepsilon , kk } = \hat \sigma_{1,kk}^{\HH} - \| \hat \bb_k\|_2^2$ for $k=1,\ldots, d$.
\end{enumerate}

The construction of $\hat{\Bb}$ is based on the observation that  principal component analysis and factor analysis are approximately equivalent under the pervasive assumption in high dimensions \citep{FLM2013}. To estimate $\overline{\bbf}$, note from model \eqref{afm} that $\overline{\bX} = \bmu +  \Bb \overline{\bbf} + \overline \bvarepsilon$, where $\bmu$ is assumed to be sparse and therefore is ignored for simplicity.

\begin{theorem} \label{thm:fdp}
Under model \eqref{afm}, assume that $\bbf$ and $\bvarepsilon$ are independent zero-mean random vectors and let $s_1 = \| \bmu \|_0$. Assume (i)--(iii) of Condition~\ref{cond:fm} hold, and that $(n, d, s_1)$ satisfies $\log d = o(n)$ and $n s_1 = o(d)$ as $n, d\to \infty$. 
Then for any $z\geq 0$,
\# \label{FDP.consistency}
	\hat{\FDP}_{{\rm A}}(z) /\FDP_{{\rm A}}(z)   \xrightarrow \PP 1 ~\mbox{ as }~ n,  d \to \infty.
\#
\end{theorem}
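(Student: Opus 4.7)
The plan is to reduce the ratio consistency in \eqref{FDP.consistency} to uniform closeness of the summands. For $z\geq 0$ and $1\leq k\leq d$, set
\[
\phi_k(z) = \Phi\bigg(\frac{-z+\bb_k^\T\sqrt{n}\,\overline{\bbf}}{\sqrt{\sigma_{\varepsilon,kk}}}\bigg) + \Phi\bigg(\frac{-z-\bb_k^\T\sqrt{n}\,\overline{\bbf}}{\sqrt{\sigma_{\varepsilon,kk}}}\bigg),
\]
and let $\hat{\phi}_k(z)$ denote its plug-in counterpart, so that $R(z)\FDP_{\mathrm{A}}(z)=\sum_{k=1}^d\phi_k(z)$ and similarly for $\hat{\FDP}_{\mathrm{A}}$. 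Since $\Phi$ is $(2\pi)^{-1/2}$-Lipschitz and $\sigma_{\varepsilon,kk}=\sigma_{kk}-\|\bb_k\|_2^2$ is bounded above and away from zero under Condition~\ref{cond:fm}(ii), it suffices to establish the two uniform bounds
\[
\max_{1\leq k\leq d}|\hat{\bb}_k^\T\hat{\bu}-\bb_k^\T\sqrt{n}\,\overline{\bbf}|=o_{\PP}(1), \qquad \max_{1\leq k\leq d}|\hat{\sigma}_{\varepsilon,kk}-\sigma_{\varepsilon,kk}|=o_{\PP}(1).
\]
These together imply $\max_k|\hat{\phi}_k(z)-\phi_k(z)|=o_{\PP}(1)$ and hence $\hat{\FDP}_{\mathrm{A}}(z)-\FDP_{\mathrm{A}}(z)=o_{\PP}(1)$. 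As $\FDP_{\mathrm{A}}(z)$ remains bounded below by a positive constant with probability tending to one (a fixed fraction of the $\phi_k(z)$ are of constant order because $\sqrt{n}\,\overline{\bbf}=O_{\PP}(1)$), the ratio converges to $1$ in probability.

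The second displayed bound is the easier one. Theorem~\ref{thm:huber-type}, with the tuning specified in Section~\ref{sec:factor}, gives $\|\hat{\bSigma}_1^{\HH}-\bSigma\|_{\max}=O_{\PP}(\sqrt{\log d/n})$, and in particular the diagonal entries $\hat{\sigma}_{1,kk}^{\HH}$ are uniformly consistent for $\sigma_{kk}$. Moreover, under the pervasive regime in Condition~\ref{cond:fm}(ii) the top $r$ eigenvalues of $\bSigma$ are of order $d$ with an order-$d$ spectral gap, so Weyl's inequality and the Davis--Kahan $\sin\Theta$ theorem applied to $\hat{\bSigma}_1^{\HH}-\bSigma$ (with the relevant perturbation on the leading eigenspace controlled through the low-rank-plus-sparse structure of $\bSigma$) imply that $\hat{\Bb}$ approximates $\Bb$ up to a column-sign matrix absorbable into $\hat{\bu}$, and in particular $\max_k|\,\|\hat{\bb}_k\|_2^2-\|\bb_k\|_2^2|=o_{\PP}(1)$. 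Combining this with $\sigma_{\varepsilon,kk}=\sigma_{kk}-\|\bb_k\|_2^2$ and a triangle inequality yields the second displayed bound.

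For the first bound, observe that $\hat{\Bb}\hat{\bu}=\sqrt{n}\,\hat{\Bb}(\hat{\Bb}^\T\hat{\Bb})^{-1}\hat{\Bb}^\T\overline{\bX}$ is $\sqrt{n}$ times the orthogonal projection of $\overline{\bX}$ onto $\mathrm{col}(\hat{\Bb})$. Using $\overline{\bX}=\bmu+\Bb\,\overline{\bbf}+\overline{\bvarepsilon}$, the error $\hat{\Bb}\hat{\bu}-\sqrt{n}\,\Bb\,\overline{\bbf}$ splits into three pieces: (i) a projection-perturbation term acting on $\sqrt{n}\,\Bb\,\overline{\bbf}$; (ii) a mean-contamination term equal to $\sqrt{n}$ times the projection of $\bmu$ onto $\mathrm{col}(\hat{\Bb})$, controlled by the sparsity $\|\bmu\|_0\leq s_1$ together with the scaling $ns_1=o(d)$ and the identifiability $\|\Bb^\T\Bb\|_2\asymp d$; and (iii) a noise term equal to $\sqrt{n}$ times the projection of $\overline{\bvarepsilon}$ onto a rank-$r$ subspace, handled by a direct concentration argument using $\|\bSigma_\varepsilon\|_2\leq C_1$ and Condition~\ref{cond:fm}(iii). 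The principal obstacle is item~(i): an operator-norm bound on the difference of projectors does not immediately give the required entrywise control on its $k$-th row. To overcome this, I will perform a first-order expansion of $(\hat{\Bb}^\T\hat{\Bb})^{-1}$ around $(\Bb^\T\Bb)^{-1}\asymp d^{-1}\Ib_r$, using the coordinatewise bound on $\hat{\bb}_k-\bb_k$ distilled from the max-norm covariance estimate, and exploit $\sqrt{n}\,\overline{\bbf}=O_{\PP}(1)$ to convert multiplicative loading errors into additive $o_{\PP}(1)$ perturbations uniformly in $k$. Assembling (i)--(iii) via the triangle inequality completes the argument.
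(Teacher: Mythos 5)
Your overall architecture matches the paper's: both decompose the plug-in into the oracle quantity $\bb_k^\T\wt{\bu}$ with $\wt{\bu}=\sqrt{n}(\Bb^\T\Bb)^{-1}\Bb^\T\ol{\bX}$, split off the variance-estimation error, the loading/projection perturbation, and the mean-contamination plus noise terms, and conclude by noting the denominator $\sum_k\phi_k(z)\asymp d$. The genuine difference is the mode of aggregation: you propose to bound $\max_{1\le k\le d}|\hat{\phi}_k(z)-\phi_k(z)|$, whereas the paper only controls the average $\frac{1}{d}\sum_k|\hat{\phi}_k(z)-\phi_k(z)|$, handling the projection term via the Cauchy--Schwarz step $\sum_k|\hat{\bb}_k^\T\hat{\bu}-\bb_k^\T\wt{\bu}|\le d^{1/2}\|\hat{\Bb}\hat{\bu}-\Bb\wt{\bu}\|_2$ so that only an $\ell_2$ bound on the projector difference is needed. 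Since the denominator is of order $d$, the average suffices for ratio consistency; aiming at the max is strictly harder and buys nothing here.

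There is also a concrete gap in how you claim the entrywise loading control. You write that Weyl plus the Davis--Kahan $\sin\Theta$ theorem yield $\max_k|\,\|\hat{\bb}_k\|_2^2-\|\bb_k\|_2^2|=o_\PP(1)$, but Davis--Kahan only furnishes an $\ell_2$ (or operator-norm) bound on the perturbed eigenvectors $\hat{\bv}_\ell-\ol{\bv}_\ell$; it does not control $\|\hat{\bv}_\ell-\ol{\bv}_\ell\|_\infty$, which is exactly what the row-wise quantity $\|\hat{\bb}_k-\bb_k\|_2$ requires (since $\hat{\bb}_k$ picks out the $k$-th coordinate of each $\hat{\bv}_\ell$). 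The paper deliberately invokes a separate $\ell_\infty$ eigenvector perturbation theorem (Fan, Wang and Zhong; Theorem~3 and Proposition~3 there) in conjunction with the max-norm covariance bound to obtain \eqref{linfty.perturbation}, and that is the ingredient your argument is implicitly leaning on when you speak of a ``coordinatewise bound on $\hat{\bb}_k-\bb_k$''. Your first-order expansion of $(\hat{\Bb}^\T\hat{\Bb})^{-1}$ can be made to work once that entrywise eigenvector bound is in hand, but you should not attribute it to Davis--Kahan; that attribution leaves the key step unsupported. Alternatively, you could sidestep the whole issue by adopting the paper's Cauchy--Schwarz aggregation, which only needs the operator-norm bound on $\sum_\ell(\hat{\bv}_\ell\hat{\bv}_\ell^\T-\ol{\bv}_\ell\ol{\bv}_\ell^\T)$ that Davis--Kahan does supply.
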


\subsection{Proof of Theorem~\ref{thm:poet}}

The proof is based on Theorem~2.1 and (A.1) in \cite{FLW2017}, which provides high level results for the generic POET procedure. To that end, it suffices to show that with properly chosen $ \tau_{k \ell}$,
\#
	\| \hat \bSigma_1^{\HH} - \bSigma \|_{\max} = O_{\PP} \{ n^{-1/2} (\log d)^{1/2} \}, \quad \max_{1\leq \ell \leq r} |  \hat{\lambda}_\ell / \lambda_\ell - 1 | = O_{\PP}  \{ n^{-1/2} (\log d)^{1/2} \}   \label{cond.12} \\
   \mbox{and}~~\max_{1\leq \ell \leq r} \| \hat \bv_\ell -  \bv_\ell \|_{\infty} = O_{\PP} \{  (nd)^{-1/2}  (\log d)^{1/2}   \} ,  \label{cond.3}
\#
where $\lambda_1 \geq \cdots \geq \lambda_r$ are the top $r$ eigenvalues of $\bSigma$ and $\bv_1,\ldots, \bv_r$ are the corresponding eigenvectors.

First, applying Theorem~\ref{thm:huber-type} with $\tau_{k\ell}  \asymp  \sqrt{n/(\log d )}$ implies that $	 \| \hat  \bSigma_1^{\HH }  - \bSigma \|_{\max} \lesssim n^{-1/2} (\log d)^{1/2} $ with probability at least $1-d^{-1}$. This verifies the first criterion in \eqref{cond.12}. Next, by Weyl's inequality and the inequality $\| \Ab \|_2 \leq  \| \Ab \|_{1,1}$ for symmetric matrices, we have 
\#
	\max_{1\leq \ell \leq r} |\hat{\lambda}_\ell - \lambda_\ell  | \leq \| \hat  \bSigma_1^{\HH}  - \bSigma \|_2  \leq \|  \hat \bSigma_1^{\HH} - \bSigma \|_{1,1} \leq d \|  \hat \bSigma_1^{\HH}  - \bSigma \|_{\max} . \nn
\#
Let $\overline{\lambda}_1 \geq \cdots \geq \overline{\lambda}_r$ be the top $r$ eigenvalues of $\Bb \Bb^\T$, and therefore of $\Bb^\T \Bb$. Note that, by Weyl's inequality, $\max_{1\leq \ell \leq r } | \lambda_\ell - \overline{\lambda}_\ell | \leq \| \bSigma_\varepsilon \|_2$. It thus follows from Condition~\ref{cond:fm} that
$$
	\min_{1\leq \ell  \leq r-1} | \lambda_\ell - \lambda_{\ell+1} |  \asymp d ~\mbox{ and }~ \lambda_{r} \asymp d ~\mbox{ as } d \to \infty.
$$
Together, the last two displays imply $\max_{1\leq \ell \leq r} |  \hat{\lambda}_\ell / \lambda_\ell - 1 |  \lesssim n^{-1/2} (\log d)^{1/2}$ with probability at least $1-d^{-1}$. Therefore, the second criterion in \eqref{cond.12} is fulfilled.

For \eqref{cond.3}, applying Theorem~3 and Proposition~3 in \cite{FWZ2016} we arrive at 
\#
& \max_{1\leq \ell \leq r} \| \hat \bv_\ell  - \bv_\ell \|_{\infty}  \nn \\ 
& \lesssim d^{-3/2} ( r^4 \| \hat \bSigma_1^{\HH} - \bSigma \|_{\infty, \infty} + r^{3/2} \| \hat  \bSigma_1^{\HH} - \bSigma \|_2 )  \lesssim  r^4  d^{-1/2}   \| \hat \bSigma_1^{\HH} - \bSigma \|_{\max} .\nn
\#
This validates \eqref{cond.3}. 

In summary, \eqref{sig.noise.bound} and the first bound in \eqref{sig.total.bound} follow from Theorem~2.1 and (A.1) in \cite{FLW2017}, and the second bound in \eqref{sig.total.bound} follows directly from the fact that $\| \hat{\bSigma}_1^{\HH} - \bSigma \|_2 \leq \| \hat{\bSigma}_1^{\HH} - \bSigma \|_{1,1} \leq d \| \hat{\bSigma}_1^{\HH} - \bSigma \|_{\max}$. \qed

\subsection{Asymptotic property of {\rm FDP}}
\label{sec:B}

In this section, we show that the approximate FDP in \eqref{AFDP.def} serves as a conservative surrogate for the true FDP.

\begin{cond} \label{cond:fdp}
Under model \eqref{afm}, $\bbf$ are $\bvarepsilon$ are independent zero-mean random vectors. (i) $\cov(\bbf) = \Ib_r$ and $\| \bbf \|_{\psi_2} = \sup_{\bu \in \mathbb{S}^{r-1}} \| \bu^\T \bbf \|_{\psi_2} \leq C_f$ for some constant $C_f >0$; (ii) the correlation matrix $\Rb_\varepsilon = ( \varrho_{\varepsilon, k\ell})_{1\leq k,\ell \leq d}$ of $\bvarepsilon$ satisfies $d^{-2}\sum_{1\leq k \neq \ell \leq d} \varrho_{\varepsilon, k\ell} \leq C_0 d^{- \delta_0}$ for some constants $C_0, \delta_0>0$; (iii) $d=d(n) \to \infty$ and $\log d = o(n^{1/2})$ as $n\to \infty$, and $\liminf_{n\to \infty} \frac{d_0}{d} >0$, where $d_0=\sum_{k=1}^d I(\mu_k=0)$; (iv) $C_l \leq \sigma_{\varepsilon, kk} \leq v_k^{1/2} \vee \sigma_{kk} \leq C_u$ for all $1\leq k\leq d$, where $v_k = \EE(\varepsilon_k^4)$ and $C_u , C_l $ are positive constants.
\end{cond}

\begin{theorem} \label{thm:lln}
Assume that Condition~\ref{cond:fdp} holds. In \eqref{rest}, let $\tau_k=a_k\sqrt{n/\log(nd)}$ with $a_k \geq \sigma_{kk}^{1/2}$ for $k=1,\ldots,d$. Then, as $n,d\to \infty$,
\#
	\frac{V(z)}{d_0} 	&= \frac{1}{d_0} \sum_{ k  \in \mathcal{H}_0} \bigg\{ \Phi\bigg(  \frac{ - z + \sqrt{n}\bb_k^\T \overline{\bbf} }{\sqrt{\sigma_{\varepsilon , kk }}}  \bigg) + \Phi\bigg(  \frac{ z - \sqrt{n}\, \bb_k^\T \overline{\bbf}}{\sqrt{\sigma_{\varepsilon, kk} } } \bigg) \bigg\} \nn \\
	& \quad~   +   O_{\PP}\Bigg[  \frac{1}{d^{(\delta_0 \wedge 1 )/2}} +\frac{\log(nd)}{\sqrt{n}} +  \bigg\{ \frac{\log(nd)}{n}  \bigg\}^{1/4} \Bigg]  \label{FDP.lln} 
\#
and
\# 
 \frac{R(z)}{d} & = \frac{1}{d} \sum_{ k =1}^d \bigg\{ \Phi\bigg(  \frac{ - z +  \sqrt{n}(\mu_k + \bb_k^\T \overline{\bbf} )}{\sqrt{\sigma_{\varepsilon , kk }}}  \bigg) + \Phi\bigg(  \frac{- z - \sqrt{n} (\mu_k + \bb_k^\T \overline{\bbf} ) }{\sqrt{\sigma_{\varepsilon, kk} } } \bigg) \bigg\} \nn \\
& \quad~   + O_{\PP}\Bigg[  \frac{1}{d^{(\delta_0 \wedge 1 )/2}} +\frac{\log(nd)}{\sqrt{n}} +  \bigg\{ \frac{\log(nd)}{n}  \bigg\}^{1/4} \Bigg] \label{Rz.lln} 
\#
uniformly over $z\geq 0$. In addition, for any $z\geq 0$ it holds
\#
 \FDP(z)	 = \FDP_{{\rm orc}}( z ) + o_{\PP}(1) ~\mbox{ as } n ,d \to \infty ,
\#
where
 \#
\FDP_{{\rm orc}}( z ) := \frac{1}{R(z) }  \sum_{ k \in \mathcal{H}_0 }  \bigg\{ \Phi\bigg(  \frac{ - z + \sqrt{n}\, \bb_k^\T \overline{\bbf} }{\sqrt{\sigma_{kk} - \| \bb_k \|_2^2}}  \bigg) + \Phi\bigg(  \frac{- z - \sqrt{n}\, \bb_k^\T \overline{\bbf}}{\sqrt{\sigma_{kk} - \| \bb_k\|_2^2 } } \bigg) \bigg\} \nn .
\#
\end{theorem}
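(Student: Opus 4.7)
\textbf{Proof plan for Theorem~\ref{thm:lln}.} The starting point is a Bahadur-type expansion for the adaptive Huber $M$-estimator uniformly over the $d$ marginals. Under Condition~\ref{cond:fdp} and with the prescribed $\tau_k = a_k\sqrt{n/\log(nd)}$ ($a_k \geq \sigma_{kk}^{1/2}$), results of \cite{sun2017adaptive} and \cite{ZBFL2017} yield, for each $k$,
\[
\sqrt{n}\,\hat\mu_k = \sqrt{n}\,\mu_k + \frac{1}{\sqrt{n}}\sum_{i=1}^n (X_{ik}-\mu_k) + r_{n,k},
\]
with $\max_{1\leq k\leq d}|r_{n,k}| = O_\PP(\{\log(nd)/n\}^{1/2} \cdot \sqrt{\log(nd)}) = O_\PP(\log(nd)/\sqrt n)$. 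Decomposing $X_{ik}-\mu_k = \bb_k^\T\bbf_i + \varepsilon_{ik}$ gives
\[
T_k = \sqrt{n}\,\hat\mu_k = \sqrt{n}\mu_k + \sqrt{n}\,\bb_k^\T \overline{\bbf} + Z_{\varepsilon,k} + r_{n,k}, \quad Z_{\varepsilon,k} = n^{-1/2}\sum_{i=1}^n \varepsilon_{ik}.
\]
Consequently, conditioning on $\bbf_1,\ldots,\bbf_n$, the event $\{|T_k|\geq z\}$ is, up to a uniform remainder of order $\log(nd)/\sqrt n$, the event that a centered sum of the $\varepsilon_{ik}$ with variance $\sigma_{\varepsilon,kk}$ exceeds a shifted threshold. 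A standard Berry--Esseen bound (finite fourth moments of $\varepsilon_k$ by Condition~\ref{cond:fdp}(iv)) then approximates $\PP(|T_k|\geq z \mid \bbf_{1:n})$ by the Gaussian expression appearing on the right-hand side of \eqref{FDP.lln} and \eqref{Rz.lln}, with an additional error $O(n^{-1/2})$.

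The second step is to pass from conditional expectations to the empirical counts $V(z)$ and $R(z)$. Writing $\zeta_k(z) = I(|T_k|\geq z) - \PP(|T_k|\geq z\mid \bbf_{1:n})$, I compute the second moment
\[
\EE\bigg[\bigg(\frac{1}{d_0}\sum_{k\in\mathcal H_0}\zeta_k(z)\bigg)^2\,\bigg|\,\bbf_{1:n}\bigg] \leq \frac{1}{d_0^2}\sum_{k,\ell\in\mathcal H_0}|\mathrm{Cov}(\zeta_k(z),\zeta_\ell(z)\mid \bbf_{1:n})|,
\]
and bound the covariances using the average-correlation hypothesis $d^{-2}\sum_{k\neq \ell}\varrho_{\varepsilon,k\ell}\leq C_0 d^{-\delta_0}$ together with a Gaussian-approximation inequality for indicators of sums of weakly correlated variables (of Sidak/Berman type). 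This yields a mean-squared error of order $d^{-\delta_0}+d^{-1}$, hence a fluctuation of order $d^{-(\delta_0\wedge 1)/2}$ after taking square roots. The third error term, $\{\log(nd)/n\}^{1/4}$, emerges because the remainder $r_{n,k}$ perturbs the threshold $z$, and the indicator's sensitivity to a shift of size $\eta$ in the threshold is at most $O(\eta)$ times a Gaussian density bound plus an anti-concentration term of order $\eta^{1/2}$; optimizing with $\eta \asymp \log(nd)/\sqrt n$ gives the fourth-root rate.

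The third step is uniformization over $z\geq 0$. Since $V(\cdot)$ and $R(\cdot)$ are monotone nonincreasing step functions taking values in $\{0,\ldots,d\}$, and the Gaussian-CDF expressions are Lipschitz in $z$, I discretize the relevant range $[0,M_n]$ (with $M_n = C\sqrt{\log d}$, since $\max_k|T_k| = O_\PP(\sqrt{\log d})$) into $O(n^c)$ grid points, apply the pointwise bounds via a union bound, and interpolate using monotonicity, incurring only logarithmic factors absorbed in the existing rates. Finally, the FDP consistency statement follows by combining \eqref{FDP.lln}, \eqref{Rz.lln}, and $\liminf d_0/d > 0$: dividing the approximation for $V(z)$ by the approximation for $R(z)$, the leading Gaussian terms in $V(z)/R(z)$ reproduce exactly $\FDP_{\mathrm{orc}}(z)$, and the stochastic remainders vanish under $\log(nd) = o(n^{1/2})$ and $d\to\infty$.

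\textbf{Main obstacle.} The most delicate piece is the weak-correlation lemma for the indicator sum: one needs to translate the entrywise-correlation bound on $\Rb_\varepsilon$ into a bound on the covariance of $I(|T_k|\geq z)$ and $I(|T_\ell|\geq z)$ that is uniform in $z$ and compatible with the Berry--Esseen discretization. A clean way is to first handle the pure-Gaussian surrogate (replacing $Z_{\varepsilon,k}$ by its Gaussian analogue via coupling), apply a Mehler-type expansion to bound $\mathrm{Cov}(I(|G_k|\geq z),I(|G_\ell|\geq z)) \lesssim |\varrho_{\varepsilon,k\ell}|$ for jointly Gaussian $(G_k,G_\ell)$, and then control the Gaussian approximation error using the finite-fourth-moment Berry--Esseen inequality applied coordinatewise, combined with Slepian-type interpolation across pairs.
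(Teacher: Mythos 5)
Your overall strategy matches the paper's: Bahadur-type expansion of $\hat\mu_k$, conditioning on the factors $\bbf_1,\ldots,\bbf_n$, coordinatewise and pairwise Berry--Esseen plus a Gaussian comparison inequality to exploit Condition~\ref{cond:fdp}(ii), a second-moment bound on the indicator average, and a monotonicity/uniformization step. However, there are two substantive inaccuracies in the execution.

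First, you write the Bahadur representation in terms of the \emph{untruncated} sum, $\sqrt{n}\hat\mu_k = \sqrt{n}\mu_k + n^{-1/2}\sum_i(X_{ik}-\mu_k) + r_{n,k}$ with $\max_k|r_{n,k}| = O_\PP(\log(nd)/\sqrt{n})$, and then split off $Z_{\varepsilon,k}=n^{-1/2}\sum_i\varepsilon_{ik}$. Lemma~\ref{lem2} (Theorem~2.1 of \cite{ZBFL2017}) actually gives the linearization in terms of the \emph{truncated} score $n^{-1/2}\sum_i\psi_{\tau_k}(u_{ik})$, not $n^{-1/2}\sum_i u_{ik}$, and the gap between these two sums is a sum of truncation residuals $u_{ik}\,I(|u_{ik}|>\tau_k)$ that is not uniformly $O_\PP(\log(nd)/\sqrt{n})$ across $k=1,\ldots,d$ under only a fourth-moment assumption on $\varepsilon_k$. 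The paper avoids this by carrying $\psi_{\tau_k}(u_{ik})$ through, conditionally centering it to form $V_k$, and then controlling the resulting conditional-mean bias $R_k$ via Lemma~\ref{lem3}, which compares $\EE_{\bbf}\psi_{\tau_k}(u_k)$ with $\bb_k^\T\bbf$ and bounds $|R_k|\lesssim\sqrt{n}\,\tau_k^{-3}v_k$; the same lemma then compares the conditional variance and covariance of the truncated scores with $\sigma_{\varepsilon,kk}$ and $\sigma_{\varepsilon,k\ell}$. Without this step your Berry--Esseen argument applies to the wrong variance.

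Second, the source of the $\{\log(nd)/n\}^{1/4}$ term is misidentified. You attribute it to an anti-concentration/threshold-shift argument in which a perturbation of order $\eta=\log(nd)/\sqrt{n}$ in $z$ costs $O(\eta^{1/2})$; this mechanism does not appear in the paper (the Lipschitz continuity of $\Phi$ makes the threshold perturbation cost only $O(\eta)$, not $O(\eta^{1/2})$), and the claimed $\eta^{1/2}$ sensitivity is not a standard anti-concentration statement. In the paper, the fourth-root rate comes from taking the square root of the conditional variance bound: after Bentkus' bivariate Berry--Esseen inequality and the Li--Shao normal comparison, the conditional variance of $d_0^{-1}\sum_{k\in\cH_0}I(|V_k+B_k|\geq z)$ is bounded by a sum of terms including $\sqrt{(r+\log n + t)/n}$ with $t=\log(nd)$, and the Chebyshev step then produces $\{\log(nd)/n\}^{1/4}$. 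Your proposed tools for the correlation lemma (Mehler expansion and Slepian interpolation) could likely substitute for the paper's Bentkus + Li--Shao pair, but the variance bookkeeping needs to be done against the truncated, conditionally-centered $V_k$ rather than $Z_{\varepsilon,k}$, and the rate derivation needs to reflect that the dominant $n$-dependent error is the square root of the Berry--Esseen variance contribution, not an anti-concentration penalty.
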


\subsubsection{Preliminaries}

To prove Theorem~\ref{thm:lln}, we need the following results on the robust estimators $\mu_k$'s given in \eqref{rest}. Define $u_k = X_k - \mu_k = \bb_k^\T \bbf + \varepsilon_k$ for $k=1,\ldots , d$. Assume that $\EE(\bbf) = \textbf{0}$, $\EE(\varepsilon_k)=0$ and $\bbf$ are $\varepsilon_k$ are independent. Then we have $\EE(u_k) = 0$ and $\EE(u_k^2) =\sigma_{kk}=\| \bb_k \|_2^2 + \sigma_{\varepsilon,kk}$.

The first lemma is Theorem 5 in \cite{FLW2014} regarding the concentration of the robust mean estimator.

\begin{lemma} \label{lem1}
For every $1\leq k\leq d$ and $t>0$, the estimator $\hat{\mu}_k$ in \eqref{rest} with $\tau_k = a_k (n/t)^{1/2}$ for $a_k \geq \sigma_{kk}^{1/2}$ satisfies $  | \hat{\mu}_k  - \mu_k  |  \leq   4 a_k (t/n)^{1/2}$ with probability at least $1-2e^{-t}$ provided $n\geq 8t$.
\end{lemma}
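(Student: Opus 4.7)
The plan is to adapt Catoni's PAC-Bayes--style exponential deviation argument to Huber's $M$-estimator, using the monotonicity of the score equation. Fix $1\leq k\leq d$ and, for brevity, drop the subscript $k$ throughout, writing $\tau = a\sqrt{n/t}$ with $a\geq \sigma^{1/2}$ and $\sigma^2 = \sigma_{kk} = \var(X_1)$. Introduce the normalized score $\Psi(\theta) = n^{-1}\sn \psi_\tau(X_i-\theta)$. Since $\ell_\tau$ is convex with nonincreasing derivative $\psi_\tau$, the estimator $\hat\mu$ is characterized by $\Psi(\hat\mu)=0$, and for every $y>0$ the event $\{\hat\mu-\mu>y\}$ is contained in $\{\Psi(\mu+y)>0\}$; the lower tail is handled symmetrically.

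The next step is to bound $\PP(\Psi(\mu+y)>0)$ by the exponential Markov inequality, exploiting the scalar inequality $\psi_1(x)\leq \log(1+x+x^2)$ recorded in \eqref{basic.ineq} together with the identity $\psi_\tau(u)/\tau=\psi_1(u/\tau)$. Writing $W_i = X_i-\mu-y$, independence yields
\begin{align*}
\PP\bigl(\Psi(\mu+y)>0\bigr) \leq \EE\exp\!\Bigl(\tfrac{n}{\tau}\Psi(\mu+y)\Bigr) \leq \prod_{i=1}^{n}\EE\!\left(1+\tfrac{W_i}{\tau}+\tfrac{W_i^2}{\tau^2}\right).
\end{align*}
Because $\EE W_i=-y$ and $\EE W_i^2=\sigma^2+y^2$, the inner factor equals $1-y/\tau+(\sigma^2+y^2)/\tau^2$, which is at most $\exp\bigl(-y/\tau+(\sigma^2+y^2)/\tau^2\bigr)$ by $1+z\leq e^z$. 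Consequently,
\begin{align*}
\PP\bigl(\hat\mu-\mu>y\bigr) \leq \exp\!\Bigl(-\tfrac{ny}{\tau}+\tfrac{n(\sigma^2+y^2)}{\tau^2}\Bigr).
\end{align*}

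Finally I would specialize to $y=4a\sqrt{t/n}$ and verify the arithmetic: with $\tau=a\sqrt{n/t}$ one has $ny/\tau=4t$, $n\sigma^2/\tau^2 = t\sigma^2/a^2\leq t$ (using $a^2\geq\sigma^2$), and $ny^2/\tau^2=16t^2/n\leq 2t$ whenever $n\geq 8t$. The exponent is therefore at most $-4t+t+2t=-t$, which yields $\PP(\hat\mu-\mu>4a\sqrt{t/n})\leq e^{-t}$. A symmetric argument based on the companion inequality $\psi_1(x)\geq -\log(1-x+x^2)$ delivers the matching lower-tail bound, and a union bound produces the factor $2$. The only subtlety is arranging the variance and quadratic-in-$y$ contributions so that the leftover $16t^2/n$ term is absorbed with a clean constant; this is precisely why the hypothesis $n\geq 8t$ and the constant $4$ appear. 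There is no serious obstacle, since the argument is a one-dimensional, $m=n$ specialization of the machinery already used to prove Theorem~\ref{thm:huber-type}.
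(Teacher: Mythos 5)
Your proof is correct. Note that the paper does not actually prove Lemma~\ref{lem1}: it simply cites Theorem~5 of Fan, Li and Wang (2017). Your argument—bounding the tails of the decreasing score $\Psi(\theta)=n^{-1}\sum_i\psi_\tau(X_i-\theta)$ via the exponential Markov inequality, using $\psi_1(x)\le\log(1+x+x^2)$ so that each factor is at most $\exp(-y/\tau+(\sigma^2+y^2)/\tau^2)$, then plugging in $y=4a\sqrt{t/n}$ and checking $4t-t-2t=t$ under $a\ge\sigma$ and $n\ge 8t$—is precisely the Catoni-style deviation argument that underlies the cited theorem, and it is the one-sample ($m=n$) specialization, without the Hoeffding decoupling and without the intermediate quadratic solutions $\theta_\pm$, of the machinery the paper uses to prove Theorem~\ref{thm:huber-type}. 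The direct ``plug in $y$'' route you take is a little more elementary than solving $B_\pm(\theta)=0$ and arrives at the same constant $4$ under the same side condition $n\ge 8t$.

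Two small points of hygiene: the inclusion should read $\{\hat\mu-\mu>y\}\subseteq\{\Psi(\mu+y)\ge 0\}$ rather than $\{\Psi(\mu+y)>0\}$, since $\Psi$ is only nonincreasing; exponential Markov handles the non-strict event just as well. And the multiplicative step $\prod_i\EE(1+W_i/\tau+W_i^2/\tau^2)\le\prod_i e^{-y/\tau+(\sigma^2+y^2)/\tau^2}$ is legitimate because $1+x+x^2\ge 3/4>0$ pointwise, so each expectation is a positive number to which $1+z\le e^z$ applies.
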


The next result provides a nonasymptotic Bahadur representation for $\hat{\mu}_k$, which follows directly from Lemma~\ref{lem1} and Theorem~2.1 in \cite{ZBFL2017}. Let $u_{ik} = \bb_k^\T \bbf_i + \varepsilon_{ik}$ for $i=1,\ldots, n$ and $k=1,\ldots ,d$.

\begin{lemma} \label{lem2}
Under the conditions of Lemma~\ref{lem1}, it holds for every $1\leq k\leq d$ that
\#
	\bigg|  \sqrt{n} \,( \hat{\mu}_k - \mu_k )  - \frac{1}{\sqrt{n}} \sn \psi_{\tau_k}(u_{ik}) \bigg| \leq C \frac{a_k  t}{\sqrt{n}} 
\#
with probability at least $1- 3e^{-t}$ as long as $n\geq 8t$, where $C>0$ is an absolute constant and $\psi_\tau(\cdot)$ is given in \eqref{psi.def}.
\end{lemma}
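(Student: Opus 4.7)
The plan is to prove the Bahadur representation by expanding the first-order optimality condition for $\hat\mu_k$ around the true parameter, keeping careful track of the piecewise-linear structure of $\psi_{\tau_k}$. Let $\hat\theta_k = \hat\mu_k - \mu_k$, which by the convexity of the Huber loss satisfies the (sub)gradient equation $\sum_{i=1}^n \psi_{\tau_k}(u_{ik} - \hat\theta_k) = 0$. The key identity I would exploit is the pointwise decomposition
\[
\psi_{\tau_k}(u) - \psi_{\tau_k}(u - \theta) = \theta\cdot\mathbb{1}\{|u|\leq \tau_k\} + r(u,\theta),\qquad |r(u,\theta)| \leq 2|\theta|\cdot \mathbb{1}\{\bigl||u|-\tau_k\bigr|\leq |\theta|\},
\]
which follows by checking the three regions $|u|\leq \tau_k - |\theta|$, $|u| > \tau_k + |\theta|$, and the boundary. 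Summing over $i$ and dividing by $\sqrt{n}$ yields the exact identity
\[
\sqrt{n}\,\hat\theta_k \cdot \hat D_k \;-\; \frac{1}{\sqrt{n}}\sum_{i=1}^n \psi_{\tau_k}(u_{ik}) \;=\; -\frac{1}{\sqrt{n}}\sum_{i=1}^n r(u_{ik},\hat\theta_k),\qquad \hat D_k := \frac{1}{n}\sum_{i=1}^n \mathbb{1}\{|u_{ik}|\leq \tau_k\}.
\]

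Next I would control the three ingredients separately. First, by Lemma~\ref{lem1} with the prescribed $\tau_k = a_k\sqrt{n/t}$, the event $\mathcal{E}_1 = \{|\hat\theta_k|\leq 4a_k\sqrt{t/n}\}$ holds with probability $\geq 1 - 2e^{-t}$, so I can restrict attention to a shrinking neighborhood of the truth. Second, by Chebyshev/Markov, $\mathbb{P}(|u_{ik}|>\tau_k) \leq \sigma_{kk}/\tau_k^2 = t/n$, and Bernstein's inequality for the bounded indicators $\mathbb{1}\{|u_{ik}|>\tau_k\}$ yields that $|\hat D_k - 1| \lesssim \sqrt{t/n}$ on an event $\mathcal{E}_2$ of probability $\geq 1 - e^{-t}$; in particular $\hat D_k$ is bounded away from zero. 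The linear-in-$\hat\theta_k$ bound $|\sqrt{n}\hat\theta_k(\hat D_k - 1)| \lesssim a_k t/\sqrt{n}$ thus follows immediately on $\mathcal{E}_1\cap\mathcal{E}_2$.

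The remaining and main obstacle is the remainder $n^{-1/2}\sum_i r(u_{ik},\hat\theta_k)$. On $\mathcal{E}_1$, it is bounded by
\[
\frac{2|\hat\theta_k|}{\sqrt{n}}\sum_{i=1}^n \mathbb{1}\bigl\{ \bigl||u_{ik}|-\tau_k\bigr| \leq 4a_k\sqrt{t/n}\bigr\},
\]
so I need a uniform empirical-process bound for $n^{-1}\sum_i \mathbb{1}\{||u_{ik}|-\tau_k|\leq \rho\}$ over $\rho \leq 4a_k\sqrt{t/n}$. The expectation of each indicator is $O(a_k\sqrt{t/n}/\sigma_{kk}) \cdot \text{(density bound)}$, but to avoid density assumptions I would instead bound it crudely by $\mathbb{E}[(u_{ik}/\tau_k)^2] \lesssim t/n$ on the tails, together with a Talagrand-type concentration for the supremum of indicators in the class $\{\mathbb{1}\{|u|-\tau_k|\leq \rho\}: \rho\leq 4a_k\sqrt{t/n}\}$ (a VC-class of dimension $O(1)$), giving a bound of the order $\sqrt{t/n}$ with probability $\geq 1 - e^{-t}$. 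Plugging back, the remainder is of order $|\hat\theta_k|\cdot\sqrt{n}\cdot\sqrt{t/n} \lesssim a_k t/\sqrt{n}$, as desired.

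Combining the three pieces via a union bound over $\mathcal{E}_1,\mathcal{E}_2,\mathcal{E}_3$ (each of probability at least $1-e^{-t}$ up to constants) yields the claimed bound with probability $\geq 1 - 3e^{-t}$; the condition $n\geq 8t$ is inherited from Lemma~\ref{lem1} and guarantees $\hat D_k \geq 1/2$. The principal technical burden is the uniform control of the boundary-indicator process in the last paragraph, which is precisely the content of Theorem~2.1 of \cite{ZBFL2017}; my plan is essentially a standalone rederivation of that result in the univariate setting, tailored to the present $\psi_{\tau_k}$.
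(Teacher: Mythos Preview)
Your approach is correct and, in fact, supplies strictly more than the paper does: the paper's own ``proof'' of Lemma~\ref{lem2} is a one-line appeal to Lemma~\ref{lem1} together with Theorem~2.1 of \cite{ZBFL2017}, whereas you give a self-contained derivation in the univariate case. Your pointwise decomposition and the ensuing identity for $\sqrt n\,\hat\theta_k\hat D_k$ are exactly the right mechanism, and the orders you obtain match the claim.

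Two small simplifications tighten the argument and recover the stated $1-3e^{-t}$ probability cleanly. First, the uniform/VC step is unnecessary: on $\mathcal E_1$ you have $|\hat\theta_k|\le\rho_0:=4a_k\sqrt{t/n}$, and since the boundary indicators are monotone in the radius you may simply bound $\mathbb{1}\{||u_{ik}|-\tau_k|\le|\hat\theta_k|\}\le \mathbb{1}\{|u_{ik}|>\tau_k-\rho_0\}$. Second, the same indicator also dominates $\mathbb{1}\{|u_{ik}|>\tau_k\}=1-\mathbb{1}\{|u_{ik}|\le\tau_k\}$, so a single Bernstein bound on $N:=\sum_i \mathbb{1}\{|u_{ik}|>\tau_k-\rho_0\}$ (whose mean is at most $4t$ by Chebyshev and $n\ge 8t$) controls both $1-\hat D_k$ and the remainder count simultaneously with probability $\ge 1-e^{-t}$. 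Combining with $\mathcal E_1$ (probability $\ge 1-2e^{-t}$) yields the union bound $1-3e^{-t}$ exactly, and both error pieces come out of order $a_k t^{3/2}/n\le a_k t/\sqrt n$ under $n\ge 8t$. This replaces your separate events $\mathcal E_2,\mathcal E_3$ and the Talagrand machinery with a one-line Bernstein step.
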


Let $\tau_k$ be as in Lemma~\ref{lem1} and write
\#
	v_k = \EE( \varepsilon_k^4)  ,  \quad \xi_k = \psi_k(u_k) ~\mbox{ for }~  k=1,\ldots, d. \label{vk.xik}
\#
Here $\xi_k$ are truncated versions of $u_k$. The next result shows that the differences between the first two (conditional) moments of $u_k$ and $\xi_k$ given $\bbf$ decay as $\tau_k$ grows.

\begin{lemma} \label{lem3}
Assume that $v_k <\infty$ for $k=1,\ldots, d$.
\begin{enumerate}
\item  On the event $\cG_k:= \{ |\bb_k^\T \bbf | \leq  \tau_k /2 \}$, the following inequalities hold almost surely:
\#
	| \e_{\bbf} ( \xi_k)   -  \bb_k^\T \bbf | \leq \min(   2 \tau_k^{-1}  \sigma_{\varepsilon, kk}  , 8 \tau_k^{-3}  v_k  ) \label{approxi.mean} \\
 \mbox{ and }~ \sigma_{\varepsilon, kk } -  4 \tau_k^{-2} ( v_k  + \sigma_{\varepsilon, kk}^2)	 \leq  \var_{\bbf}(\xi_k) \leq \sigma_{\varepsilon, kk} , \label{approxi.var}
\#
where $\EE_{\bbf}(\cdot )$ and $\var_{\bbf}(\cdot)$ denote the conditional mean and variance, separately.

\item On the event $\cG_k \cap \cG_\ell$, the following holds almost surely:
\#
 |  \cov_{\bbf}(\xi_k, \xi_\ell) - \sigma_{\varepsilon, k\ell} |  \leq C   \frac{v_k \vee v_\ell }{( \tau_k \wedge \tau_\ell )^2} , \label{approxi.cov}
\#
where $C>0$ is an absolute constant.
\end{enumerate}
\end{lemma}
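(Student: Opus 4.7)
The plan rests on an elementary geometric observation: on $\cG_k = \{|a_k| \leq \tau_k/2\}$, where $a_k := \bb_k^\T \bbf$ so that $u_k = a_k + \varepsilon_k$, the implication $|u_k| > \tau_k \Longrightarrow |\varepsilon_k| > \tau_k/2$ holds, since $|\varepsilon_k| \geq |u_k| - |a_k|$. Moreover, when $|u_k| > \tau_k$ a sign check shows that $\tau_k \sgn(u_k) - a_k$ and $\varepsilon_k = u_k - a_k$ have the same sign with $|\tau_k\sgn(u_k) - a_k| \leq |\varepsilon_k|$, while trivially $\xi_k - a_k = \varepsilon_k$ when $|u_k| \leq \tau_k$; this yields the pointwise inequality $|\xi_k - a_k| \leq |\varepsilon_k|$ throughout $\cG_k$. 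I will use this everywhere, together with the independence of $\bbf$ and $\varepsilon_k$ assumed in Condition~\ref{cond:fdp}, which lets me replace any conditional moment given $\bbf$ by the corresponding unconditional moment of $\varepsilon_k$.

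For (\ref{approxi.mean}), since $\EE_\bbf u_k = a_k$ we have $\EE_\bbf \xi_k - a_k = \EE_\bbf(\xi_k - u_k)$, and on $\cG_k$ the bound $|\xi_k - u_k| = (|u_k| - \tau_k)\, I(|u_k| > \tau_k) \leq |\varepsilon_k|\, I(|\varepsilon_k| > \tau_k/2)$ holds. Markov's inequality applied once with $\EE \varepsilon_k^2 = \sigma_{\varepsilon,kk}$ and once with $\EE \varepsilon_k^4 = v_k$ delivers the two bounds $2\sigma_{\varepsilon,kk}/\tau_k$ and $8 v_k/\tau_k^3$, respectively.

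For (\ref{approxi.var}), I use the identity $\var_\bbf(\xi_k) = \EE_\bbf(\xi_k - a_k)^2 - \Delta_k^2$ with $\Delta_k := \EE_\bbf \xi_k - a_k$. The pointwise bound $|\xi_k - a_k| \leq |\varepsilon_k|$ immediately gives $\EE_\bbf(\xi_k - a_k)^2 \leq \sigma_{\varepsilon,kk}$, proving the upper bound. For the lower bound, the same inequality yields $\varepsilon_k^2 - (\xi_k - a_k)^2 \leq \varepsilon_k^2\, I(|u_k| > \tau_k) \leq \varepsilon_k^2\, I(|\varepsilon_k| > \tau_k/2)$ on $\cG_k$; Markov with the fourth moment produces $\EE_\bbf(\xi_k - a_k)^2 \geq \sigma_{\varepsilon,kk} - 4 v_k/\tau_k^2$, and combining with $\Delta_k^2 \leq 4 \sigma_{\varepsilon,kk}^2/\tau_k^2$ from (\ref{approxi.mean}) yields the stated lower bound $\sigma_{\varepsilon,kk} - 4(v_k + \sigma_{\varepsilon,kk}^2)/\tau_k^2$.

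For (\ref{approxi.cov}), decompose
\[
\cov_\bbf(\xi_k, \xi_\ell) - \sigma_{\varepsilon, k\ell} = \bigl\{ \EE_\bbf[(\xi_k - a_k)(\xi_\ell - a_\ell)] - \EE(\varepsilon_k \varepsilon_\ell) \bigr\} - \Delta_k \Delta_\ell.
\]
On $\cA_{k\ell} := \{|\varepsilon_k| \leq \tau_k/2\} \cap \{|\varepsilon_\ell| \leq \tau_\ell/2\}$ (intersected with $\cG_k \cap \cG_\ell$) the product $(\xi_k - a_k)(\xi_\ell - a_\ell)$ equals $\varepsilon_k \varepsilon_\ell$ exactly, while the pointwise inequality gives $|(\xi_k - a_k)(\xi_\ell - a_\ell) - \varepsilon_k \varepsilon_\ell| \leq 2|\varepsilon_k \varepsilon_\ell|$ everywhere on $\cG_k \cap \cG_\ell$. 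Thus the first bracket is bounded by $2\,\EE\bigl[|\varepsilon_k \varepsilon_\ell|\, (I(|\varepsilon_k| > \tau_k/2) + I(|\varepsilon_\ell| > \tau_\ell/2))\bigr]$; Cauchy-Schwarz with $\EE(\varepsilon_k^2 \varepsilon_\ell^2) \leq \sqrt{v_k v_\ell}$ and $\PP(|\varepsilon_j| > \tau_j/2) \leq 16 v_j/\tau_j^4$ bounds each summand by $O((v_k \vee v_\ell)/(\tau_k \wedge \tau_\ell)^2)$. Finally $|\Delta_k \Delta_\ell| \leq 4\sigma_{\varepsilon,kk}\sigma_{\varepsilon,\ell\ell}/(\tau_k \tau_\ell)$ is absorbed into the same scale via Jensen's $\sigma_{\varepsilon,jj}^2 \leq v_j$ and $1/(\tau_k \tau_\ell) \leq 1/(\tau_k \wedge \tau_\ell)^2$. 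The only step requiring care is the sign argument behind the pointwise inequality $|\xi_k - a_k| \leq |\varepsilon_k|$ on $\cG_k$; every other step is routine moment control.
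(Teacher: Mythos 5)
Your proof is correct, and for part 2 it takes a genuinely cleaner route than the paper's. The decisive device is the pointwise inequality $|\xi_k - a_k| \leq |\varepsilon_k|$ on $\cG_k$, combined with the exact identity $\xi_k - a_k = \varepsilon_k$ whenever $|\varepsilon_k| \leq \tau_k/2$ (so that the truncation is inactive). This lets you bound $|\EE_\bbf[(\xi_k-a_k)(\xi_\ell-a_\ell)] - \sigma_{\varepsilon,k\ell}|$ in a single step via $2\,\EE[|\varepsilon_k\varepsilon_\ell|\{I(|\varepsilon_k|>\tau_k/2)+I(|\varepsilon_\ell|>\tau_\ell/2)\}]$ and Cauchy--Schwarz plus Markov. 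The paper, by contrast, expands $(\xi_k - a_k)(\xi_\ell - a_\ell)$ into the full $3\times 3$ grid of cases $\{|u_k|\leq\tau_k, u_k>\tau_k, u_k<-\tau_k\}\times\{|u_\ell|\leq\tau_\ell, u_\ell>\tau_\ell, u_\ell<-\tau_\ell\}$ (nine terms, display \eqref{cov.I.dec}) and bounds each term separately by H\"older--Markov arguments. Your treatment avoids that bookkeeping entirely; the tradeoff is only in the explicit constants, which in both cases are absorbed into the unspecified absolute constant $C$. For part 1, your argument for \eqref{approxi.mean} is essentially the paper's (you bound $\EE_\bbf(\xi_k - u_k)$ whereas the paper decomposes $\EE_\bbf\xi_k - a_k$ directly into the two tail contributions, but this is the same computation), while for \eqref{approxi.var} your pointwise bound replaces the paper's explicit tail-integral identity $u^2 = 2\int_0^u t\,dt$; both give the constant $4$. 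Two small checks worth making explicit if you write this up: (i) verify that the Cauchy--Schwarz chain produces $v_k^{3/4}v_\ell^{1/4}/\tau_k^2$, which is indeed $\leq (v_k\vee v_\ell)/(\tau_k\wedge\tau_\ell)^2$; and (ii) note that absorbing $|\Delta_k\Delta_\ell|$ uses Jensen, $\sigma_{\varepsilon,jj}\leq v_j^{1/2}$, which you do state.
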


\begin{proof}[Proof of Lemma~\ref{lem3}]
First we prove \eqref{approxi.mean} and \eqref{approxi.var}.
Fix $k$ and let $\tau=\tau_k$ for simplicity.
Since $\varepsilon_k$ and $\bbf$ are independent, we have
\begin{align}
	&	\e_{\bbf}   \xi_k   - \bb_k^\T \bbf \nn \\
	& = - \e_{\bbf}  (\varepsilon_k + \bb_k^\T \bbf - \tau )  I( \varepsilon_k > \tau - \bb_k^\T \bbf ) + \e_{\bbf} (- \varepsilon_k - \bb_k^\T \bbf - \tau) I( \varepsilon_k < -\tau - \bb_k^\T \bbf  ). \nn
\end{align}
Therefore, on the event $\cG_k$,  it holds for any $2\leq q\leq 4$ that
\begin{align}
	  | \e_{\bbf}    \xi_k     - \bb_k^\T \bbf | \leq  \e_{\bbf} \{ | \varepsilon_k | I (| \varepsilon_k | > \tau - |\bb_k^\T \bbf| )   \} \leq   (\tau - |\bb_k^\T \bbf|  )^{1-q} \,\e ( |\varepsilon_k |^q )   \nn
\end{align}
almost surely. This proves \eqref{approxi.mean} by taking $q$ to be 2 and 4. For the conditional variance, by \eqref{approxi.mean} and the decomposition $\e_{\bbf}  ( \xi_k - \bb_k^\T \bbf )^2 = \var_{\bbf} (\xi_k) + ( \e_{\bbf} \xi_k - \bb_k^\T \bbf )^2$, we have
\#
	  \e_{\bbf}  ( \xi_k - \bb_k^\T \bbf )^2 - \frac{\sigma_{\varepsilon, kk}^2 }{(\tau - |\bb_k^\T \bbf | )^2}  \leq  \var_{\bbf} (\xi_k) \leq \e_{\bbf}  ( \xi_k - \bb_k^\T \bbf )^2. \label{var.ineq}
\#
Note that $\xi_k - \bb_k^\T \bbf$ can be written as
\#
   \varepsilon_k I( | \bb_k^\T \bbf + \varepsilon_k | \leq  \tau ) + (\tau - \bb_k^\T \bbf) I(\bb_k^\T \bbf +  \varepsilon_k > \tau  ) - ( \tau + \bb^\T_k \bbf) I(\bb_k^\T \bbf +  \varepsilon_k < -\tau ) . \nn
\#
It follows that
\begin{align}
	&  ( \xi_k  - \bb_k^\T \bbf )^2 \nn \\
	& =  \varepsilon_k^2 I(| \bb_k^\T \bbf +  \varepsilon_k | \leq \tau ) + (\tau - \bb_k^\T \bbf)^2 I(\bb_k^\T \bbf +  \varepsilon_k > \tau ) + ( \tau + \bb^\T_k \bbf)^2I(\bb_k^\T \bbf +  \varepsilon_k<-\tau ) . \nn
\end{align}
Taking conditional expectations on both sides gives
\begin{align}
	   \e_{\bbf} (\xi_k  - \bb_k^\T \bbf )^2 & = \e(  \varepsilon_k^2 ) - \e_{\bbf} \{ \varepsilon_k^2 I(|\bb_k^\T \bbf +  \varepsilon_k| >\tau  )  \}  \nn \\
	 & \quad   + (\tau - \bb_k^\T \bbf)^2 \PP_{\bbf} (   \varepsilon_k > \tau - \bb_k^\T \bbf  ) + (\tau + \bb_k^\T \bbf)^2 \PP_{\bbf} (  \varepsilon_k < - \tau - \bb_k^\T \bbf ).  \nn
\end{align}
Using the identity that $u^2 = 2 \int_0^u t\, dt$ for any $u>0$, we have
\begin{align}
	 &  \e_{\bbf}  \{ \varepsilon_k^2 I( \bb_k^\T \bbf + \varepsilon_k >\tau  ) \} \nn \\
	&  = 2 \e_{\bbf} \int_0^\infty I(  \varepsilon_k > t) I(  \varepsilon_k > \tau - \bb_k^\T \bbf  ) t \, dt \nn \\
	& = 2 \e_{\bbf} \int_0^{\tau - \bb_k^\T \bbf}  I(  \varepsilon_k > \tau - \bb_k^\T \bbf  )t \, dt + 2 \e_{\bbf} \int_{\tau - \bb_k^\T \bbf}^\infty I(  \varepsilon_k > t)  t \, dt \nn \\
	& =  (\tau - \bb_k^\T \bbf)^2  \PP_{\bbf} (  \varepsilon_k > \tau - \bb_k^\T \bbf )    + 2 \int_{\tau - \bb_k^\T \bbf}^\infty \PP(  \varepsilon_k > t)  t \, dt  . \nn
\end{align}
It can be similarly shown that
\begin{align}
	 \e_{\bbf} \{ \varepsilon_k^2 I( \bb_k^\T \bbf + \varepsilon_k <-\tau )  \}  = (\tau + \bb_k^\T \bbf)^2 \PP_{\bbf} ( \varepsilon_k < - \tau - \bb_k^\T \bbf  ) + 2 \int_{\tau + \bb_k^\T \bbf}^\infty \PP(- \varepsilon_k >t )  t \, dt  . \nn
\end{align}
Together, the last three displays imply
\begin{align}
	  0  & \geq  \e_{\bbf} ( \xi_k - \bb_k^\T \bbf )^2 - \e ( \varepsilon_k^2) \nn \\
	& \geq  - 2 \int_{\tau - |\bb_k^\T \bbf|}^{\infty} \PP ( |\varepsilon_k | > t ) t \, dt   \geq  - 2  v_k  \int_{\tau - |\bb_k^\T  \bbf|}^{\infty} \frac{dt}{t^3}   = -  \frac{ v_k }{( \tau - |\bb_k^\T \bbf | )^2 }   . \nn
\end{align}
Combining this with \eqref{var.ineq} and \eqref{approxi.mean} proves \eqref{approxi.var}.

Next, we study the covariance $\cov_{\bbf} ( \xi_k, \xi_\ell ) $ for $k \neq \ell$, which can be written as
\begin{align}
 \cov_{\bbf} ( \xi_k, \xi_\ell )   & = \e_{\bbf}  (  \xi_k  -  \bb_k^\T \bbf  + \bb_k^\T \bbf - \e_{\bbf}  \xi_k  )  (\xi_\ell -  \bb_\ell^\T \bbf  + \bb_\ell^\T \bbf - \e_{\bbf} \xi_\ell )  \nn \\
	& =   \underbrace{ \e_{\bbf} (  \xi_k  -  \bb_k^\T \bbf )( \xi_\ell -  \bb_\ell^\T \bbf ) }_{\Pi_1}    -   \underbrace{ (  \e_{\bbf} \xi_k -  \bb_k^\T  \bbf ) ( \e_{\bbf} \xi_\ell  -  \bb_\ell^\T \bbf  ) }_{\Pi_2} . \nn
\end{align}
For $\Pi_2$, it follows immediately from \eqref{approxi.mean} that $|\Pi_2| \lesssim  (\tau_k  \tau_\ell )^{-1} \sigma_{\varepsilon, kk} \,\sigma_{\varepsilon, \ell \ell}$ almost surely on the event $\cG_{k\ell}:=\{ | \bb_k^\T \bbf | \leq \tau_k/2  \} \cap \{  | \bb_\ell^\T \bbf | \leq \tau_\ell /2 \}$. It remains to consider $\Pi_1$. Recall that $\xi_k - \bb_k^\T \bbf=  \varepsilon_k I(| u_k | \leq \tau_k) + (\tau_k - \bb_k^\T \bbf)I( u_k >\tau_k ) - (\tau_k + \bb_k^\T \bbf)I( u_k < -\tau_k)$, where $u_k =  \bb_k^\T \bbf + \varepsilon_k$. Then, $\Pi_1$ can be written as
\begin{align}
  & \e_{\bbf}  \varepsilon_k \varepsilon_\ell I( | u_k | \leq  \tau_k , |u_\ell  |\leq  \tau_\ell ) 
	 + (\tau_\ell - \bb_\ell^\T \bbf)  \e_{\bbf} \varepsilon_k I( | u_k | \leq  \tau , u_\ell  >\tau ) \nn \\
	& \quad - (\tau_\ell + \bb_\ell^\T \bbf) \e_{\bbf}  \varepsilon_k I( | u_k | \leq \tau_k , u_\ell  < -\tau_\ell ) 
	+ (\tau_k - \bb_k^\T \bbf) \e_{\bbf} \varepsilon_\ell I( u_k > \tau_k , | u_\ell | \leq \tau_\ell ) \nn \\
	& \quad + (\tau_k - \bb_k^\T \bbf)(\tau_\ell - \bb_\ell^\T \bbf) \e_{\bbf}	I(  u_k   > \tau_k, u_\ell > \tau_\ell )
	 - ( \tau_k - \bb_k^\T \bbf )(\tau_\ell  + \bb_\ell^\T \bbf)  \e_{\bbf} I(  u_k   > \tau_k , u_\ell < - \tau_\ell ) \nn \\
	& \quad - (\tau_k  + \bb_k^\T \bbf )\e_{\bbf} \varepsilon_\ell I(  u_k < -\tau_k ,  | u_\ell  | \leq  \tau_\ell ) 
	- (\tau_k + \bb_k^\T \bbf)( \tau_\ell - \bb_\ell^\T \bbf ) \e_{\bbf} I( u_k < -\tau_k , u_\ell >\tau_\ell ) \nn \\
	& \quad + (\tau_k + \bb_k^\T \bbf)(\tau_\ell + \bb_\ell^\T \bbf) \e_{\bbf}I(  u_k <-\tau_k, u_\ell < -\tau_\ell ).  \label{cov.I.dec}
\end{align}
For the first term in \eqref{cov.I.dec}, note that
\begin{align}
	 & \e_{\bbf} \varepsilon_k \varepsilon_\ell I( |u_k| \leq \tau_k , |u_\ell | \leq  \tau_\ell  )    \nn \\
	 & = \cov( \varepsilon_k, \varepsilon_\ell  ) -  \e_{\bbf} \varepsilon_k \varepsilon_\ell I( |u_k| > \tau_k  ) - \e_{\bbf} \varepsilon_k \varepsilon_\ell I(  |u_\ell | > \tau_\ell ) + \e_{\bbf} \varepsilon_k \varepsilon_\ell I( |u_k | > \tau_k , |u_\ell | > \tau_\ell ) , \nn
\end{align}
where $ |  \e_{\bbf} \varepsilon_k \varepsilon_\ell I( |u_k | > \tau_k )  |  \leq   ( \tau_k - | \bb_k^\T \bbf | )^{-2} \e (  | \varepsilon_k |^{3} | \varepsilon_\ell |  ) \leq   4 \tau_k^{-2} v_k^{3/4}  v_\ell^{1/4}$ and
\begin{align}	
	&  | \e_{\bbf} \varepsilon_k \varepsilon_\ell I( |u_k| > \tau_k, |u_\ell | > \tau_\ell )  |   \leq (  \tau_k -  |\bb_k^\T \bbf | )^{-1}(  \tau_\ell - | \bb_\ell^\T \bbf | )^{-1}  \e  (  \varepsilon_k^2  \varepsilon_\ell^2   ) \leq  4 \tau_k^{-1} \tau_\ell^{-1}  v_k^{1/2} v_\ell^{1/2}  \nn
\end{align}
almost surely on $\cG_{k\ell}$. Hence,
\begin{align}
	| \e_{\bbf} \varepsilon_k \varepsilon_\ell  I ( |u_k| \leq \tau_k , |u_\ell  | \leq \tau_\ell  )  - \cov(\varepsilon_k, \varepsilon_\ell )  | \lesssim (\tau_k \wedge \tau_\ell )^{-2} \nn
\end{align}
holds almost surely on the same event. For the remaining terms in \eqref{cov.I.dec}, it can be similarly derived that, almost surely on the same event,
\begin{align}
	|  \e_{\bbf} \varepsilon_k I( |u_k | \leq  \tau_k , u_\ell > \tau_\ell )  | \leq  | \tau_\ell - \bb_\ell^\T \bbf |^{-3} \e ( | \varepsilon_k|  | \varepsilon_\ell |^3 ) , \nn \\
	  |  \e_{\bbf} \varepsilon_k I( |u_k | \leq \tau_k , u_\ell < - \tau_\ell )  |  \leq  | \tau_\ell +   \bb_\ell^\T \bbf |^{-3} \e( | \varepsilon_k| | \varepsilon_\ell |^{3}  ),  \nn \\
	\mbox{ and }~\e_{\bbf} I( u_k > \tau_k , \xi_\ell < -\tau_\ell ) \leq  | \tau_k - \bb_k^\T \bbf  |^{-2}   | \tau_\ell  + \bb_\ell^\T \bbf  |^{-2}   \e (  \varepsilon_k^2 \varepsilon_\ell^{2}  ) .\nn
\end{align}
Putting the pieces together, we arrive at $| \Pi_1 -\cov(\varepsilon_k, \varepsilon_\ell )  | \lesssim  (\tau_k \wedge \tau_\ell)^{-2} (v_k \vee v_\ell)$ almost surely on $\cG_{k\ell}$. This proves the stated result \eqref{approxi.cov}. 
\end{proof}

The next lemma provides several concentration results regarding the factors $\bbf_i$'s and their functionals.

\begin{lemma} \label{lem4}
Assume that (i) of Condition~\ref{cond:fdp} holds. Then, for any $t>0$,
\begin{align}
  \PP \bigg\{ \max_{1\leq i\leq n} \| \bbf_i \|_2 > C_1 C_f(r+\log n + t)^{1/2 } \bigg\} \leq e^{-t},  \label{max.norm.bound} \\
		\PP \{  \| \sqrt{n}\, \ol{\bbf} \|_2 > C_2  C_f (r+t)^{1/2} \} \leq e^{-t} , \label{factor.concentration.ineq} \\
		\mbox{ and }~\PP [  \|  \hat{\bSigma}_f - \Ib_r \|_2 >  \max \{   C_3C_f^2 n^{-1/2} ( r +t)^{1/2} , C_3^2 C_f^4 n^{-1} (r+t)  \} ] \leq  2e^{-t} , \label{factor.cov.concentration} 
\end{align}	
where $\ol{\bbf} = (1/n) \sn \bbf_i$, $\hat{\bSigma}_f = (1/n) \sn \bbf_i \bbf_i^\T$ and $C_1$--$C_3$ are absolute constants.
\end{lemma}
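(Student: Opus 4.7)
The plan is to derive all three bounds by standard $\varepsilon$-net covering arguments combined with tail bounds for linear or quadratic forms in the isotropic sub-Gaussian vector $\bbf$; each reduces to well-known inequalities, and the novelty is only in bookkeeping the constants $C_f,r,n,t$.

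For \eqref{max.norm.bound}, I would use the variational identity $\|\bbf_i\|_2 = \sup_{\bu \in \mathbb{S}^{r-1}} \bu^\T \bbf_i$. Choose a $1/4$-net $\mathcal{N}$ of $\mathbb{S}^{r-1}$ with cardinality $|\mathcal{N}| \le 9^r$ (volumetric bound); the standard net inequality gives $\|\bbf_i\|_2 \le 2\max_{\bu \in \mathcal{N}} |\bu^\T \bbf_i|$. Since $\|\bu^\T \bbf_i\|_{\psi_2} \le C_f$ by assumption, the sub-Gaussian tail bound yields $\PP(|\bu^\T \bbf_i| > s) \le 2\exp(-cs^2/C_f^2)$ for a universal $c>0$. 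Union-bounding over $\mathcal{N}$ and over $i=1,\ldots,n$, and taking $s \asymp C_f\sqrt{r+\log n + t}$ to absorb both the $\log|\mathcal{N}| \lesssim r$ and the $\log n$ factors, produces the bound.

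For \eqref{factor.concentration.ineq}, fix $\bu \in \mathbb{S}^{r-1}$ and observe that $\sqrt{n}\,\bu^\T \ol{\bbf} = n^{-1/2}\sum_{i=1}^n \bu^\T \bbf_i$ is a sum of $n$ independent centered sub-Gaussians, each with $\psi_2$-norm at most $C_f$. The rotation-invariance property of sub-Gaussian norms then yields $\|\sqrt{n}\,\bu^\T \ol{\bbf}\|_{\psi_2} \le C' C_f$ for an absolute constant $C'$, so $\PP(|\sqrt{n}\,\bu^\T\ol{\bbf}| > s) \le 2\exp(-cs^2/C_f^2)$. Repeating the same $1/4$-net argument on $\|\sqrt{n}\,\ol{\bbf}\|_2 = \sup_\bu \bu^\T \sqrt{n}\,\ol{\bbf}$ and taking $s \asymp C_f\sqrt{r+t}$ completes the derivation.

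For \eqref{factor.cov.concentration}, by symmetry and the identity $\EE[(\bu^\T\bbf)^2]=\|\bu\|_2^2=1$,
\[
\|\hat{\bSigma}_f - \Ib_r\|_2 = \sup_{\bu \in \mathbb{S}^{r-1}} \bigg| \frac{1}{n}\sum_{i=1}^n \{(\bu^\T \bbf_i)^2 - 1\} \bigg|.
\]
The net reduction for symmetric quadratic forms gives that this supremum is bounded by $2\max_{\bu \in \mathcal{N}} |\cdot|$ up to absolute constants. Now $(\bu^\T \bbf_i)^2 - 1$ is centered sub-exponential with $\psi_1$-norm bounded by $c_0 C_f^2$ (from $\|(\bu^\T\bbf_i)^2\|_{\psi_1} \le 2\|\bu^\T\bbf_i\|_{\psi_2}^2$), so Bernstein's inequality yields
\[
\PP\bigg( \bigg| \frac{1}{n}\sum_{i=1}^n \{(\bu^\T\bbf_i)^2-1\} \bigg| > s \bigg) \le 2\exp\bigg(-c\,n\min\bigg\{\frac{s^2}{C_f^4}, \frac{s}{C_f^2}\bigg\}\bigg).
\]
Union-bounding over $|\mathcal{N}| \le 9^r$ and choosing $s$ to balance the two regimes at order $\max\{C_f^2 n^{-1/2}(r+t)^{1/2},\, C_f^2 n^{-1}(r+t)\}$ yields the claim.

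The main technical point needing care is the net reduction for the quadratic form in the third bound, namely $\sup_{\bu \in \mathbb{S}^{r-1}} |\bu^\T \Mb \bu| \le 2\max_{\bu \in \mathcal{N}} |\bu^\T \Mb \bu|$ for symmetric $\Mb$, which follows from the Lipschitz estimate $|\bu^\T \Mb\bu - \bv^\T \Mb\bv| \le 2\|\Mb\|_2\|\bu-\bv\|_2$ applied to a maximizer of the left-hand side together with a fixed-point argument. Everything else is a direct appeal to standard tools (sub-Gaussian tail bounds, Bernstein's inequality, volumetric net estimates) such as those collected in \cite{V2012}; no new probabilistic machinery is needed.
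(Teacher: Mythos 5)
Your argument is correct, and for parts \eqref{max.norm.bound} and \eqref{factor.concentration.ineq} it takes a genuinely different (and more self-contained) route than the paper's.

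The paper handles the first two bounds by verifying the sub-Gaussian MGF condition $\EE e^{\bu^\T\bbf_i}\le e^{CC_f^2\|\bu\|_2^2}$ and then invoking the quadratic-form tail inequality of Hsu, Kakade and Zhang (2012, Theorem~2.1), which directly gives $\PP(\|\bX\|_2^2 > \sigma^2(r+2\sqrt{rt}+2t))\le e^{-t}$ with no net required; a union bound over $i$ then gives \eqref{max.norm.bound}. You instead discretize the unit sphere and union over a $9^r$-point net plus a tail bound for linear forms. Both work and produce the same $C_f\sqrt{r+\log n+t}$ scaling; the HKZ route is shorter and avoids carrying the volumetric factor, while yours is more elementary and doesn't require a separate MGF-to-quadratic-form lemma. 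For the third bound the paper simply cites Vershynin's Theorem~5.39, and what you write is essentially the interior of that theorem's proof (net reduction for $\sup_{\bu}|\bu^\T\Mb\bu|$ plus Bernstein for sub-exponential summands), so there the two proofs coincide in substance. One small terminological nit: in \eqref{factor.concentration.ineq}, what gives $\|\sqrt{n}\,\bu^\T\ol{\bbf}\|_{\psi_2}\lesssim C_f$ is additivity of squared $\psi_2$-norms for sums of independent centered variables, not rotation invariance; the computation you intend is right, just the name is off.
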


\begin{proof}[Proof of Lemma~\ref{lem4}]
Under (i) of Condition~\ref{cond:fdp}, it holds for any $\bu \in \RR^r$ that
\begin{align}
    \e e^{  \bu^\T \bbf_i   } \leq e^{C C_f^2 \| \bu\|_2^2 } ~\mbox{ for } i=1,\ldots, n , \nn \\
	\mbox{ and } \e e^{ \sqrt{n}\,\bu^\T  \ol{\bbf} } = \prod_{i=1}^n \e e^{ n^{-1/2} \bu^\T \bbf_i   } \leq \prod_{i=1}^n e^{ C C_f^2 n^{-1} \| \bu \|_2^2  } \leq   e^{ C C_f^2 \| \bu \|_2^2 } ,  \nn
\end{align}
where $C>0$ is an absolute constant. Using Theorem~2.1 in \cite{HKZ2012}, we derive that for any $t>0$,
\#
	\PP \{  \| \bbf_i \|_2^2 >  2C C_f^2 (r + 2\sqrt{rt} + 2t ) \} \leq e^{-t}   \mbox{ and } \PP \{ \|  \sqrt{n} \,\ol{\bbf} \|_2^2  >  2C C_f^2\big( r + 2\sqrt{ r  t} + 2t ) \}  \leq e^{-t}.  \nn 
\#
Then, \eqref{max.norm.bound} follows from the first inequality and the union bound, and the second inequality leads to \eqref{factor.concentration.ineq} immediately. Finally, applying Theorem~5.39 in \cite{V2012} gives \eqref{factor.cov.concentration}.
\end{proof}

\subsubsection{Proof of Theorem~\ref{thm:lln}}

First we introduce the following notations:
$$
 v_k = \EE(\varepsilon_k^4) , \quad \kappa_{\varepsilon, k} = v_k / \sigma_{\varepsilon, kk}^2 ,  \quad   u_{ik} = \bb_k^\T \bbf_i + \varepsilon_{ik} , \ \  k=1,\ldots, d, \, i=1,\ldots, n.
$$
Let $t \geq 1$ and set $\tau_k = a_k (n/t)^{1/2}$ with $a_k\geq \sigma_{kk}^{1/2}$ for $k=1,\ldots, d$. In view of Lemma~\ref{lem2}, define the event 
\#
	\cE_1(t) =  \bigcap_{k=1}^d   \bigg\{ \bigg|  \sqrt{n} \, (\hat{\mu}_k - \mu_k ) - \frac{1}{\sqrt{n}} \sn \psi_{\tau_k}(u_{ik}) \bigg|  \leq C \frac{a_k t }{\sqrt{n}} \bigg\} , \label{event1}
\#	
such that $\PP\{\cE_1(t)^{{\rm c}}\} \leq 3d e^{-t}$. Moreover, by Lemma~\ref{lem4}, let $\cE_2(t) $ be the event that the following hold:
\#
	 \max_{1\leq i\leq n } \| \bbf_i \|_2 \leq C_1 C_f (r+\log n + t)^{1/2} , \quad \| \sqrt{n} \, \ol \bbf \|_2\leq C_2 C_f (r+t)^{1/2} , \nn \\
\mbox{ and }~ \|  \hat{\bSigma}_f - \Ib_r \|_2 \leq \max\{    C_3 C_f^2 n^{-1/2} ( r +t)^{1/2} ,   C_3^2 C_f^4 n^{-1} (r+t)  \} . \label{cov.concentration}
\#
By the union bound, $\PP\{\cE_2(t)^{{\rm c}} \} \leq 4e^{-t}$.

Now we are ready to prove \eqref{FDP.lln}. The proof of \eqref{Rz.lln} follows the same argument, and thus is omitted.
For $k=1,\ldots, d$, define
\#
	B_k = \sqrt{n} \, \bb_k^\T \ol \bbf, \ \  V_k = \frac{1}{\sqrt{n}} \sn V_{ik} := \frac{1}{\sqrt{n}} \sn \{  \psi_{\tau_k}(u_{ik}) - \EE_{\bbf_i}   \psi_{\tau_k}(u_{ik}) \} , \label{Vk.def}
\#
and $R_k = n^{-1/2}\sn \{\EE_{\bbf_i}   \psi_{\tau_k}(u_{ik}) - \bb_k^\T \bbf_i \}$, where $\EE_{\bbf_i}(\cdot):= \EE(\cdot  | \bbf_i)$. On the event $\cE_1(t)$,
\#
	 | T_k - (V_k + B_k +R_k ) | \leq C a_k n^{-1/2} t ~\mbox{ for all } 1\leq k\leq d. \label{Tk.approxi}
\#
On $\cE_2(t)$, it holds $\max_{1\leq i\leq n} | \bb_k^\T \bbf_i | \leq C_1 C_f  \| \bb_k \|_2 (r+\log n + t)^{1/2} \leq C_1 C_f  \sigma_{kk}^{1/2} (r+\log n + t)^{1/2}$, which further implies 
\#
\max_{1\leq i\leq n} | \bb_k^\T \bbf_i | \leq \tau_k /2  ~\mbox{ for all } 1\leq k\leq d \label{factor.bound}
\#
as long as $n\geq 4(C_1 C_f)^2 (r+\log n + t) t$. Then, it follows from Lemma~\ref{lem3} that
\#
	| R_k  | \leq \sqrt{n}\max_{1\leq i\leq n} |\EE_{\bbf_i}   \psi_{\tau_k}(u_{ik}) - \bb_k^\T \bbf_i  |  \leq 8 \sqrt{n}\, \tau_k^{-3} v_k \leq  8 \sigma_{kk}^{-3/2} v_k   n^{-1} t^{3/2} \label{Rk.bound}
\#
holds  almost surely on $\cE_2(t)$ for all $1\leq k\leq d$. Together, \eqref{Tk.approxi} and \eqref{Rk.bound} imply that for any $z \geq 0$,
\#
	& \sum_{k \in \cH_0} I\Big( |V_k + B_k | \geq z + C a_k n^{-1/2} t  + 8\kappa_{\varepsilon, k} \sigma_{\varepsilon, kk}^{1/2}  n^{-1} t^{3/2}\Big) \nn \\
	& \leq V(z) \leq \sum_{k \in \cH_0} I\Big( |V_k + B_k | \geq z - C a_k n^{-1/2} t - 8 \kappa_{\varepsilon, k} \sigma_{\varepsilon, kk}^{1/2} n^{-1} t^{3/2}\Big) \label{Vz.bound}
\#
holds almost surely on $\cE_1(t) \cap \cE_2(t)$. In view of \eqref{Vz.bound}, we will instead deal with $\wt V_+(z)$ and $\wt V_-(z)$, where 
\#
	\wt V_+(z) :=  \sum_{k \in \cH_0} I(V_k \geq z - B_k)     ~\mbox{ and }~ \wt V_-(z) :=  \sum_{k \in \cH_0} I(V_k \leq  -z - B_k)  \nn 
\#
are such that $\wt V_+(z) + \wt V_-(z)= \sum_{k \in \cH_0} I(|V_k+B_k| \geq z)$.

In the following, we will focus on $\wt V_+(z)$ ($\wt V_-(z)$ can be dealt with in the same way). Observe that, conditional on $\cF_n:= \{ \bbf_i \}_{i=1}^n$, $I(V_1 \geq z- B_1),\ldots, I(V_d \geq z- B_d)$ are weakly correlated random variables. Define $Y_k=  I(V_k \geq z- B_k)$ and $P_k = \EE(Y_k | \cF_n)$ for $k=1,\ldots, d$. To prove the consistency of $\wt V_+(z)$, we calculate its variance:
\#	
 & \var\bigg( \frac{1}{d_0}  \sum_{k \in \cH_0} Y_k \bigg| \cF_n \bigg)  =   \EE \bigg[ \bigg\{ \frac{1}{d_0} \wt V_+(z) - \frac{1}{d_0} \sum_{k\in \cH_0} P_k \bigg\}^2 \bigg| \cF_n \bigg] \nn\\
 & =  \frac{1}{d_0^2}  \sum_{k \in \cH_0} \var(Y_k | \cF_n ) + \frac{1}{d_0^2} \sum_{ k, \ell \in \cH_0 : k\neq \ell } \cov(Y_k, Y_\ell | \cF_n )  \nn \\
 & \leq  \frac{1}{4 d_0 } + \frac{1 }{ d_0^2 } \sum_{ k, \ell \in \cH_0 : k\neq \ell }  \{ \EE(Y_k Y_\ell | \cF_n ) - P_k P_\ell  \} \label{var.bound}
\#
almost surely. In what follows, we study $P_k$ and $\EE(Y_k Y_\ell | \cF_n)$ separately, starting with the former. For each $k$, $V_{k}$ is a sum of  conditionally independent zero-mean random variables given $\cF_n$. Define 
$$
		\nu_k^2  = \var(V_k|\cF_n) = \frac{1}{n} \sn \nu_{ik}^2 ~\mbox{ with }~ \nu_{ik}^2 = \var(V_{ik}|\cF_n ).
$$
Then, it follows from the Berry-Esseen theorem that
\#
	& \sup_{z\in \RR}  |  \PP(V_k \leq \nu_k  x |\cF_n ) - \Phi(x) |  \nn \\
	&  \lesssim  \frac{1}{\nu_k^3 n^{3/2}} \sn \EE\{ | \psi_{\tau_k} (u_{ik}) |^3  | \cF_n \} \lesssim \frac{1}{\nu_k^3 n^{3/2}} \sn   (   | \bb_k^\T \bbf_i |^3 + \EE |\varepsilon_{ik} |^3 ) \label{BE.ineq}
\#
almost surely. By \eqref{factor.bound} and \eqref{approxi.var}, it holds
\#
  1  -   4  (  1 + \kappa_{\varepsilon, k} )   n^{-1 } t   \leq  \sigma_{\varepsilon, kk }^{-1}	\nu_k^2   \leq 1 \label{cond.var.bound}
\#
almost surely on $\cE_2(t)$ for all $1\leq k\leq d$. Combining this with \eqref{cov.concentration} and \eqref{BE.ineq} gives
\#
	\bigg|  P_k - \Phi\bigg(\frac{-z+B_k}{\nu_k }  \bigg)  \bigg| \lesssim   \kappa_{\varepsilon, k}^{3/4} \frac{1}{\sqrt{n}}  +   \sigma_{\varepsilon, kk}^{-3/2} \| \bb_k \|_2^3 \sqrt{\frac{r+\log n + t}{n}} \nn
\# 
almost surely on $\cE_2(t)$ for all $1\leq k\leq d$ as long as $n\geq 2C_3^2 C_f^4 (r+t) \vee 8(1+\kappa_{\varepsilon, \max} )t$, where $\kappa_{\varepsilon, \max} = \max_{1\leq \ell \leq d} \kappa_{\varepsilon, \ell}$. By the mean value theorem, there exists some $\eta_k \in [\sigma_{\varepsilon, kk}^{-1/2} , \nu_k^{-1}] $ such that
\#
	& \bigg| \Phi\bigg(\frac{-z+B_k}{\nu_k }  \bigg) - \Phi\bigg(\frac{-z+B_k}{ \sqrt{ \sigma_{\varepsilon , kk} } }  \bigg) \bigg|  = \phi( \eta_k | z-B_k| )   \frac{\eta_k | z-B_k | }{\eta_k}\bigg| \frac{1}{\nu_k} -  \frac{1}{\sqrt{\sigma_{\varepsilon,kk}}}\bigg|  \lesssim  \frac{ \kappa_{\varepsilon, k } t}{n} . \nn
\#
Together, the last two displays imply that almost surely on $\cE_2(t)$,
\#
	\bigg| P_k - \Phi\bigg(\frac{-z+B_k}{ \sqrt{ \sigma_{\varepsilon , kk} } }  \bigg)  \bigg|  \lesssim  \kappa_{\varepsilon, k} \bigg( \frac{1}{\sqrt{n}} + \frac{t}{n} \bigg) +  \sigma_{\varepsilon, kk}^{-3/2} \| \bb_k \|_2^3 \sqrt{\frac{r+\log n + t}{n}} \label{Pk.bound}
\#
uniformly for all $1\leq k\leq d$ and $z\geq 0$.

Next we consider $\EE(Y_k Y_\ell | \cF_n)= \PP(V_k\geq z-B_k, V_\ell \geq z-B_\ell | \cF_n)$. Define bivariate random vectors $\bV_i = (\nu_k^{-1} V_{ik} , \nu_\ell^{-1} V_{i \ell})^\T$ for $i=1,\ldots, n$, where $V_{ik}, V_{i \ell}$ are as in \eqref{Vk.def}. Observe that $\bV_1,\ldots, \bV_n$ are conditionally independent random vectors given $\cF_n$. Denote by $\bTheta = ( \theta_{uv})_{1\leq u,v \leq 2}$ the conditional covariance matrix of $n^{-1/2} \sn \bV_i = (\nu_k^{-1}V_k ,\nu_\ell^{-1} V_\ell)^\T$ given $\cF_n$, such that $\theta_{11} = \theta_{22} = 1$ and $\theta_{12} = \theta_{21} = (\nu_k \nu_\ell n)^{-1} \sn \cov_{\bbf_i}(V_{ik}, V_{i\ell})$. By \eqref{approxi.cov}, \eqref{factor.bound} and \eqref{cond.var.bound},
\#
	|  \theta_{12} -  r_{\varepsilon, k\ell} |  \lesssim ( \kappa_{\varepsilon, k} \vee \kappa_{\varepsilon, \ell} ) n^{-1} t \label{corr.bound}
\# 
holds almost surely on $\cE_2(t)$ for all $1\leq k \neq \ell \leq d$ and sufficient large $n$, say $n\gtrsim \kappa_{\varepsilon, \max} t$. Let $\bG= (G_1, G_2)^\T$ be a Gaussian random vector with $\EE(\bG) = \textbf{0}$ and $\cov(\bG) = \bTheta$. Applying Theorem~1.1 in \cite{B2005} and \eqref{cov.concentration}, we have
\#
	& 	\sup_{x, y \in \RR }   |  \PP( V_k \geq \nu_k x,  V_\ell \geq \nu_\ell y | \cF_n )  - \PP(G_1 \geq x , G_2 \geq y ) | \nn \\
& \lesssim  \frac{1}{n^{3/2}} \sn \EE \| \bTheta^{-1/2} \bV_i \|_2^3 \nn \\
&  \lesssim \frac{1}{ (\sigma_{\varepsilon, kk} n)^{3/2}}   \sn(   \EE |\varepsilon_{ik}|^3 + |\bb_k^\T \bbf_i|^3)   +  \frac{1}{ (\sigma_{\varepsilon, \ell \ell} n)^{3/2}}   \sn(   \EE |\varepsilon_{i\ell}|^3 + |\bb_\ell^\T \bbf_i|^3) \nn \\
& \lesssim   \frac{\kappa_{\varepsilon, k} + \kappa_{\varepsilon, \ell}}{\sqrt{n}} +  ( \sigma_{\varepsilon, kk}^{-3/2} \| \bb_k \|_2^3 + \sigma_{\varepsilon, \ell \ell}^{-3/2} \| \bb_\ell \|_2^3 ) \sqrt{\frac{r+\log n + t}{n}} \nn
\#
almost surely on $\cE_2(t)$ for all $1\leq k\neq \ell \leq d$. In particular, taking $x=\nu_k^{-1}(z-B_k)$ and $y=\nu_\ell^{-1}( z-B_\ell)$ gives
\#
 &   |  \EE(Y_k Y_\ell | \cF_n ) - \PP\{ G_1 \geq\nu_k^{-1}(z-B_k) , G_2 \geq \nu_\ell^{-1}(z-B_\ell) \} | \nn \\
& \lesssim \frac{\kappa_{\varepsilon, k} + \kappa_{\varepsilon, \ell}}{\sqrt{n}} +  ( \sigma_{\varepsilon, kk}^{-3/2} \| \bb_k \|_2^3 + \sigma_{\varepsilon, \ell \ell}^{-3/2} \| \bb_\ell \|_2^3 ) \sqrt{\frac{r+\log n + t}{n}}  \label{joint.prob.bound}
\#
almost surely on $\cE_2(t)$. In addition, it follows from Corollary~2.1 in \cite{LS2002} that
\#
	|   \PP(G_1 \geq x , G_2 \geq y ) - \{ 1-\Phi(x)\} \{1-\Phi(y)\} | \leq \frac{|\theta_{12}|}{4} e^{- (x^2+ y^2)/(2+2|\theta_{12}|)} \leq \frac{|\theta_{12}|}{4}   \label{gaussian.compare}
\#
for all $x, y \in \RR$.

Substituting the bounds \eqref{Pk.bound}, \eqref{corr.bound}, \eqref{joint.prob.bound} and \eqref{gaussian.compare} into \eqref{var.bound}, we obtain
\#
& \EE \bigg[ \bigg\{ \frac{1}{d_0} \wt V_+(z) - \frac{1}{d_0} \sum_{k\in \cH_0} P_k \bigg\}^2 \bigg| \cF_n \bigg]  \nn \\
& \lesssim \frac{1}{d_0^2} \sum_{k,\ell \in \cH_0: k \neq \ell} \varrho_{\varepsilon, k\ell} +   \frac{1}{d_0}  +    \frac{\kappa_{\varepsilon, \max}}{\sqrt{n}} +    \frac{\kappa_{\varepsilon, \max} t} {n} +   \max_{1\leq k\leq d} \sigma_{\varepsilon, kk}^{-3/2} \| \bb_k \|_2^3 \sqrt{\frac{r+\log n + t}{n}} \nn
\#
and
\#
 & \bigg| \frac{1}{d_0} \sum_{k\in \cH_0} P_k - \frac{1}{d_0} \sum_{k\in\cH_0} \Phi\bigg( \frac{-z+B_k}{\sqrt{\sigma_{\varepsilon, kk}}} \bigg) \bigg| \nn \\
 & \lesssim   \frac{\kappa_{\varepsilon, \max}}{\sqrt{n}} +    \frac{\kappa_{\varepsilon, \max} t} {n} +   \max_{1\leq k\leq d} \sigma_{\varepsilon, kk}^{-3/2} \| \bb_k \|_2^3 \sqrt{\frac{r+\log n + t}{n}}  \nn
\#
almost surely on $\cE_2(t)$. Similar bounds can be derived for 
$$
	\var\bigg( \frac{1}{d_0} \wt V_-(z) \bigg| \cF_n \bigg) ~\mbox{ and }~ \frac{1}{d_0} \EE\{ \wt V_-(z)| \cF_n \}.
$$
Taking $t=\log(nd)$ so that $\PP\{\cE_1(t)^{{\rm c}}\} \leq 3n^{-1}$ and $\PP\{ \cE_2(t)^{{\rm c}}\} \leq 4(nd)^{-1}$. Under Condition~\ref{cond:fdp}, it follows that
\#
	\frac{1}{d_0} \wt V_+(z) &= \frac{1}{d_0} \sum_{k\in\cH_0} \Phi\bigg( \frac{-z+B_k}{\sqrt{\sigma_{\varepsilon, kk}}} \bigg) + O_{\PP} [ d^{- (1\wedge \delta_0)/2} + n^{-1/4}\{\log(nd) \}^{1/4} ] ,\nn \\
	\frac{1}{d_0} \wt V_-(z) &= \frac{1}{d_0} \sum_{k\in\cH_0} \Phi\bigg( \frac{-z-B_k}{\sqrt{\sigma_{\varepsilon, kk}}} \bigg) + O_{\PP} [ d^{- (1\wedge \delta_0)/2} + n^{-1/4}\{\log(nd) \}^{1/4} ]  \nn
\#
uniformly over all $z\geq 0$. This, together with \eqref{Vz.bound} and the fact that $|\Phi(z_1)-\Phi(z_2)|\leq (2\pi)^{-1/2} |z_1-z_2|$, proves \eqref{FDP.lln}. \qed

\subsection{Proof of Theorem~\ref{thm:fdp}}

Let $\bb^{(1)}, \ldots,  \bb^{(r)} \in \RR^d$ be the columns of $\Bb$. Without loss of generality, assume that $\| \bb^{(1)} \|_2 \geq \cdots \geq \| \bb^{(r)} \|_2$. Under (i) of Condition~\ref{cond:fm}, $\Bb \Bb^\T$ has non-vanishing eigenvalues $\{ \overline{\lambda}_\ell := \| \bb^{(\ell)} \|_2^2  \}_{\ell=1}^r$ with eigenvectors $\{ \overline{\bv}_\ell :=\bb^{(\ell)} / \| \bb^{(\ell)} \|_2  \}_{\ell=1}^r$, and $\Bb = (\bb_1,\ldots,\bb_d)^\T=  ( \ol \lambda_1^{1/2} \ol\bv_1 , \ldots, \ol \lambda_r^{1/2} \ol\bv_r)$. Moreover, write $\hat{\Bb} = (\hat{\bb}_1, \ldots, \hat{\bb}_d )^\T = ( \hat{\lambda}^{1/2}_1 \hat{\bv}_1, \ldots,  \hat{\lambda}^{1/2}_r \hat{\bv}_r )$ with $\hat{\bv}_\ell = (\hat{v}_{\ell 1},\ldots, \hat{v}_{\ell d})^\T $ for $\ell =1 ,\ldots, r$ and $\hat{\bb}_k = ( \hat{\lambda}_{1}^{1/2} \hat{v}_{1k} , \ldots, \hat{\lambda}_{r}^{1/2} \hat{v}_{rk})^\T$ for $k=1,\ldots, d$. 

A key step in proving \eqref{FDP.consistency} is the derivation of an upper bound on the estimation error $\Delta_d := \max_{1\leq k\leq d} | \| \hat{\bb}_k \|_2 - \|  \bb_k \|_2 |$. By Weyl's inequality and the decomposition that $\hat{\bSigma}_1^{\HH}= \Bb \Bb^\T + (\hat{\bSigma}_1^{\HH} - \bSigma) + \bSigma_\varepsilon $,
\#
	\max_{1\leq \ell \leq r} | \hat{\lambda}_\ell - \ol \lambda_\ell | \leq  \| \hat{\bSigma}_1^{\HH} - \bSigma \|_2 + \| \bSigma_\varepsilon \|_2 . \label{eigenvalue.perturbation}
\#
Applying Corollary~1 in \cite{YWS2015} yields that, for every $1\leq \ell \leq r$,
$$
	\|  \hat{\bv}_\ell - \ol \bv_\ell \|_2 \leq \frac{2^{3/2} ( \| \hat{\bSigma}_1^{\HH} -\bSigma \|_2 + \|\bSigma_\varepsilon \|_2 ) }{\min( \ol \lambda_{\ell-1} - \ol \lambda_\ell , \ol \lambda_\ell - \ol \lambda_{\ell+1} )} ,
$$
where we set $\ol \lambda_0 = \infty$ and $\ol \lambda_{r+1} = 0$. Under (ii) of Condition~\ref{cond:fm}, it follows that
\#
	\max_{1\leq \ell \leq r} \|  \hat{\bv}_\ell  - \ol \bv_\ell  \|_2 \lesssim d^{-1} ( \| \hat{\bSigma}_1^{\HH} -\bSigma \|_2 + \|\bSigma_\varepsilon \|_2 ) . \label{l2.perturbation}
\#
Moreover, apply Theorem~3 and Proposition~3 in \cite{FWZ2016} to reach
\#
\max_{1\leq \ell \leq r} \|  \hat{\bv}_\ell  - \ol \bv_\ell \|_{\infty} \lesssim r^4  ( d^{-1/2} \| \hat{\bSigma}_1^{\HH} - \bSigma \|_{\max} +  d^{-1} \|\bSigma_\varepsilon \|_2 ) . \label{linfty.perturbation}
\#
Note that, under (ii) of Condition~\ref{cond:fm},
\#
	\|  \ol \bv_\ell \|_\infty = \|  \bb^{(\ell)} \|_\infty / \|  \bb^{(\ell)}\|_2 \leq \| \Bb \|_{\max}/\|  \bb^{(\ell)} \|_2 \lesssim d^{-1/2}  ~\mbox{ for all } \ell =1,\ldots, r. \label{barv.bound}
\#
Define $\wt  \bb_k = ( \ol \lambda_1^{1/2} \hat{v}_{1k} , \ldots ,  \ol \lambda_r^{1/2} \hat{v}_{rk} )^\T$. By the triangular inequality,
\#
	& \| \hat{\bb}_k - \bb_k \|_2 \leq  	\| \hat{\bb}_k - \wt  \bb_k \|_2 + 	\| \wt{\bb}_k - \bb_k \|_2 \nn \\
& = \bigg\{ \sum_{\ell =1}^r  ( \hat{\lambda}_\ell^{1/2} - \ol \lambda_\ell^{1/2} )^2 \,\hat{v}_{\ell k}^2   \bigg\}^{1/2} + \bigg\{  \sum_{\ell =1}^r  \ol \lambda_\ell (\hat{v}_{\ell k} - \ol v_{\ell k })^2   \bigg\}^{1/2} \nn \\
& \leq  r^{1/2} \bigg( \max_{1\leq \ell \leq r} |\hat{\lambda}_\ell^{1/2} - \ol \lambda_\ell^{1/2} | \| \hat{\bv}_\ell \|_\infty +  \max_{1\leq \ell \leq r}   \ol \lambda_\ell^{1/2} \| \hat{\bv}_\ell - \ol \bv_\ell \|_\infty \bigg). \nn
\#
This, together with \eqref{eigenvalue.perturbation}--\eqref{barv.bound} and Theorem~\ref{thm:huber-type}, implies
\#
	\Delta_d  \leq  \max_{1\leq k\leq d}   \| \hat{\bb}_k - \bb_k \|_2 = O_{\PP}(w_{n,d}).  \label{hatb.consistency}
\#

With the above preparations, now we are ready to prove \eqref{FDP.consistency}. To that end, define  $\wt \bu = \sqrt{n} \,(\Bb^\T \Bb)^{-1} \Bb^\T \ol{\bX}$, so that for every $1\leq k\leq d$,
$$
	\bb_k^\T \wt \bu =  \sqrt{n} \, \bb_k^\T \ol{\bbf}  + \sqrt{n} \, \bb_k^\T (\Bb^\T \Bb)^{-1} \Bb^\T \bmu + \sqrt{n} \, \bb_k^\T (\Bb^\T \Bb)^{-1} \Bb^\T \ol{\bvarepsilon} .
$$
Consider the decomposition 
\#
		& \bigg| \Phi\bigg(  \frac{ - z +  \hat \bb_k^\T \hat{\bu} }{\sqrt{ \hat  \sigma_{\varepsilon, kk} }}  \bigg) - \Phi\bigg(  \frac{ - z +  \sqrt{n} \, \bb_k^\T \ol{\bbf} }{\sqrt{  \sigma_{\varepsilon, kk} }}  \bigg) \bigg| \nn \\
	& \leq \bigg| \Phi\bigg(  \frac{ - z +  \hat \bb_k^\T \hat{\bu} }{\sqrt{ \hat  \sigma_{\varepsilon, kk} }}  \bigg) - \Phi\bigg(  \frac{ - z +   \bb_k^\T \wt{\bu} }{\sqrt{  \sigma_{\varepsilon, kk} }}  \bigg) \bigg|  +\bigg| \Phi\bigg(  \frac{ - z +  \bb_k^\T \wt{\bu} }{\sqrt{   \sigma_{\varepsilon, kk} }}  \bigg) - \Phi\bigg(  \frac{ - z +  \sqrt{n} \, \bb_k^\T \ol{\bbf} }{\sqrt{  \sigma_{\varepsilon, kk} }}  \bigg) \bigg|  \nn \\
& \leq   \bigg| \Phi\bigg(  \frac{ - z +    \bb_k^\T \wt{\bu} }{\sqrt{ \hat  \sigma_{\varepsilon, kk} }}  \bigg) - \Phi\bigg(  \frac{ - z +   \bb_k^\T \wt{\bu} }{\sqrt{   \sigma_{\varepsilon, kk} }}  \bigg) \bigg|    \nn  \\ 
& \quad + \bigg| \Phi\bigg(  \frac{ - z +  \hat \bb_k^\T \hat{\bu} }{\sqrt{ \hat  \sigma_{\varepsilon, kk} }}  \bigg) - \Phi\bigg(  \frac{ - z +   \bb_k^\T \wt{\bu} }{\sqrt{  \hat  \sigma_{\varepsilon, kk} }}  \bigg) \bigg|   +\bigg| \Phi\bigg(  \frac{ - z +  \bb_k^\T \wt{\bu} }{\sqrt{   \sigma_{\varepsilon, kk} }}  \bigg) - \Phi\bigg(  \frac{ - z +  \sqrt{n} \, \bb_k^\T \ol{\bbf} }{\sqrt{  \sigma_{\varepsilon, kk} }}  \bigg) \bigg|  \nn \\
& := \Delta_{k1} + \Delta_{k2} + \Delta_{k3}.  \label{decomposition}
\#
In the following, we deal with $\Delta_{k1}, \Delta_{k2}$ and $\Delta_{k3}$ separately.

By the mean value theorem, there exists some $\xi_k$ between $\hat{\sigma}_{\varepsilon, kk}^{-1/2}$ and $\sigma_{\varepsilon,kk}^{-1/2}$ such that
\#
& \Delta_{k1}  = \phi(  \xi_k | z-  \bb_k^\T \wt{\bu} | ) | z-  \bb_k^\T \wt{\bu} |    \bigg| \frac{1}{\sqrt{\hat{\sigma}_{\varepsilon, kk}}} - \frac{1}{\sqrt{ {\sigma}_{\varepsilon, kk}}}  \bigg| , \nn
\#
where $\phi(\cdot) = \Phi'(\cdot)$. With $\tau_{kk} \asymp \sqrt{n/\log(nd)}$ for $k=1,\ldots, d$, it follows that the event 
$$
 \mathcal{E}_0 = \bigg\{	\max_{1\leq k\leq d} | \hat{\sigma}_{kk} - \sigma_{kk} | \lesssim \sqrt{ \log(d)/n} \bigg\}
$$
satisfies $\PP(\mathcal{E}_0^{{\rm c}}) \lesssim  n^{-1}$. On $\mathcal{E}_0$, it holds $\hat{\sigma}_{\varepsilon, kk}^{-1} \geq (2\sigma_{kk} )^{-1}$, ${\sigma}_{\varepsilon, kk}^{-1} \geq \sigma_{kk}^{-1}$ and therefore $\xi_k \geq  (2\sigma_{kk} )^{-1/2}$ uniformly for all $1\leq k\leq d$ as long as $n \gtrsim \log d$. This further implies $\max_{1\leq k\leq d}\max_{z\geq 0} \phi(  \xi_k |z-   \bb_k^\T \wt{\bu} | ) | z- \bb_k^\T \wt{\bu} | = O_{\PP}(1)$. By \eqref{hatb.consistency},
\#
	  \bigg| \frac{1}{\sqrt{\hat{\sigma}_{\varepsilon, kk}}} - \frac{1}{\sqrt{ {\sigma}_{\varepsilon, kk}}}  \bigg|  = O_{\PP} (  | \hat{\sigma}_{kk} - \sigma_{kk} | + \| \hat{\bb}_k - \bb_k \|_2 ) = O_{\PP}(w_{n,d}) \label{var.consistency}
\#
uniformly over $k=1,\ldots, d$, where $w_{n,d} = \sqrt{ \log(d)/n} + d^{-1/2}$. Putting the above calculations together, we arrive at
\#
	\frac{1}{d} \sum_{k=1}^d \Delta_{k1} = O_{\PP}(w_{n,d}). \label{Deltak1.bound}
\#

Turning to $\Delta_{k2}$, again by the mean value theorem, there exists some $\eta_k$ between $\hat{\bb}_k^\T \hat{\bu}$ and $\bb_k^\T \wt \bu$ such that 
$$
	\Delta_{k2} = \phi\bigg( \frac{-z+ \eta_k }{\sqrt{\hat{\sigma}_{\varepsilon,kk}}} \bigg) \frac{\hat{\bb}_k^\T \hat{\bu } -  {\bb}_k^\T \wt {\bu} } {\sqrt{\hat{\sigma}_{\varepsilon, kk}}}.
$$
In view of \eqref{var.consistency}, 
$$
\max_{1\leq k\leq d} \phi\bigg( \frac{-z+ \eta_k }{\sqrt{\hat{\sigma}_{\varepsilon,kk}}} \bigg) \frac{1}{\sqrt{\hat{\sigma}_{\varepsilon, kk}}} = O_{\PP}(1) .
$$
Observe that $\hat \Bb \hat \bu  =  ( \sum_{\ell =1}^r \hat \bv_\ell \hat \bv_\ell^\T ) \bZ$ and $\Bb \wt \bu  =  ( \sum_{\ell=1}^r \ol \bv_\ell \ol\bv_\ell^\T ) \bZ$, where $\bZ= \sqrt{n}   \ol{\bX}$. By the Cauchy-Schwarz inequality,
\#
	& \sum_{k=1}^d |  \hat{\bb}_k^\T \hat{\bu } -  {\bb}_k^\T \wt {\bu} | \leq d^{1/2}   \| \hat \Bb \hat \bu - \Bb \wt \bu \|_2  \nn \\
& \leq d^{1/2} \bigg\| \sum_{\ell=1}^r  ( \hat \bv_\ell \hat \bv_\ell^\T -   \ol \bv_\ell \ol \bv_\ell^\T ) \bigg\|_2 \| \bZ \|_2 \leq 2  r d^{1/2}   \max_{1\leq \ell \leq r}\| \hat{\bv}_\ell - \ol \bv_\ell \|_2 \,\| \bZ \|_2 . \label{l1.norm.bound}
\#
For $\| \bZ \|_2$, we calculate $\EE \| \bZ \|_2^2  = n \| \bmu \|_2^2 + \sum_{k=1}^d \sigma_{kk}$, indicating that $\| \bZ \|_2 = O_{\PP}( \sqrt{n} \, \| \bmu \|_2 + d^{1/2})$. Combining this with \eqref{l2.perturbation}, \eqref{l1.norm.bound} and Theorem~\ref{thm:huber-type}, we conclude that
\#
	\frac{1}{d} \sum_{k=1}^d \Delta_{k2} =  O_{\PP} \{  ( d^{-1/2} \sqrt{n} \, \| \bmu \|_2 + 1 )  w_{n,d}  \}. \label{Deltak2.bound}
\#

For $\Delta_{k3}$, following the same arguments as above, it suffices to consider 
\#
 \frac{\sqrt{n}}{d}\sum_{k=1}^d | \bb_k^\T (\Bb^\T \Bb)^{-1} \Bb^\T  \bmu +  \bb_k^\T (\Bb^\T \Bb)^{-1} \Bb^\T \ol{\bvarepsilon}  | , \nn
\#
which, by the Cauchy-Schwarz inequality, is bounded by 
$$
	\sqrt{\frac{n}{d}} \, \bigg\| \sum_{\ell=1}^r \ol{\bv}_\ell \ol{\bv}_\ell^\T  \bigg\|_2    \| \bmu \|_2 +   \max_{1\leq k\leq d} \| \bb_k \|_2 \| \bu  \|_2 \leq \sqrt{\frac{n}{d}} \, \| \bmu \|_2 + \max_{1\leq k\leq d} \sqrt{ \sigma_{kk} } \,  \| \bu  \|_2 , $$	
where $\bu = \sqrt{n}\, (\Bb^\T \Bb)^{-1} \Bb^\T   \ol \bvarepsilon \in \RR^r$ is a zero-mean random vector with covariance matrix $\bSigma_u=(\Bb^\T \Bb)^{-1} \Bb^\T \bSigma_{\varepsilon} \Bb (\Bb^\T \Bb)^{-1}$. Recall that $\Bb^\T \Bb \in \RR^{r \times r}$ has non-increasing eigenvalues $\ol \lambda_1 \geq \cdots \geq \ol \lambda_r$. Under (ii) of Condition~\ref{cond:fm}, it holds $\EE \| \bu \|_2^2 = {\rm Tr}(\bSigma_u) \leq \|\bSigma_\varepsilon \| \sum_{\ell=1}^r \ol \lambda_\ell^{-1} \lesssim r d^{-1}$ and thus $\| \bu \|_2 = O_{\PP}(d^{-1/2})$. Putting the pieces together, we get
\#
	\frac{1}{d}\sum_{k=1}^d \Delta_{k3} = O_{\PP} ( d^{-1/2} \sqrt{n} \,\| \bmu \|_2 + d^{-1/2} ) .  \label{Deltak3.bound}
\#

Combining \eqref{decomposition}, \eqref{Deltak1.bound}, \eqref{Deltak2.bound} and \eqref{Deltak3.bound}, we reach
\#
 \frac{1}{d} \sum_{k=1}^d  \Phi\bigg(  \frac{ - z +  \hat \bb_k^\T \hat{\bu} }{\sqrt{ \hat  \sigma_{\varepsilon, kk} }}  \bigg)  =  \frac{1}{d} \sum_{k=1}^d \Phi\bigg(  \frac{ - z +  \sqrt{n} \, \bb_k^\T \ol{\bbf} }{\sqrt{  \sigma_{\varepsilon, kk} }}  \bigg)   + O_{\PP}  ( w_{n,d} +  d^{-1/2} \sqrt{n} \,\| \bmu \|_2 ). \nn
\#
Using the same argument, it can similarly derived that
\#
 \frac{1}{d} \sum_{k=1}^d  \Phi\bigg(  \frac{ - z -  \hat \bb_k^\T \hat{\bu} }{\sqrt{ \hat  \sigma_{\varepsilon, kk} }}  \bigg)  =  \frac{1}{d} \sum_{k=1}^d \Phi\bigg(  \frac{ - z-  \sqrt{n} \, \bb_k^\T \ol{\bbf} }{\sqrt{  \sigma_{\varepsilon, kk} }}  \bigg)   + O_{\PP}  ( w_{n,d} +  d^{-1/2} \sqrt{n} \,\| \bmu \|_2 ). \nn
\#
Together, the last two displays lead to the stated result \eqref{FDP.consistency}. \qed

\end{document}